\newtheorem{theorem}{Theorem}[section]
\newtheorem{lemma}[theorem]{Lemma}
\newtheorem{prop}[theorem]{Proposition}
\newtheorem{corollary}[theorem]{Corollary}
\newtheorem{assumption}{Assumption}
\theoremstyle{definition}
\newtheorem{definition}[theorem]{Definition}
\newtheorem{remark}[theorem]{Remark}
\newtheorem*{example}{Example}
\DeclareMathOperator*{\E}{{\mathbb{E}}}
\newcommand{\N}[0]{{\mathbb{N} }}
\newcommand{\R}[0]{{\mathbb{R} }}
\newcommand{\Ratio}[0]{joint to marginal product ratio }
\DeclareMathOperator{\ratio}{JP}
\newcommand{\X}[0]{{\mathcal{X} }}
\newcommand{\Y}[0]{{\mathcal{Y} }}
\newcommand{\Z}[0]{{\mathcal{Z} }}
\DeclareMathOperator{\dom}{dom}
\newcommand{\defn}[1]{{\textbf{\textit{#1}}}}
\title{Learning and Strongly Truthful Multi-Task Peer Prediction: A Variational Approach\footnote{Grant Schoenebeck and Fang-Yi Yu are
pleased to acknowledge the support of the National Science Foundation NSF 1618187 and 2007256}}
\author{
  Grant Schoenebeck\thanks{
  University of Michigan,
  \texttt{schoeneb@umich.edu}} \and
  Fang-Yi Yu\thanks{Harvard University \texttt{fayu@umich.edu}}
}
\begin{document}
\maketitle
\begin{abstract}
Peer prediction mechanisms incentivize agents to truthfully report their signals even in the absence of verification by comparing agents' reports with those of their peers. In the detail-free multi-task setting, agents are asked to respond to multiple independent and identically distributed tasks, and the mechanism does not know the prior distribution of agents' signals. The goal is to provide an $\epsilon$-strongly truthful mechanism where truth-telling rewards agents ``strictly'' more than any other strategy profile (with $\epsilon$ additive error) even for heterogeneous agents, and to do so while requiring as few tasks as possible.

We design a family of mechanisms with a scoring function that maps a pair of reports to a score.  The mechanism is strongly truthful if the scoring function is ``prior ideal."  Moreover, the mechanism is $\epsilon$-strongly truthful as long as the scoring function used is sufficiently close to the ideal scoring function.  This reduces the above mechanism design problem to a learning problem--- specifically learning an ideal scoring function.  Because learning the prior distribution is sufficient (but not necessary) to learn the scoring function, we can apply standard learning theory techniques that leverage side information about the prior (e.g., that it is close to some parametric model). Furthermore, we derive a variational representation of an ideal scoring function and reduce the learning problem into an empirical risk minimization.  

We leverage this reduction to obtain very general results for peer prediction in the multi-task setting.  Specifically,
\begin{description}
\item[Sample Complexity]  We show how to derive good bounds on the number of tasks required for different types of priors--in some cases exponentially improving previous results.  In particular, we can upper bound the required number of tasks for parametric models with bounded learning complexity.  Furthermore, our reduction applies to myriad continuous signal space settings.  To the best of our knowledge, this is the first peer-prediction mechanism on continuous signals designed for the multi-tasks setting.      
\item[Connection to Machine Learning]  We show how to turn a soft-predictor of an agent's signals (given the other agents' signals) into a mechanism.  This allows the practical use of machine learning algorithms that give good results even when many agents provide noisy information.  
\item[Stronger Properties]  Our mechanisms apply to any stochastically relevant prior rather than the more restrictive settings of previous mechanisms.  In the finite setting, we obtain $\epsilon$-strongly truthful mechanisms, whereas prior work only achieves a weaker notion of truthfulness (informed truthfulness)~\cite{Shnayder2016-xx, Agarwal2017-ty}.
\end{description}
\end{abstract}

\newpage
\setcounter{page}{1}

\section{Introduction}
Peer prediction is the problem of information elicitation without verification.  Peer prediction mechanisms exploit the interdependence in agents' signals to incentive agents to report their private signal truthfully even when the reports cannot be directly verified.  In \emph{the multi-task setting}~\cite{dasgupta2013crowdsourced}, each agent is asked to respond to multiple, independent tasks.  For example:

\begin{example}[Commute time] We can collect data from drivers to estimate the commute time of a certain route.  Each driver's daily commute time might be modeled in the following way: each day, the route has an expected time generated from a Gaussian distribution, and each driver's commute time is the expected time perturbed by independently distributed Gaussian noise.
\end{example}

Peer prediction from strategic agents has been attracting a surge of interest in economics and computer science.  Several previous works~\cite{Agarwal2017-ty, kong2019information,liuchen} can be understood as using particular learning algorithms to learn nice payment functions that capture the interdependence in agents' reports.  In this paper, we decouple these two components: mechanism design and learning algorithms.  This framework provides a clean black-box reduction from learning algorithms to peer prediction mechanism.

One advantage of our framework is that we can use results from machine learning about complexity of learning parameters of  priors to obtain bounds on the sample complexity (number of tasks required) of our mechanism.  For instance,  using our reduction, we can easily exponentially improve the required number of tasks in the previous work~\cite{Shnayder2016-xx}.  

Two features of our mechanisms enable us to work in more complicated settings.   First, our mechanisms use mutual information to pay agents.  This allows us to use aggregation algorithms and pay an agent the mutual information between her reports and the aggregated outcome of the other agents.   For example, suppose the agents' report's average quality is low, and a large fraction of agents report random noise.  In that case, we can use aggregation to enhance the signal to noise ratio and provide a robust incentive to strategic workers.  
The second feature of our mechanisms is a variational formulation, which ensures one-sided error such that we can only underestimate the mutual information but not overestimate it.  This allows us to measure different scoring functions' accuracy agnostically. Thus, we can use deep learners or other rich enough functions to learn a good payment in practice. 

In addition to the above contributions, we also improve previous work in two axes: the \emph{truthfulness guarantee} and the \emph{prior assumption}.  

The truthful guarantee explains \emph{how good the truth-telling strategy is in the mechanism} (formally defined in Sect.\ref{sec:goals}). Is the truth-telling always the best response regardless of other's strategy (dominantly truthful)? Or truth-telling is a Bayesian Nash equilibrium, and agents get strictly higher payment than any other non-permutation equilibrium (strongly truthful) where a permutation equilibrium is one where agents report a permutation of the signals.  A slightly weaker property is \emph{informed truthful} where no strategy profile pays strictly more than truth-telling, and truth-telling pays more than any uninformative equilibrium.  Our pairing mechanisms is dominantly truthful if the number of tasks is infinite, and approximately strongly truthful when the number of tasks is finite.

Previous peer prediction mechanisms make ad hoc assumptions on agents' private signals (positively correlated~\cite{dasgupta2013crowdsourced}, fine-grained~\cite{kong2019information}, strictly correlated~\cite{kong2020dominantly}, or latent variable models~\cite{liuchen}) which are discussed in Sect.~\ref{sec:prior}.  Moreover, all the above mechanisms can only work when agents' signals are in a finite space.
Under these assumptions, a question that bears asking is how generic their truthful guarantee is.  At one extreme, if agents' private signal is always from a single known distribution, it is trivial to design a strongly truthful mechanism. Therefore, another axis to measure a peer prediction mechanism's performance is its prior assumption, which tells \emph{how general agents' prior can be}.  

There are two motivations to understand how general agents' prior can be.  First, in practice, we need a peer prediction mechanism that works for continuous signals e.g., the above Commute time example, but the previous mechanisms cannot.\footnote{Discretization approach is not practical in most situations~\cite{kong2020information}.}  Second, a mechanism's prior assumption often reveals why the mechanism works.  Thus, improving prior assumptions can push our theoretical understanding of peer prediction mechanisms.  It is well-known that to have the truth-telling strategy profile as a strict Bayesian Nash equilibrium, one necessary condition is that agents' signals need to be \emph{stochastic relevant}  (Definition~\ref{def:nondegenerate})~\cite{zhang2014elicitability}.  However, when is stochastic relevance a sufficient condition?  
In this paper, we show stochastic relevance is also a sufficient condition in the multi-tasks setting.  Our pairing mechanisms achieve approximately-strongly truthful as long as the prior is stochastic relevant.  In particular, the space of agents' signals can be countably infinite or even continuous.  To the authors' knowledge, our mechanism is the first that works on the maximal possible prior structures in the multi-task setting.

Besides the above properties, we also require our mechanisms 1) are \emph{minimal} which only elicit the agents' signals and no additional information; 2) are \emph{detail-free} which do not require foreknowledge of the prior; and 3) have \emph{low sample number}, where each agent only needs to answer a few questions for the mechanism to achieve approximately strong truthfulness. (Definition~\ref{def:approx_strongly}).

\paragraph{Our Techniques:}
Prior work~\cite{kong2019information} has shown that paying agents according to the  $\Phi$ mutual information (a generalization of the Shannon mutual information) between their signals is a good idea.  This is because, if agents try to strategically manipulate their signals, the $\Phi$ mutual information can only decrease.  However, a key open question is how to compute the mutual information while having access to only a few signals for each agent.  Moreover, the computation needs to be done in a way that maintains the incentive guarantees of the mechanism. 

We solve this issue.  First, we convert the mechanism design problem into an optimization problem (Theorem~\ref{thm:framework}).  The $\Phi$ mutual information of a pair of random variables can be defined as the $\Phi$ divergence between two distributions: the joint distribution and the product of marginal distributions.  The $\Phi$ divergence is just a measure of distance between the two distributions and contains the KL-divergence as a special case.  The problem of computing the  $\Phi$ divergence, using variational representation as a bridge, can be changed into the optimization problem of finding the best ``distinguisher'' between these two distributions.  We call such a distinguisher a \emph{scoring function}.  The optimal scoring function  (distinguisher) can differentiate the two distributions with a score equal to the $\Phi$ divergence, whereas any other  scoring function (distinguisher) yields a lower score.   Thus, once one has this optimal scoring function, estimating the  $\Phi$ divergence (and hence  $\Phi$ mutual information) is easy--just compute its score.  In this paper we call the optimal scoring function for a particular prior $P$, the \emph{$(P, \Phi)$-ideal  scoring function} which can be easily computed when the prior $P$ is known.

Our mechanism will reward agents according to some scoring function.  Importantly, agents' ex-ante payments under prior $P$ are maximized when \emph{both} the distinguisher used is the $(P, \Phi)$-ideal scoring function, and the agents are truth-telling.  Consequently, if we already have the $(P, \Phi)$-ideal scoring function, the mechanism incentivizes truthful reporting. Furthermore, agents are willing to help the mechanism to learn the  $(P, \Phi)$-ideal scoring function rather than to trick it into using a suboptimal scoring function.  

Compared with \citet{kong2019information}, our variational characterization provides a better truthfulness guarantee when the number of tasks is finite.  We can uniformly upper bound the ex-ante payments under any non-truthful strategy profile (Definition~\ref{def:approx_strongly}) even when the learning algorithm cannot estimate the ideal scoring functions under those non-truthful strategies.   This property is vital for continuous signal spaces where agents may adversarially adopt the worst possible strategy profiles to compromise the learning algorithm.

The above observations transform the problem from designing a mechanism to simply learning the $(P, \Phi)$-ideal scoring function given samples from a prior.  We provide two algorithms to learn the scoring function.  The first one is a \emph{generative} approach which estimates the whole density function of the prior and computes a scoring function from it. In a \emph{discriminative} approach, we formulate the estimation of the ideal scoring function as a convex optimization problem, empirical risk minimization~\cite{nguyen2010estimating}, and estimate the scoring function directly.  This latter approach allows us to use state-of-art convex optimization solvers to estimate good scoring functions.

\paragraph{Our Contributions:}
In this paper, we leverage the above insights to design a $\Phi$-pairing mechanism that is minimal and detail-free for heterogeneous agents.  In particular:
\begin{description}
\item[Sample Complexity]  We show how to derive good bounds on the number of tasks required for different types of priors--in some cases exponentially improving previous results.  In particular, we can upper bound the required number of tasks for parametric models with bounded learning complexity (as measured by a continuous analog of the VC dimension).   Furthermore, our reduction applies to myriad continuous signal space settings.  To the best of our knowledge, this is the first peer-prediction mechanism on continuous signals designed for the multi-question setting.      
\item[Connection to Machine Learning]  We show how to turn a soft-predictor of an agent's signals (given the other agents' signals) into a mechanism.  This allows the practical use of machine learning algorithms that give good results even when many agents provide noisy information.  
\item[Stronger Properties]  Our mechanisms apply to any stochastically relevant prior rather than the more restrictive settings of previous mechanisms.  In the finite setting, we obtain $\epsilon$-strongly truthful mechanisms, whereas prior work only achieves a weaker notion of truthfulness (informed truthfulness) ~\cite{Shnayder2016-xx, Agarwal2017-ty}.
\end{description}
 \begin{tabular}{l  c  c  c c c}
 \toprule
  & D\&G~\citep{dasgupta2013crowdsourced}  & CA~\citep{Shnayder2016-xx,Agarwal2017-ty}  &  $\Phi$-MIM~\citep{kong2019information} & DMI~\citep{kong2020dominantly} & \vtop{\hbox{\strut $\Phi$-pairing }\hbox{\strut mechanism}} \\
  
 \hline
 Signal space        &   binary   &   finite   & finite &   finite &  continuous \\ \hline
 \vtop{\hbox{\strut Prior }\hbox{\strut Assumptions}} & \vtop{\hbox{\strut positive }\hbox{\strut  correlated}}  & \vtop{\hbox{\strut stochastic }\hbox{\strut  relevant}} & \vtop{\hbox{\strut fine}\hbox{\strut -grained}} & \vtop{\hbox{\strut strictly }\hbox{\strut  correlated}} &  \vtop{\hbox{\strut stochastic }\hbox{\strut  relevant}} \\ \hline
 Truthful           &   \checkmark   &   \checkmark   &   \checkmark&   \checkmark & \checkmark\\ \hline
 Informed-truthful  &   \checkmark   &    \checkmark      &   \checkmark &  \checkmark &  \checkmark\\ \hline
Strongly truthful   &   \checkmark   &       &   \checkmark(fine-grained) &  \checkmark &  \checkmark \\ \hline
  Detail-free       &   \checkmark   & \checkmark    &   \checkmark  & \checkmark & \checkmark\\ \hline
 Samples     &   &  $O(n)$    &  $\infty$ &  $O(|\Omega|^2)$ &  $O(\log n)$ \\
 \bottomrule
\end{tabular}

In the above table, $\Omega$ is the signal space required to be shared by all agents.

\subsection{Related Work}
\paragraph{Multi-task setting}
In the multi-task setting,  \citet{dasgupta2013crowdsourced} propose a \emph{strongly truthful} mechanism when the signal space is binary and every pair of agents' signals are assumed to be positively correlated.  
Both \citet{kong2019information} and \citet{Shnayder2016-xx} independently generalize \citet{dasgupta2013crowdsourced} to discrete signal spaces, though in  different manners illustrated as follows.   

\citet{kong2019information} present the \defn{$\Phi$-mutual information mechanism}, a multi-task peer prediction mechanism for the finite signal space setting with arbitrary interdependence between signals.  Unfortunately, the sample number is infinite.  They show that their mechanism is  strongly truthful as long as the prior is ``fine-grained" (it is truthful in any event).  A prior is \emph{fine-grained} if, roughly speaking, no two signals can be interpreted as different names for the same signal.
To define their mechanism they introduce the notion of $\Phi$-mutual information (of which Shannon mutual information is a special case) where $\Phi$ is any convex function.  
Their mechanism pays each agent the $\Phi$-mutual information  between her reports and the reports of another randomly chosen agent.  Strategic behavior is shown to not increase  $\Phi$-mutual information by a generalized version of the data processing inequality.  Unfortunately, their analysis requires infinite sample number to measure this $\Phi$-mutual information and does  not handle errors in estimation.

\citet{Shnayder2016-xx} introduce the \defn{Correlated Agreement (CA) mechanism} which also generalizes~\citet{dasgupta2013crowdsourced} to any finite signal space.  On the one hand, the CA mechanism can assume the knowledge of the ``signal structure'' (which tells which signals are positively and negatively correlated).  In this case they can provide a mechanism that is truthful with sample number of two.\footnote{The original paper shows it requires $3$, but it actually only needs $2$ tasks.} On the other hand, when agents are homogeneous the CA mechanism can learn the signal structure, albeit with some chance of error, if it has sample number $O(n)$.  The CA mechanism is shown to be robust to this error, and is $\epsilon$-truthful.  In both cases the CA mechanism is actually $\epsilon$-\emph{informed truthful} (a slightly weaker notion than strongly truthful).
\citet{Agarwal2017-ty} extend the above work of \citet{Shnayder2016-xx} to a particular setting of heterogeneous agents where agents are (close to) one of a fixed number of types.  They again establish a $O(n)$ sample number in this new setting.

Note that in the above works, a new robustness (error) analysis is required for each different setting of interdependence between signals.  Interestingly, the CA mechanism can be viewed as a special case of the aforementioned $\Phi$-mutual information mechanism using the total variation distance mutual information (i.e., $\Phi(a) = |a-1|/2$).  However, instead of directly computing this mutual information, the CA mechanism obtains a consistent estimator of it~\cite{kong2019information}.  Similarly, in the special case that our mechanism implements the total variation distance, we also recover the CA mechanism.  However, our analysis is entirely different. 

\citet{kong2020dominantly} shows an elegant way of obtaining strongly truthful mechanisms (DMI mechanism) for the multitask setting.  Our results are incommensurate with these results.  In our results, the sample complexity grows with the $\epsilon$ in the desired $\epsilon$-strongly truthful guarantee but is independent of the number of signals.  In~\citet{kong2020dominantly}, there is an exact strongly truthful guarantee with sample complexity grows in the size of the signal space.  However, the prior structure needs to be strictly correlated, which is a stronger assumption on stochastic relevance.  We provide comparison at Sect.~\ref{sec:prior}.  In particular, her mechanism requires all agents' report space are all finite and have the same size.  This restricts applications of the aggregation algorithm mentioned in the introduction and Sect.~\ref{sec:more}.

\paragraph{Single task setting} In general, agents do not (necessarily) have multiple identical and independent signals.   Without this property, most of the mechanisms require knowledge of a common prior (not detail-free) or for agents to report their whole posterior distribution of other's signals (not minimal).  The later solution is especially difficult to apply to complicated signal spaces (e.g. asking agents to report their probability density function of others' continuous signals). 

\citet{MRZ05} introduce the peer prediction mechanism which is the first mechanism that has truth-telling as a strict Bayesian Nash equilibrium and does not need verification. However, their mechanism requires the full knowledge of the common prior and there exist some equilibria that are paid more than truth-telling. In particular, the oblivious equilibrium pays strictly more than truth-telling. \citet{2016arXiv160307319K} modify the original peer prediction mechanism such that truth-telling pays strictly better than any other equilibrium but still requires the full knowledge of the common prior. \citet{prelec2004bayesian} designs the first detail-free peer prediction mechanism---Bayesian truth serum (BTS) in the one quesetion setting. Several other works study the one-question setting of BTS \cite{radanovic2013robust,radanovic2014incentives,witkowski2012peer,kong2016equilibrium}.  For continuous signals, \citet{radanovic2014incentives} apply a discretization approach and use a new payment method, but that is also non-minimal.   \citet{goelpersonalized} work on a mixture of normal distributions with an infinite number of agents.

\paragraph{Miscellany} 
\citet{Liu:2017:MAP:3033274.3085126} design a peer prediction mechanism where each agents' responses are not compared to another agents', but rather the output of a machine learning classifier that learns from all the other agents' responses.  \citet{liuchen} design a non-minimal approximate dominant strategy mechanism that uses surrogate loss functions as tools to correct for the mistakes in agents' reports. \citet{kong2018water} studies the related goal for forecast elicitation, and like the present work uses Fenchel's duality to reward truth-telling (though in a different manner).

One interesting, but orthogonal, line of work looks at ``cheap" signals, where agents can coordinate on less useful information.   For example, instead of grading an assignment based on correctness, a grader could only spot check the grammar.    \citet{gao2016incentivizing} introduces the issue, while \citet{kong2018eliciting} shows a partial solution using conditional mutual information.  

The recent book~\citet{faltings2017game} surveys additional results from this area.

\subsection{Structure of Paper}  
Sect.~\ref{sec:pre} introduces some basic notions in this paper.  In particular, Sect.~\ref{sec:prior} defines scoring functions, which will play an important role in this paper.

At the beginning of Sect.~\ref{sec:pairing}, we define a central component of our $\Phi$-pairing mechanism, Mechanism~\ref{alg:fmechansim}, which takes agents' report and a scoring function $K$ as input.  In Sect.~\ref{sec:known_prior}, we consider the full information setting.  We show, in the Mechanism~\ref{alg:fmechansim} with an ideal scoring function, agents are incentivized to report their signals truthfully.  In Sect.~\ref{sec:lem}, we prove Theorem~\ref{thm:truth}, and main technical lemmas.

In Sect.~\ref{sec:free}, we define a notion of approximation of an ideal scoring function and introduce our framework that reduces the mechanism problem for information elicitation to a learning problem for an ideal scoring function (Theorem~\ref{thm:framework}).  

In Sect.~\ref{sec:learning}, we focus on the learning problem introduced in Sect.~\ref{sec:free}.  We first show two sufficient conditions for approximating an ideal scoring function in Sect.~\ref{sec:sufficient}.  Then, we present two algorithms to derive approximately ideal scoring functions from agents' reports in Sect.~\ref{sec:algorithm}.  Additionally, in Sect.~\ref{sec:nonexistence}, we provide an obstacle to designing peer prediction mechanisms based on this divergence based method.

In Sect.~\ref{sec:more}, we generalize Mechanism~\ref{alg:fmechansim} to more than two agents.  We show how machine learning techniques can be naturally integrated with our mechanism.

Finally, in Appendix~\ref{sec:comparison} we compare our mechanisms with \citet{Shnayder2016-xx} and \citet{kong2019information}.

\section{Preliminaries}\label{sec:pre}
We use $(\Omega,\mathcal{F}, \mu)$ to denote a measure space where $\mathcal{F}$ is a $\sigma$-algebra on the outcome space $\Omega$ and $\mu$ is a measure.  Let $\Delta_\Omega$  denote the set of distributions of over $(\Omega, \mathcal{F})$,\footnote{We assume these distribution has a density function with respect to the $\mu$, $P\ll\mu$ for all $P\in \Delta_\Omega$.   The distributions in $\Delta_\Omega$ depend on $\mathcal{F}$ and $\mu$, but we omit it to simplify the notation.  The density is defined as the Radon–Nikodym derivative $\frac{dP}{d\mu}$ which exists because $P$ is dominated by $\mu$.} and $\mathcal{P}$ as a subset of distributions in $\Delta_\Omega$.  Given a distribution $P$, we also use $P$ to denote the density function where $P(\omega)$ is the probability density of outcome $\omega\in \Omega$.  We use uppercase for a random object $X$ and lowercase for the outcome $x$.  
In this paper we consider $\Phi$ to be a convex continuous function and use $\dom(\Phi)$ to denote its domain.

\subsection{Mechanism Design for Information Elicitation}\label{sec:goals}
For simplicity we first consider two agents, Alice and Bob, who work on a set of $m$ tasks denoted as $[m]$.  For each task $s\in [m]$, Alice receives a signal $x_s$ in $\X$ and Bob a signal $y_s$ in $\Y$. We use $(\mathbf{X}, \mathbf{Y})\in (\X\times \Y)^{m}$ to denote the \emph{signal profile} of Alice and Bob which is generated from a prior distribution $\mathbb{P}$.\footnote{The prior can be subjective, and Alice's and Bob's can be difference.  Here we analyzes the process in Alice's perspective.}  In this paper, we make the following assumption:

\begin{assumption}[A priori similar tasks~\cite{dasgupta2013crowdsourced}]\label{ass:apriori} $\mathbb{P}$ is a prior, and each task is identically and independently (i.i.d.) generated: there exists a distribution $P_{X,Y}$ over $\X\times \Y$ such that $\mathbb{P} = P_{X,Y}^m$,
Moreover, we assume the marginal distributions have full supports, $P_X(x)> 0$ and $P_Y(y)> 0$ for all $x\in \X$ and $y\in \Y$.
\end{assumption}

Given a report profile of Alice, $\hat{\mathbf{x}}\in \X^m$ and Bob, $\hat{\mathbf{y}}\in \Y^m$, an \emph{information elicitation mechanism} $\mathcal{M} = (M_A, M_B)$ with $m$ tasks pays  $M_A(\hat{\mathbf{x}}, \hat{\mathbf{y}}) \in \R$ to Alice, and $M_B(\hat{\mathbf{x}}, \hat{\mathbf{y}}) \in R$ to Bob.  In the rest of the paper we often only define notions for Alice, and define Bob's in the symmetric way.

Besides Assumption~\ref{ass:apriori}, we assume their strategies are uniform and independent across different tasks which is also made in previous work~\cite{dasgupta2013crowdsourced,Shnayder2016-xx,kong2019information}.  Formally, the \emph{strategy} of Alice is a random function $\theta_A:\X \to \Delta_\X$ where $\theta_A(x,\hat{x})$ is the probability that Alice reports $\hat{x}$ conditioning on her private information $x$.  That is, each report only depends on the corresponding signal.  For instance, given Alice receiving $\mathbf{x}\in \X^m$ the probability that Alice reports $\hat{\mathbf{x}}\in \X^m$ is $\Pr[\hat{\mathbf{X}} = \hat{\mathbf{x}}] = \prod_{s\in [m]} \theta_A({x}_{s},\hat{x}_{s})$.
  We  call $\bm{\theta} = (\theta_A,\theta_B)$ a the \emph{strategy profile}.  The \emph{ex-ante payment} to Alice under a strategy profile $\bm{\theta}$ and a prior $\mathbb{P}$ in mechanism $\mathcal{M}$ is
$$u_A(\bm\theta;\mathbb{P}, \mathcal{M})\triangleq
   \E_{(\mathbf{X}, \mathbf{Y})}\left[\E_{(\hat{\mathbf{X}},\hat{\mathbf{Y}})}\left[\E_\mathcal{M}[M_A(\hat{\mathbf{x}}, \hat{\mathbf{y}})]\right]\mid (\mathbf{x}, \mathbf{y}) \right]$$ 
where we use a semicolon to separate the variable, $\theta$, and parameters $\mathbb{P}$ and $\mathcal{M}$.  Note that a strategy profile $\bm{\theta}$ can be seen as a Markov operator on probability measures on the signal space $\X\times\Y$, and Alice and Bob's reports, $\bm{\theta}\circ P$, is also a distribution on the signal space $\X\times\Y$.

In the literature of information elicitation, there are three important classes of strategies.  We use $\bm{\tau}$ to denote the  \defn{truth-telling strategy profile} where both agents' reports are equal to their private signals with probability $1$, e.g., Alice's strategy is $\tau_A(x,\hat{x}) = \mathbb{I}[x = \hat{x}]$.   A strategy profile is a \defn{permutation strategy profile} if both agents' strategy are a (deterministic) permutation, a bijection between signals and reports.  Finally, a strategy profile is \defn{oblivious} or \emph{uninformed} if even one of the agents' strategies does not depend on their signal: that is for Alice $\theta_A(x,\hat{x})=\theta_A(x',\hat{x})$ for all $x$, $x'$, and $\hat{x}$ in $\X$.  
Note that the set of permutation strategy profiles includes the truth-telling strategy profile $\bm{\tau}$ but does not include any oblivious strategy profiles. 
\paragraph{Truthful Guarantees}
We now define some truthfulness guarantees for our mechanism $\mathcal{M}$ that differ in how unique the high payoff of truth-telling strategy profile is:
\begin{description}
\item[Truthful:] the truth-telling strategy profile $\bm{\tau}$ is a Bayesian Nash Equilibrium, and has the highest payment to both Alice and Bob.
\item[Informed-truthful~\cite{Shnayder2016-xx}:]  Truthful and also for each agent $\bm{\tau}$ is strictly better than any oblivious strategy profiles.  For any oblivious strategy profile $\bm{\theta}$, $u_A(\bm{\tau}; \mathbb{P}, \mathcal{M})> u_A(\bm{\theta}; \mathbb{P}, \mathcal{M})$ and $u_B(\bm{\tau}; \mathbb{P}, \mathcal{M})> u_B(\bm{\theta}; \mathbb{P}, \mathcal{M})$.
\item[Strongly truthful~\cite{Shnayder2016-xx,kong2019information}:] Truthful and also for each agent $\bm{\tau}$ is strictly better  than all non-permutation strategy profiles.  For any non-permutation strategy profile $\bm{\theta}$, $u_A(\bm{\tau}; \mathbb{P}, \mathcal{M})> u_A(\bm{\theta}; \mathbb{P}, \mathcal{M})$ and $u_B(\bm{\tau}; \mathbb{P}, \mathcal{M})> u_B(\bm{\theta}; \mathbb{P}, \mathcal{M})$.
\item[Dominant truthful:] Each agent report truthfully leads to higher expected payoff than other strategies, regardless of other agent's reporting strategies.  For any strategy profile $\bm{\theta}$, we have $u_A(\bm{\tau}; \mathbb{P}, \mathcal{M})> u_A(\bm{\theta}; \mathbb{P}, \mathcal{M})$ and $u_B(\bm{\tau}; \mathbb{P}, \mathcal{M})> u_B(\bm{\theta}; \mathbb{P}, \mathcal{M})$.
\end{description}  
We can also call a general mapping truthful, informed-truthful, strongly truthful, dominant truthful when it satisfy the corresponding property.

In this work, we consider an approximate version of above statements with low sample number.  For example, given $\epsilon>0$, a mechanism $\mathcal{M}$ with $m(\epsilon)$ tasks (the sample number)\footnote{Here mechanism which can take different length of report $m$.  Or we can consider a family of mechanisms ($\mathcal{M}_m$) parameterized by the sample number (the number of tasks) $m$.} is \emph{$\epsilon$-strongly truthful with $m(\epsilon)$ tasks} if there exists a mapping from strategy profiles to ex-ante payments such that 1) this mapping is strongly truthful; 2) for all $\epsilon$ the ex-ante payments of our mechanism with $m(\epsilon)$ tasks is within $\epsilon$ of this mapping.

Now we define the sample number for approximately truthfulness guarantees. 
\begin{definition}\label{def:approx_strongly}
Given a family of joint signal distributions $\mathcal{P}$ and a function $S:\mathbb{R}_{>0}\to \mathbb{N}$ we say a mechanism $\mathcal{M}$ is \defn{$\epsilon$-strongly truthful on $\mathcal{P}$ with $S(\epsilon)$ number of tasks}, if there exists a strongly truthful mapping $F = (F_A, F_B)$ from joint signal distributions and strategy profiles to payments such that for all $\epsilon>0$ and $m\ge S(\epsilon)$
\begin{enumerate}
    \item the ex-ante payment under the truth-telling strategy profile in $\mathcal{M}$ with $m$ number of tasks is within $\epsilon$ additive error from $F$:
    for all $P\in \mathcal{P}$,
    $$u_A(\bm{\tau};P, \mathcal{M})\ge F_A(\bm{\tau},P)- \epsilon;$$
    \item and the ex-ante payment under any strategy profile $\bm{\theta}$ in $\mathcal{M}$ with $m$ number of tasks is bounded above by $F$: 
    for all $P\in \mathcal{P}$, and $\bm{\theta}$
    $$u_A(\bm{\theta};P, \mathcal{M})\le F_A(\bm{\theta},P).$$
\end{enumerate}
And the inequality also holds for Bob's ex-ante payment.  Furthermore, we say $\mathcal{M}$ is \defn{$(\delta, \epsilon)$-strongly truthful on $\mathcal{P}$ with $S(\delta, \epsilon)$} if the above conditions holds with probability $1-\delta$ for all $\delta\in (0,1)$ and $\epsilon>0$.

  Additionally, we say $\mathcal{M}$ is \defn{$\epsilon$-informal-truthful} (\defn{$\epsilon$-truthful}) with $S(\epsilon)$ number of tasks if it is $\epsilon$ close to an inform-truthful (truthful) mapping.

\end{definition}
Note that our notion of $\epsilon$-truthfulness guarantee is quite strong.  In particular, the second item requires for any strategy profile $\bm{\theta}$, the ex-ante payment is upper bounded by a strongly truthful (inform-truthful, truthful) mapping.

\subsection{Prior Assumptions}\label{sec:prior} 
There are two axes to compare these peer prediction mechanism: \emph{truthful guarantee} and \emph{prior assumption}. Truthful guarantee asks how good the truth-telling strategy is.  Prior assumption addresses how general these mechanisms are.   We first introduce the weakest possible notion of interdependence that we used in our paper.  Then we survey other notions proposed in previous works.  Finally, we provide concrete examples to show the distinction between those notions of interdependence.

\begin{definition}[Stochastic Relevant~\cite{Shnayder2016-xx}]\label{def:nondegenerate}
We call $P_{X,Y}$ \emph{stochastic relevant} if for any two distinct signals $x, x'\in \X$
$$P_{X,Y}[Y\mid X = x]\neq P_{X,Y}[Y\mid X = x'].$$
That is, Alice's posteriors on Bob's signals are different when Alice receives signal $x$ or $x'$.  And symmetrically, the same holds for Bob's  posterior on Alice's signals.
\end{definition}

Stochastic relevancy is the weakest assumption we can hope for designing peer prediction mechanisms. Proposition~\ref{prop:impossibility} shows that if agent's signal are not stochastic relevant an agent can always misreport regardless other agents' reports even if the mechanism knows the information structure.
\begin{prop}[Elicitability~\cite{zhang2014elicitability}]\label{prop:impossibility}
If the prior $P_{X,Y}$ is not stochastic relevant, there is no mechanism that has truth-telling as a strict Bayesian Nash equilibrium.
\end{prop}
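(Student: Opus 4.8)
The plan is to show that failure of stochastic relevance forces, for \emph{every} mechanism $\mathcal{M}$, some non-truthful strategy to earn one agent at least as much as truth-telling, so $\bm{\tau}$ cannot be a \emph{strict} Bayesian Nash equilibrium. By the symmetry built into Definition~\ref{def:nondegenerate} we may assume the condition fails on Alice's side: there are distinct signals $x,x'\in\X$ with $P_{X,Y}[Y\mid X=x]=P_{X,Y}[Y\mid X=x']=:\pi$. Fix a mechanism $\mathcal{M}=(M_A,M_B)$, suppose for contradiction $\bm{\tau}$ is a strict BNE, and freeze Bob at $\tau_B$ (so $\hat{\mathbf Y}=\mathbf Y$). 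It then suffices to produce a strategy $\theta_A^{\ast}\neq\tau_A$ with $u_A(\theta_A^{\ast},\tau_B)\ge u_A(\tau_A,\tau_B)$ (suppressing $\mathbb{P},\mathcal{M}$), contradicting strictness.

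The key step is an invariance claim: among Alice strategies that are truthful on every signal outside $\{x,x'\}$, the ex-ante payoff $u_A(\theta_A,\tau_B)$ depends on $\theta_A$ only through the aggregate report distribution $\beta_{\theta_A}:=\nu(x)\,\theta_A(x,\cdot)+\nu(x')\,\theta_A(x',\cdot)\in\Delta_\X$, where $\nu(x):=\tfrac{P_X(x)}{P_X(x)+P_X(x')}$ and $\nu(x'):=\tfrac{P_X(x')}{P_X(x)+P_X(x')}$ (both strictly positive by the full-support part of Assumption~\ref{ass:apriori}). To prove it, condition on the random set $T=\{s\in[m]:X_s\in\{x,x'\}\}$, together with all signals/Bob-reports on $[m]\setminus T$ and all of Bob's reports on $T$. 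Because tasks are i.i.d.\ and $P_{X,Y}[Y\mid X=x]=P_{X,Y}[Y\mid X=x']=\pi$, on each $s\in T$ the event ``$X_s=x$'' versus ``$X_s=x'$'' is independent of $Y_s$ (hence of all the frozen data), so conditionally $\mathbf X_T$ is i.i.d.\ $\nu$. Alice's reports on $[m]\setminus T$ are already pinned down to be truthful, and her report on each $s\in T$ is drawn from $\theta_A(X_s,\cdot)$; therefore her report profile on $T$ is conditionally i.i.d.\ with marginal exactly $\beta_{\theta_A}$, and the conditional expectation of $M_A(\hat{\mathbf X},\mathbf Y)$—hence $u_A(\theta_A,\tau_B)$ after the outer expectation—is a function of $\beta_{\theta_A}$ alone.

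Now apply the claim to two strategies: truth-telling $\tau_A$, for which $\beta_{\tau_A}=\nu(x)\delta_x+\nu(x')\delta_{x'}$, and the strategy $\theta_A^{\ast}$ that is truthful outside $\{x,x'\}$ and reports the mixture $\nu(x)\delta_x+\nu(x')\delta_{x'}$ on \emph{both} $x$ and $x'$. They share the same aggregate distribution, so $u_A(\theta_A^{\ast},\tau_B)=u_A(\tau_A,\tau_B)$, yet $\theta_A^{\ast}\neq\tau_A$ since $\theta_A^{\ast}$ places mass $\nu(x)>0$ on the report $x$ when Alice's signal is $x'$. This contradicts strictness and proves the proposition. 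The one delicate point is the independence used in the invariance claim—that, conditioned on a task's signal lying in $\{x,x'\}$, which of the two it is carries no payoff-relevant information—which is precisely where the hypothesis $P_{X,Y}[Y\mid X=x]=P_{X,Y}[Y\mid X=x']$ is consumed. (In a continuous signal space with an atomless prior, Definition~\ref{def:nondegenerate} and this statement should be read up to $P_X$-null sets—otherwise $\theta_A^{\ast}$ and $\tau_A$ differ only on a null set—and one replaces $\{x,x'\}$ by a positive-measure set on which the posterior on $Y$ is constant; the argument is otherwise unchanged.)
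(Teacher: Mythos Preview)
The paper does not supply its own proof of Proposition~\ref{prop:impossibility}; it is quoted as a known result from \cite{zhang2014elicitability}. So there is no in-paper argument to compare against, and I will simply assess your proof.

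Your argument is correct. The invariance claim is the heart of the matter and is where the hypothesis is spent: because $P_{X,Y}[Y\mid X=x]=P_{X,Y}[Y\mid X=x']$, the per-task joint law of $(\hat X_s,Y_s)$ under any Alice strategy truthful outside $\{x,x'\}$ depends only on the aggregate $P_X(x)\,\theta_A(x,\cdot)+P_X(x')\,\theta_A(x',\cdot)$; by Assumption~\ref{ass:apriori} the $m$-task report-profile law is the $m$-fold product, so the ex-ante payoff is likewise determined by that aggregate. Your randomized deviation $\theta_A^{\ast}$ matches this aggregate with $\tau_A$ yet differs from $\tau_A$ (full support gives $\nu(x),\nu(x')>0$), so it ties truth-telling for \emph{every} mechanism, which is exactly the denial of strictness.

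Two minor remarks. First, the proof-by-contradiction wrapper is superfluous: you directly exhibit a non-truthful best response against $\tau_B$, which already means $\bm\tau$ is not a strict BNE. Second, the layered conditioning on $T$ is valid but heavier than needed; it is shorter to verify equality of the single-task joint of $(\hat X,Y)$ directly---for $\hat x\in\{x,x'\}$ one gets $(P_X(x)+P_X(x'))\,\nu(\hat x)\,\pi(y)=P_X(\hat x)\,\pi(y)$ under both $\tau_A$ and $\theta_A^{\ast}$, and for $\hat x\notin\{x,x'\}$ both give $P_{X,Y}(\hat x,y)$---and then invoke the i.i.d.\ structure. Your caveat about atomless priors (reading Definition~\ref{def:nondegenerate} up to $P_X$-null sets and replacing $\{x,x'\}$ by a positive-measure set of signals with common posterior) is the right patch for the continuous case.
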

Besides the above notion, previous peer prediction mechanisms make ad hoc assumptions on agents' private signals.  

\citet{kong2019information} studies \emph{fine-grained} joint distributions.  A joint distribution $P_{X,Y}$ is \emph{fine-grained} if for any distinct pairs of signals $(x,y)$ and $(x',y')$
$$\frac{P_{X,Y}(x,y)}{P_X(x)P_Y(y)}\neq \frac{P_{X,Y}(x',y')}{P_X(x')P_Y(y')}.$$
\citet{kong2020dominantly} considers \emph{strictly correlated} distributions.  A joint distribution $P$ on a finite space $\X^2$ is strictly correlated if the determinant of distribution $P\in \R^{|\X|\times|\X|}$ is nonzero.   Those two notions are both stronger than stochastic relevance, and the following example to show this.

\begin{example}
Suppose Alice and Bob review papers and grade with a scale from one to three (reject, neutral, accept) $\{1,2,3\}$.  The paper can be either good or bad with uniform probability.  If the paper is good, Alice's grade $X\in \{1,2,3\}$ is generate from distribution $\Gamma_1 = [0.2, 0.2, 0.6]^\top$.  If the paper is bad, $X$ is sampled from  distribution $\Gamma_0 = [0.6, 0.2, 0.2]^\top$.  The distribution of Bob's signal $Y$ on the paper is sampled identically and independently conditional on the state of the paper.  Thus, $P_{X,Y}(x,y) = 0.5\Gamma_{1,x}\Gamma_{1,y}+0.5\Gamma_{0,x}\Gamma_{0,y}$, and the joint distribution of $X$ and $Y$ is 
\begin{equation}\label{eq:counter}
    P_{X,Y} = 0.5\Gamma_1^\top \Gamma_1+0.5 \Gamma_0^\top \Gamma_0 = \begin{pmatrix}
    0.20&    0.08&    0.12\\
    0.08&    0.04&    0.08\\
    0.12&    0.08&    0.20
\end{pmatrix}
\end{equation}
where the first index is on $X$ and the second index is on $Y$.

Now we show this simple prior is stochastic relevant but not fine-grained nor strictly correlated.  Therefore those mechanisms do not have provable truthful guarantee on this simple prior.
\begin{enumerate}
    \item The prior $P_{X,Y}$ in Eqn.\eqref{eq:counter} is stochastic relevant, because $\Pr[Y\mid X] = \begin{pmatrix}
        0.5&    0.2&    0.3\\
    0.4&    0.2&    0.4\\
    0.3&    0.2&    0.5
    \end{pmatrix}$ where each row is distinct and $\Pr[Y\mid X]$ also has this property  due to symmetry.  
    \item $P$ is not fine-grain, because $\left(\frac{P_{X,Y}(x,y)}{P_X(x) P_Y(y)}\right)_{x,y\in \X\times \Y} = \begin{pmatrix}
    1.25&    1.00&    0.75\\
    1.00&    1.00&    1.00\\
    0.75&    1.00&    1.25
    \end{pmatrix}$ where $\frac{P_{X,Y}(1,0)}{P_X(1)P_Y(0)} =  \frac{P_{X,Y}(0,1)}{P_X(0)P_Y(1)}$.
    \item Finally, $P_{X,Y}$ is not strictly correlated, because $det(P_{X,Y}) = 0$.
\end{enumerate}
\end{example}

\subsection{Convex Analysis and \texorpdfstring{$\Phi$}{Phi}-divergence}\label{sec:pre_convex}
Informally, $\Phi$-divergences quantify the difference between a pair of distributions over a common measurable space.  
\begin{definition}[$\Phi$-divergence~\cite{csiszar1964informationstheoretische, Morimoto1963, ali1966general}]\label{def:fdiv} Let $\Phi:[0,\infty)\to \R$ be a convex function with $\Phi(1) = 0$.  Let $P$ and $Q$ be two probability distributions on a common measurable space $(\Omega, \mathcal{F})$.  The \defn{$\Phi$-divergence of $Q$ from $P$} where $P\ll Q$\footnote{$P$ is absolutely continuous with respect to $Q$: for any measurable set $A\in \mathcal{F}$, $Q(A) = 0\Rightarrow P(A) = 0$.} is defined as
$D_\Phi(P\| Q)\triangleq \E_Q\left[\Phi\left(P/Q\right)\right].$\footnote{$P/Q$ is the Radon-Nikodym derivative between measures $P$ and $Q$, and it is equal to the ratio of density function.}
\end{definition}

We can use these divergences to measure how interdependent between two random variables $X$ and $Y$.  Formally, Let $P_{X,Y}$ be a distribution over $(x,y)\in \X\times\Y$, and $P_X$ and $P_Y$ be marginal distributions of $X$ and $Y$ respectively.  We set $P_X P_Y$ be the tensor product between $P_X$ and $P_Y$ such that $P_X P_Y(x,y) = P_X(x)P_Y(y)$.  We call $D_\Phi(P_{X,Y}\|P_X P_Y)$ the \defn{$\Phi$-mutual information between $X$ and $Y$}.  

Given a joint distribution $P_{X,Y}$, let \defn{\Ratio} at $(x,y)$ on  $P_{X,Y}$ be $$\ratio_P(x,y):=\frac{P_{X,Y}(x,y)}{P_X(x)P_Y(y)}$$
which is ratio between joint probability divided by the product of the probabilities at $(x,y)$.  We will omit subscript $P$ when there is not ambiguity.   This ratio appears in several literature.  For instance, it's called  \textit{observed to expected ratio} in life sciences literature, or \textit{lift} in data mining for binary random variable.  Additionally, $\log \ratio(x,y)$ is called point-wise mutual information.  Finally, note that $\Phi$ mutual information is the average of \Ratio applied to $\Phi$.

Now, we introduce some basic notions in convex analysis~\cite{rockafellar2015convex}.
Let $\Phi:[0,+\infty)\to \R$ be a convex function.  The \emph{convex conjugate} $\Phi^*$ of $\Phi$ is defined as:
$\Phi^*(b) = \sup_{a\in \dom(\Phi)}\{ab-\Phi(a)\}$.
Moreover $\Phi = \Phi^{**}$ if $\Phi$ is continuous.

By Young-Fenchel inequality~\cite{fenchel1949conjugate}, we can rewrite the $\Phi$-divergence of $Q$ from $P$ in a variational form.  This formulation is important to understand our mechanisms.

\begin{theorem}[Variational representation~\cite{nguyen2010estimating, wu2017lecture}]\label{thm:fdivergenceconjugate}
$$D_\Phi(P\|Q) = \sup_{k:\Omega\to \dom(\Phi^*)} \left\{\E_{\omega\sim P}[k(\omega)]-\E_{\omega\sim Q}[\Phi^*(k(\omega))]\right\},\footnote{The $\sup$ is taken over $k$ with finite $\E_{\omega\sim P}[k(\omega)]$ and $\E_{\omega\sim Q}[\Phi^*(k(\omega))]$.}$$
and the equality holds $D_\Phi(P\|Q) = \E_{\omega\sim P}[k(\omega)]-\E_{\omega\sim Q}[\Phi^*(k(\omega))]$ if and only if $k \in \partial \Phi\left(P/Q\right)$ almost everywhere on $Q$.\footnote{$\partial \Phi$ is the subgradient of $\Phi$, and the formal definition can be found in \cite{rockafellar2015convex}.  Here we only use the equality condition when $\Omega$ is finite.}
\end{theorem}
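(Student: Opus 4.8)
The plan is to reduce the claimed identity to the pointwise Fenchel--Young inequality $\Phi(a)+\Phi^*(b)\ge ab$, valid for all $a\in\dom(\Phi)$ and $b\in\dom(\Phi^*)$, which (because $\Phi=\Phi^{**}$ is continuous) holds with equality exactly when $b\in\partial\Phi(a)$. Write $r:=\frac{dP}{dQ}$ for the density ratio, so that by definition $D_\Phi(P\|Q)=\E_{\omega\sim Q}[\Phi(r(\omega))]$. First I would prove the ``$\ge$'' direction: for an arbitrary measurable $k:\Omega\to\dom(\Phi^*)$, apply Fenchel--Young with $a=r(\omega)$ and $b=k(\omega)$ to get $\Phi(r(\omega))\ge k(\omega)\,r(\omega)-\Phi^*(k(\omega))$ pointwise; integrating against $Q$ and using $\E_{\omega\sim Q}[k(\omega)\,r(\omega)]=\E_{\omega\sim P}[k(\omega)]$ — which is exactly where the hypothesis $P\ll Q$ is used — yields $D_\Phi(P\|Q)\ge \E_{\omega\sim P}[k(\omega)]-\E_{\omega\sim Q}[\Phi^*(k(\omega))]$. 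Taking the supremum over admissible $k$ gives one inequality.

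For the reverse inequality I would exhibit a maximizer: for $Q$-a.e.\ $\omega$ choose $k(\omega)\in\partial\Phi(r(\omega))$, concretely $k(\omega)=\Phi'_+(r(\omega))$, which is measurable as the composition of the measurable map $r$ with the monotone (hence Borel) right derivative of $\Phi$. Any such $k(\omega)$ lies in $\dom(\Phi^*)$ since $\Phi^*(k(\omega))=k(\omega)\,r(\omega)-\Phi(r(\omega))<\infty$, and Fenchel--Young now holds with equality pointwise; integrating against $Q$ gives $D_\Phi(P\|Q)=\E_{\omega\sim P}[k(\omega)]-\E_{\omega\sim Q}[\Phi^*(k(\omega))]$, so the supremum is attained and equals $D_\Phi(P\|Q)$.

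The equality characterization then falls out of the same computation: a candidate $k$ attains the supremum iff $\E_{\omega\sim Q}\big[\Phi(r(\omega))-\big(k(\omega)\,r(\omega)-\Phi^*(k(\omega))\big)\big]=0$, and since the integrand is nonnegative by Fenchel--Young, this is equivalent to the integrand vanishing $Q$-almost everywhere, i.e.\ $\Phi(r(\omega))+\Phi^*(k(\omega))=k(\omega)\,r(\omega)$ for $Q$-a.e.\ $\omega$, which is precisely $k(\omega)\in\partial\Phi(r(\omega))$, that is $k\in\partial\Phi(P/Q)$ $Q$-almost everywhere.

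I expect the routine content — Fenchel--Young plus an integration — to be straightforward. The genuine subtleties, which I would either handle with care or sidestep by restricting to finite $\Omega$ as the footnote permits, are: (i) measurable selection of a subgradient, and the behaviour at the boundary point $r(\omega)=0$, where $\partial\Phi(0)$ may be empty (as for the KL case $\Phi(a)=a\log a$), so the supremum need not be attained and the equality condition must be read with that caveat; (ii) integrability bookkeeping — ensuring $\E_{\omega\sim P}[k(\omega)]$ and $\E_{\omega\sim Q}[\Phi^*(k(\omega))]$ are finite for the chosen maximizer, which one arranges by truncating $k$ when $D_\Phi(P\|Q)<\infty$; and (iii) the degenerate case $D_\Phi(P\|Q)=+\infty$, where one shows the right-hand side is also $+\infty$ by plugging truncated subgradient selections into the variational expression and letting the truncation level tend to infinity.
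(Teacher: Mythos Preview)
Your proposal is correct and follows essentially the same route as the paper's proof: both reduce to the pointwise Fenchel--Young inequality $\Phi(a)\ge ab-\Phi^*(b)$ applied with $a=P/Q$, then integrate against $Q$ and use $\E_Q[k\cdot P/Q]=\E_P[k]$. Your version is in fact more careful---the paper writes the argument as a chain of equalities that silently interchanges $\sup$ and expectation, whereas you justify both inequalities separately and flag the measurable-selection and boundary issues that the paper's footnote defers to the finite-$\Omega$ case.
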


For completeness, we provide a proof for Theorem~\ref{thm:fdivergenceconjugate} and some examples for $\Phi$-divergence in Appendix~\ref{sec:convex_add}.

\subsection{Scoring Function}
Our constructions and analysis will make heavy use of the following functionals--- scoring functions.
\begin{definition}[Scoring function]\label{def:score}
A \defn{scoring function} $K:\X\times\Y\to \R$ is a functional (real-valued function) that maps from a pair of reports to a real value.  Given a convex function $\Phi$, a scoring function $K^\star_{P, \Phi}$ is a \defn{$(P_{X,Y},\Phi)$-ideal scoring function} if
\begin{equation}\label{eq:optimal}
    K^\star_{P, \Phi} (x,y) \in \partial \Phi\left(\frac{P_{X,Y}(x,y)}{P_X(x)P_Y(y)}\right) = \partial \Phi(\ratio_P(x,y)).
\end{equation}
We will use $P$ and $P_{X,Y}$ interchangeably later, and say $K^\star$ is \emph{ideal} without specifying $P$ and $\Phi$ when it's clear. 
\end{definition}
A $(P, \Phi)$-ideal scoring function is the \Ratio applied to $\partial \Phi$ which is  a monotone increasing function if $\Phi$ is differentiable.  \Ratio encodes the signal structure of $P_{X,Y}$ which measure how interdependent $x$ and $y$ is.  Alternatively, the scoring function serves as a ``distinguisher'' which tries to decide whether a pair of reports came from the joint distribution or the product of the marginal distributions. 

Furthermore, the ideal scoring function can be easily computed from the density function $P_{X,Y}$.  We give a example that will serve as a running example in this paper.

\begin{example}[Joint Gaussian Signals]
On each day $s$, a certain route has a expected driving time $\mu_s$ drawn from Gaussian distribution $\mathcal{N}(m_0, \sigma^2)$ i.i.d., \footnote{$\mathcal{N}(m_0, \sigma^2)$ denotes the Gaussian distribution with mean $m_0$ and covariance matrix (or variance) $\sigma^2$} and Alice receives a driving time $x$ from $\mathcal{N}(\mu_s, \tau^2)$ and Bob receives $y$ from $\mathcal{N}(\mu_s, \tau^2)$ independently conditioned on $\mu_s$.  Therefore, $P_{X,Y}$ is pair of correlated Gaussians with mean $(m_0, m_0)$ and covariance $\begin{pmatrix}
\sigma^2+\tau^2 & \sigma^2\\
\sigma^2    & \sigma^2+\tau^2
\end{pmatrix}$.  Let $G(x,y) \triangleq (x-m_0, y-m_0)\begin{pmatrix}{\sigma^2+\tau^2} & -\sigma^2\\-\sigma^2    & {\sigma^2+\tau^2}\end{pmatrix}\begin{pmatrix}x-m_0 \\ y-m_0
\end{pmatrix}$ be a quadratic form on $x$ and $y$.  Then the \Ratio  is
$$\ratio(x,y) = \frac{P_{X,Y}(x,y)}{P_X(x)P_Y(y)} = \sqrt{\frac{(\sigma^2+\tau^2)^2}{2\sigma^2\tau^2+\tau^4}}\exp\left(\frac{-1}{2(2\sigma^2\tau^2+\tau^4)}G(x,y)\right).$$

If $\Phi(a) = \frac{1}{2}|a-1|$, and constant $R \triangleq \left(\sigma^2\tau^2+\tau^4\right)\log\left(\frac{(\sigma^2+\tau^2)^2}{2\sigma^2\tau^2+\tau^4}\right)$, an ideal scoring function is
$$K^\star_{P, \Phi} (x,y) = \begin{cases}\frac{1}{2} \text{ if } G(x,y)< R \\
-\frac{1}{2} \text{ if } G(x,y)\ge R
\end{cases}$$
which can be represented by an ellipse $\Gamma$. The scoring function is $1/2$ if the input is in the ellipse and $-1/2$ otherwise. (cf. Figure~\ref{fig:example}) 

If $\Phi(a) = a\log a$, the $\Phi$-ideal scoring function is
$$K^\star_{P, \Phi} (x,y) = -\frac{1}{2(2\sigma^2\tau^2+\tau^4)}G(x,y)+1+\frac{1}{2}\log \left(\frac{(\sigma^2+\tau^2)^2}{2\sigma^2\tau^2+\tau^4}\right)$$
which is a quadratic function on $x$ and $y$. (cf. Figure~\ref{fig:example})

\begin{figure}[ht]
\centering
\begin{tabular}{ccc}
\includegraphics[width=0.32\textwidth]{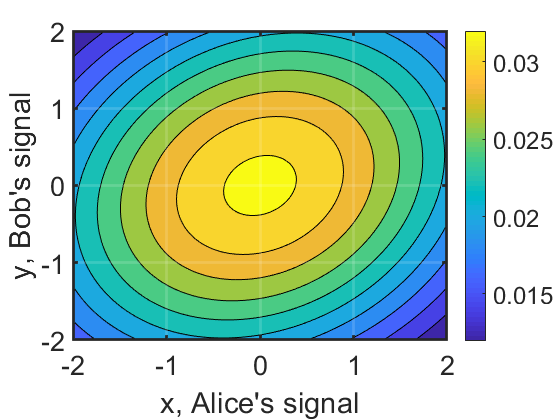} &\includegraphics[width=0.32\textwidth]{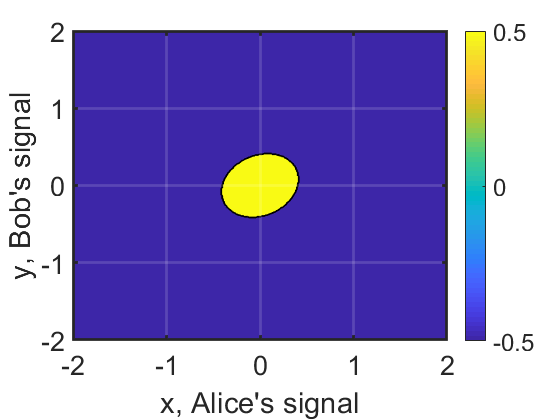}
&\includegraphics[width=0.32\textwidth]{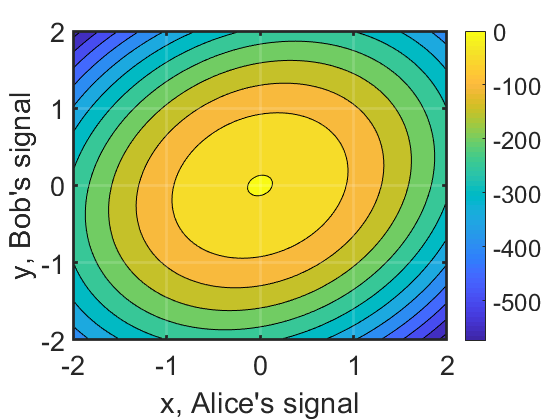}\\
\includegraphics[width=0.32\textwidth]{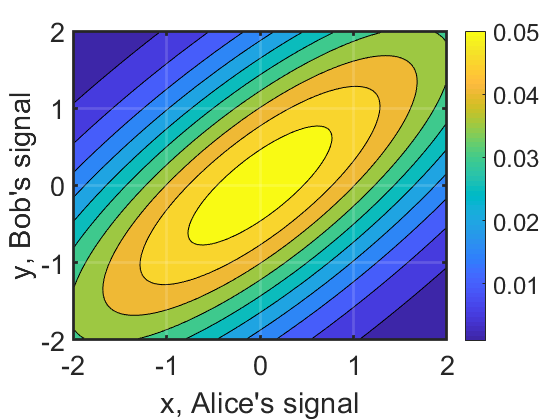} &\includegraphics[width=0.32\textwidth]{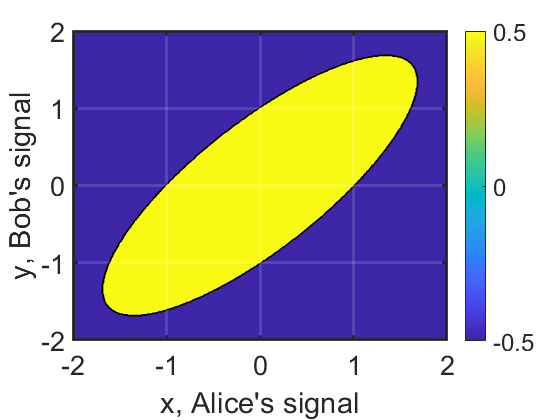}
&\includegraphics[width=0.32\textwidth]{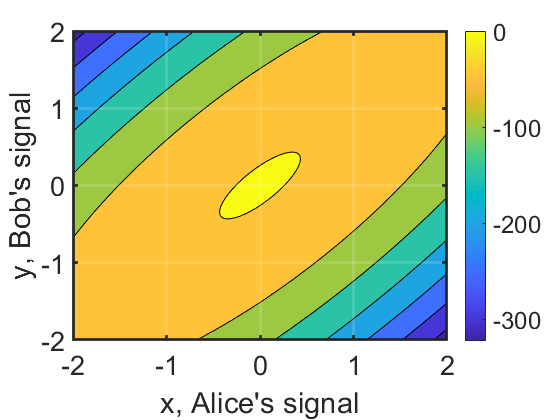}\\
$P_{X,Y}$ probability density function  & $K^\star_{P, \Phi}$ with $\Phi(a) = |a-1|$ & $K^\star_{P, \Phi}$ with $\Phi(a) = a\log a$\\
\end{tabular}
\caption{The top row uses $\sigma = 1$ and $\tau = 2$, and the bottom row uses $\sigma = 2$ and $\tau = 1$. Note that if Alice's and Bob's signals are more correlated $\sigma\gg\tau$, $\Gamma$ is more skew}\label{fig:example}
\end{figure}
\end{example}

\subsection{Functional Complexity}
In thi section, we provide some standard notions to characterize the complexity of learning functionals which are standard~ \cite{van2000empirical, wellner2013weak}, and discuss more in Sect.~\ref{sec:pre_emp}.  We will use these notions to characterize the complexity of learning an ideal scoring function.

Let $\mathcal{K}$ is a pre-specified class of functionals $k:\X\times \Y\to \R$.  Given $k\in \mathcal{K}$, $L>0$, and a distribution $P_{X,Y}$, we define the \emph{Bernstein norm} as $$\rho_L^2(k;P)\triangleq 2L^2\E_{P}[ \exp\left(|k|/L\right)-1-|k|/L]\text{,
and }\rho_L(\mathcal{K};P) \triangleq \sup_{k\in\mathcal{K}} \rho_{L}(k,P).$$
Let  $\mathcal{N}_{[],L}(\delta, \mathcal{K}, P)$ be the smallest value of $n$ for which there exists $n$ pairs of functions $\{(k_j^L, k_j^U)\}$ such that 1) $\rho_L(k_j^U-k_j^L;P)\le \delta$ for all $j$ and 2) for all $k\in \mathcal{K}$ there is a $j$,  $k_j^L(x,y)\le k(x,y)\le k_j^U(x,y)$ for all  $(x,y)\in\X\times\Y$.
Then 
$$\mathcal{H}_{[],L}(\delta, \mathcal{K}, P) \triangleq \log \mathcal{N}_{[],L}(\delta, \mathcal{K}, P)$$
is called the \emph{generalized entropy with bracketing}.  We further define the entropy integral as $J_{[], L}(R,\mathcal{K}, P) \triangleq \int^{R}_0\sqrt{\mathcal{H}_{[],L}(u,\mathcal{K},P)}du$.

Our results will show that constant number of questions suffice as long as the ideal scoring functions is in some bounded complexity space $\mathcal{K}$ where $J_{[], L}(R,\mathcal{K}, P)$ and $\rho_L(\mathcal{K};P)$ are bounded.

\section{\texorpdfstring{$\Phi$}{Phi}-Divergence Pairing Mechanisms}\label{sec:pairing}
In this section, we first define a class of multi-task peer-prediction mechanisms $\mathcal{M}^{\Phi,K}$  Alice and Bob who work on all $m\ge 2$ tasks.  The mechanism is parametrized by a convex function $\Phi$ and a scoring function $K$ (Definition~\ref{def:score}).  Then we briefly discuss how to obtain a good scoring function, and develop algorithms for estimating good scoring function.

The process of this mechanism is quite simple.  Given a scoring function $K$ and $\Phi$, we arbitrarily choose one task $b$, and two distinct tasks  $p$ and $q$ from $m\ge2$ tasks.  Alice gets paid by Eqn.~\eqref{eq:payment} the scoring function on her and Bob's reports on task $b$ minus the $\Phi^*$ applied to the scoring function on her report on $p$ and Bob's report on $q$.  In this way, agents are paid by a scoring function on a \emph{correlated task} minus a regularized scoring function on two \emph{uncorrelated tasks}.

\begin{algorithm}[hbt!]
\floatname{algorithm}{Mechanism}
  \caption{$\Phi$-divergence pairing mechanism with a scoring function $K$ for two agents, $\mathcal{M}^{\Phi,K}$}\label{alg:fmechansim}
  \begin{algorithmic}[1]
    \REQUIRE   A report profile  $(\hat{\mathbf{x}}, \hat{\mathbf{y}})$ where both Alice and Bob submit report for all $m\ge 2$ tasks.
    \ENSURE A convex function $\Phi:[0,\infty)\to \mathbb{R}$, its conjugate $\Phi^*$, and a  scoring function $K:\X\times\Y\to \dom(\Phi^*)\subseteq \mathbb{R}$.
    \STATE For Alice, arbitrarily pick three tasks $b$, $p$ and $q$ where $p$ and $q$ are distinct. {We call $b$ the \emph{bonus task}, $p$ the \emph{penalty task to Alice}, and $q$ the \emph{penalty task to Bob}.}
    \STATE Based on Alice's reports on $b$ and $p$ ($\hat{x}_b$ and $\hat{x}_{p}$) and Bob's reports on $b$ and $q$ ($\hat{y}_b$ and $\hat{y}_{q}$), the payment to Alice is
\begin{equation}\label{eq:payment}
M_A^{\Phi,K}(\hat{\mathbf{x}}, \hat{\mathbf{y}}) \triangleq K\left(\hat{x}_{b}, \hat{y}_{b}\right)-\Phi^*\left(K\left(\hat{x}_{p}, \hat{y}_{q}\right)\right).
\end{equation}
    \STATE The payment of Bob is defined similarly.
  \end{algorithmic}
\end{algorithm}

To simplify the notion, we use $u_A$ or $u_A(\bm{\theta},P,K)$ to denote the ex-ante payment to Alice under a strategy profile $\bm{\theta}$ and a joint signal distribution $P$ in pairing mechanism with a scoring function $K$.

In general, the truthfulness guarantees of Mechanism~\ref{alg:fmechansim} depends on the degeneracy of Alice's and Bob's signal distribution $P$ and convex function $\Phi$.  In this paper, we consider three different conditions which will be used in the statement of our results.
\begin{assumption}\label{ass:degenerate}
In this paper, we consider the following four different settings.
\begin{enumerate}
    \item no assumption;
    \item $P_{X,Y}$ is stochastic relevant;\label{item:degenerate2}
    \item Besides the above conditions, $\X$ and $\Y$ are finite sets, $\Phi$ is strictly convex and differentiable, and $\Phi^*$ is strictly convex.\label{item:degenerate3}
\end{enumerate}
\end{assumption}
\subsection{Obtaining a Good Scoring Function}
The $\Phi$-pairing mechanism $\mathcal{M}^{\Phi,K}$ is not stand-alone mechanism for information elicitation, because it requires a scoring function $K$ as a parameter.  We will see shortly in Sect.~\ref{sec:known_prior} and \ref{sec:free}, the truthfulness guarantees of the pairing mechanism depends on the quality of the scoring function.  In this paper, we consider three different models for mechanism designers to estimate good scoring functions which are discussed in the rest of the sections:

\begin{description}
\item[Direct access of $K^\star_{P, \Phi}$]  In Sect.~\ref{sec:known_prior}, we first consider the mechanism knows a $(P,\Phi)$-ideal scoring function $K^\star_{P, \Phi}$.  Note that if the mechanism knows the prior $P$, it can compute the $(P,\Phi)$-ideal scoring function, but the converse is not necessarily true.
\item[General reduction to a learning problem]  In Sect.~\ref{sec:free}, besides the reports from Alice and Bob, mechanism may exploit Alice and Bob's previous scoring function and other side information.  For example the joint distribution between Alice and Bob can be approximated by some parametric model, say joint Gaussian distributions.  We introduce our framework (Mechanism~\ref{alg:reduction}) that reduces the problem into a learning problem.
\item[Estimation from samples]  Finally, in the multi-task setting, if Alice and Bob truthfully report their signals, it is possible to estimate the $(P,\Phi)$-ideal scoring function from those reports.  However, the mechanism needs to incentive them to be truthful.   In Sect.~\ref{sec:learning}, we propose two learning methods to estimate good scoring functions.  Combining them with our framework (Mehcanism~\ref{alg:reduction}), we can have detail-free $\epsilon$-strongly truthful mechanisms with high probability.
\end{description}

\section{Pairing Mechanisms in the Known Prior Setting}\label{sec:known_prior}
If the the mechanism $\mathcal{M}^{\Phi, K^\star}$ has an $(P,\Phi)$-ideal scoring function $K^\star$ where $P$ is the joint distribution to Alice's and Bob's signals, the mechanism has the following properties.  We defer the proof to Sect.~\ref{sec:lem}.
\begin{theorem}\label{thm:truth}
Let an integer $m$ be greater than $2$, a functional $\Phi$ be a continuous convex function with $[0,\infty)\subseteq \dom(\Phi)$, $\mathbb{P}$ with $P_{X,Y}$ be a common prior between Alice and Bob satisfying Assumption~\ref{ass:apriori}.  Let  $\bm{\tau}$ be the truth-telling strategy profile, and  $K^\star$ be a $(P,\Phi)$-ideal scoring function.

The $\Phi$-pairing mechanism with $K^\star$, $\mathcal{M}^{\Phi, K^\star}$ has the following properties: For any strategy profile $\bm{\theta}$, \footnote{There are some minor details when $\X$ and $\Y$ are not finite set.  Here we require $\bm{\theta}$ to have finite $\int H\,d\theta_A \,d\theta_B dP_{X,Y}$,and $\int \Phi^*(H)\,d\theta_A\,d\theta_B dP_X P_Y$.} \begin{equation}\label{eq:truth1}
    u_A\left(\bm{\theta},P,H\right)\le u_A\left(\bm{\tau},P,H\right).
\end{equation}
Furthermore, under the four conditions in Assumption~\ref{ass:degenerate} respectively, the mechanism $\mathcal{M}^{\Phi, K^\star}$ is 
\begin{enumerate}
    \item truthful,
    \item informed-truthful, or
    \item strongly truthful.
\end{enumerate}
\end{theorem}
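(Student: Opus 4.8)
The plan is to reduce the whole theorem to two facts about $\Phi$-mutual information. First rewrite the ex-ante payment: since the tasks are i.i.d.\ (Assumption~\ref{ass:apriori}) and each report depends only on its own signal, on the bonus task $b$ the report pair is distributed as $\bm{\theta}\circ P_{X,Y}$, while on the penalty tasks $p\neq q$ the pair $(\hat x_p,\hat y_q)$ is distributed as the product $Q_X\times Q_Y$ of the report marginals of $\theta_A$ and $\theta_B$; hence by \eqref{eq:payment},
\[
u_A(\bm{\theta},P,K^\star)=\E_{\bm{\theta}\circ P_{X,Y}}\!\left[K^\star\right]-\E_{Q_X\times Q_Y}\!\left[\Phi^*(K^\star)\right].
\]
By the variational representation (Theorem~\ref{thm:fdivergenceconjugate}) the right-hand side is at most $D_\Phi(\bm{\theta}\circ P_{X,Y}\,\|\,Q_X\times Q_Y)$, the $\Phi$-mutual information of the \emph{reports}. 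At $\bm{\theta}=\bm{\tau}$ the reports are the signals, $Q_X\times Q_Y=P_XP_Y$, and because $K^\star(x,y)\in\partial\Phi(\ratio_P(x,y))$ is precisely a subgradient at the Radon--Nikodym derivative $dP_{X,Y}/d(P_XP_Y)$, the equality clause of Theorem~\ref{thm:fdivergenceconjugate} makes the bound tight, so $u_A(\bm{\tau},P,K^\star)=D_\Phi(P_{X,Y}\|P_XP_Y)$. Finally, $\bm{\theta}\circ P_{X,Y}$ is obtained from $P_{X,Y}$ by applying the independent Markov kernels $\theta_A$ to $X$ and $\theta_B$ to $Y$, which send $P_XP_Y$ to $Q_X\times Q_Y$; a two-stage data-processing inequality for $\Phi$-divergence (applying $\theta_A$ first, then $\theta_B$), which I would isolate as a lemma and prove from joint convexity of $D_\Phi$ by writing each stage's new conditional as a mixture of old ones, gives $D_\Phi(\bm{\theta}\circ P_{X,Y}\|Q_X\times Q_Y)\le D_\Phi(P_{X,Y}\|P_XP_Y)$. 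Chaining the three facts proves \eqref{eq:truth1}; Bob's inequality is symmetric.

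Given \eqref{eq:truth1}, the three numbered items are a matter of tracking strictness. Item~1 (truthful) is immediate: \eqref{eq:truth1} and its symmetric version say $\bm{\tau}$ maximizes each agent's ex-ante payment, and in particular $\tau_A$ is a best response to $\tau_B$, so $\bm{\tau}$ is a Bayes--Nash equilibrium. For Item~2 (informed-truthful), if $\bm{\theta}$ is oblivious---say $\theta_A$ ignores the signal---then on the bonus task $\hat x_b$ is a signal-independent draw, hence independent of $\hat y_b$, so $\bm{\theta}\circ P_{X,Y}=Q_X\times Q_Y$ and the variational bound collapses to $D_\Phi(Q_X\times Q_Y\|Q_X\times Q_Y)=\Phi(1)$. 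Meanwhile $u_A(\bm{\tau},P,K^\star)=\E_{P_XP_Y}[\Phi(\ratio_P)]\ge\Phi(\E_{P_XP_Y}[\ratio_P])=\Phi(1)$ by Jensen, and stochastic relevance (Definition~\ref{def:nondegenerate}) forces $\ratio_P$ to be non-constant (otherwise $X\perp Y$), so the inequality is strict for a non-degenerate convex $\Phi$; hence $u_A(\bm{\theta},P,K^\star)\le\Phi(1)<u_A(\bm{\tau},P,K^\star)$.

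Item~3 (strongly truthful) is the heart of the matter and is where the extra hypotheses of the third setting in Assumption~\ref{ass:degenerate} are spent. Suppose $u_A(\bm{\theta},P,K^\star)=u_A(\bm{\tau},P,K^\star)$; then every inequality in the chain is tight, in particular both stages of the data-processing step. Strict convexity of $\Phi$ makes $\pi\mapsto D_\Phi(\pi\|P_Y)$ strictly convex on distributions, so tightness of the $\theta_A$-stage forces, for each report value $\hat x$, all posteriors $P_{Y\mid X=x}$ with $\Pr[X=x\mid\hat X=\hat x]>0$ to coincide; by stochastic relevance these are pairwise distinct, so $\hat X$ determines $X$, and since $\X$ is finite and $P_X$ has full support the resulting map is a bijection, i.e.\ $\theta_A$ is a permutation. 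Re-running the argument on the $\theta_B$-stage---where the posteriors $P_{\hat X\mid Y=y}$ are just a relabeling by the now-known permutation $\theta_A$ of the stochastically relevant family $P_{X\mid Y=y}$, hence still pairwise distinct---forces $\theta_B$ to be a permutation as well. Thus $\bm{\theta}$ is a permutation strategy profile; contrapositively, any non-permutation profile satisfies $u_A(\bm{\theta},P,K^\star)<u_A(\bm{\tau},P,K^\star)$.

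I expect the main obstacle to be precisely this equality analysis for Item~3: passing cleanly from ``both mixing steps are tight'' to ``each reporting channel is a deterministic bijection'' using only the stated hypotheses, and in particular seeing why differentiability of $\Phi$ and strict convexity of $\Phi^*$ are the natural conditions---they allow one instead to route the strictness through the Fenchel--Young equality clause of Theorem~\ref{thm:fdivergenceconjugate}, which forces $\ratio_{\bm{\theta}\circ P}=\ratio_P$ on the support of the reports. A secondary, more routine, hurdle is the measure-theoretic care---absolute continuity of the relevant pairs of measures (which holds because $P_X,P_Y$ have full support) and finiteness of the integrals---needed so that all the $\Phi$-divergences above, and hence \eqref{eq:truth1}, remain valid when $\X$ and $\Y$ are infinite.
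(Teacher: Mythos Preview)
Your argument is correct, but it proceeds by a genuinely different decomposition than the paper's. The paper proves \eqref{eq:truth1} as a special case of a stronger lemma (Lemma~\ref{lem:manipulation}): for \emph{any} scoring function $K$, $u_A(\bm{\theta},P,K)\le D_\Phi(P_{X,Y}\|P_XP_Y)$. The proof applies Jensen's inequality to $\Phi^*$ inside the penalty term---replacing $\E_{\bm\theta}[\Phi^*(K(\hat x,\hat y))\mid x,y]$ by $\Phi^*(L(x,y))$ where $L(x,y)=\E_{\bm\theta}[K(\hat x,\hat y)\mid x,y]$---and then invokes the variational representation directly against the \emph{signal} distributions $P_{X,Y}$ and $P_XP_Y$. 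You instead first apply the variational representation against the \emph{report} distributions $\bm{\theta}\circ P_{X,Y}$ and $Q_X\times Q_Y$, obtaining the $\Phi$-mutual information of the reports as an intermediate quantity, and then close the gap by a two-stage data-processing inequality. Both routes yield the same bound, and your variational step in fact also works for arbitrary $K$, so the generality of Lemma~\ref{lem:manipulation} is not lost.

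The more interesting divergence is in Item~3. The paper (Lemma~\ref{lem:strict}) tracks equality through its Jensen-then-variational chain: strict convexity of $\Phi^*$ forces $K$ to be constant on each strategy's support, and differentiability plus strict convexity of $\Phi$ then pin $K(\hat x,\hat y)=\Phi'(\ratio_P(x,y))$ for every $(\hat x,\hat y)\in S_A(x)\times S_B(y)$; a pigeonhole argument combined with stochastic relevance finishes. Your route instead tracks equality in the data-processing stages via strict convexity of $\pi\mapsto D_\Phi(\pi\|P_Y)$, concluding that $\hat X$ determines $X$ and hence (by finiteness and full support) that $\theta_A$ is a permutation. This is a clean alternative and, as you suspect, appears to use only strict convexity of $\Phi$ rather than all three hypotheses in Condition~\ref{item:degenerate3}. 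What the paper's route buys is the second clause of Lemma~\ref{lem:strict}---the explicit ``conjugated'' relation $K(\pi_A(x),\pi_B(y))=\Phi'(\ratio_P(x,y))$ between the scoring function and the permutation---which is not needed for Theorem~\ref{thm:truth} itself but is conceptually useful downstream. What your route buys is a cleaner explanation of why the $\Phi$-pairing mechanism works: the ex-ante payment is sandwiched by the $\Phi$-mutual information of the reports, which is itself dominated by that of the signals.
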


In the following example, we show how Mechanism~\ref{alg:fmechansim} with a $(P,\Phi)$-ideal scoring function works, and illustrate the difference between informed-truthful and strongly truthful.

\begin{example}[continued]
On each day $s$, Alice and Bob learn their commute time $(x_s, y_s)\in \R^{2}$.  We want to use Mechanism~\ref{alg:fmechansim} to collect those commute time, and we know $P_{X,Y}$. 

When $\Phi(a) = \frac{1}{2}|a-1|$, a $(P,\Phi)$-ideal scoring function is $K^\star(x,y) = \mathbb{I}[G(x,y)>R]-1/2$ which can be represented by an ellipse $\Gamma$.
After Alice and Bob report their every day's commute time $\hat{\mathbf{x}},\hat{\mathbf{y}}$, the mechanism arbitrarily pick a bonus day $b$, and two distinct penalty days $p$ and $q$.  Then it pays Alice with $1$ if their bonus day reports are in the ellipse $\Gamma$ and their penalty days reports are not in $\Gamma$.  As seen in Fig.~\ref{fig:example}, $\Gamma$ is skew in diagonal, so if Alice's and Bob's reports on the bonus day are more correlated they can get more payment.  

However, if Alice receives an extremely large value (e.g. $x_s= 20$) such that she knows the scoring function $K^\star$ is $-1/2$ for certain regardless of Bob's report (cf. Figure~\ref{fig:example}), Alice can misreport her signal (e.g. $\hat{x}_s = 2$ when $x_s\ge 20$) without changing her expected utility.  Therefore the $\Phi$-pairing mechanism with $\Phi(a) = \frac{1}{2}|a-1|$ is not strongly truthful.  Additionally, truth-telling is not even a strict Bayesian Nash equilibrium.

To prevent Alice from truncating signals, instead of $\Phi(a) = \frac{1}{2}|a-1|$ we can take other strictly convex $\Phi$.  For example if $\Phi(a) = a\log a$, the ideal scoring function is a quadratic function and above-mentioned strategy cannot trivially hold.  In Theorem~\ref{thm:truth} we prove this in the finite signal spaces setting. 
\end{example}
\begin{remark}
Although the $\Phi$-pairing mechanism with a $(P,\Phi)$-ideal scoring function has many desirable properties shown in Theorem~\ref{thm:truth}, such a mechanism is not detail-free.  Furthermore, in the detail-free setting where mechanisms only access Alice's and Bob's reports, it is impossible to have a mechanism which has truth-telling strategy profile as the uniquely best equilibrium.  Informally, in the detail-free setting a mechanism $\mathcal{M}$ cannot distinguish between the following two situations: 
1) Alice and Bob's signals joint distribution is $P$ and their strategy profile is a permutation $\bm{\theta}$;
2)  Alice and Bob's signals joint distribution is $\bm{\theta}\circ P$ and their strategy profile is the truth-telling strategy, because their reports are generated from the same distribution $\bm{\theta}\circ P$ in both cases.  Therefore,
\begin{equation}\label{eq:associative}
    u_A(\bm{\theta}; P, \mathcal{M}) = u_A(\bm{\tau};\bm{\theta}\circ P, \mathcal{M}).
\end{equation}
Suppose  the ex-ante payment under the truth-telling strategy profile and $P$ is strictly higher than the ex-ante payment under a permutation strategy profile $\bm{\theta}$.  Then the ex-ante payment under a permutation strategy profile $\bm{\theta}^{-1}$ and the joint signal distribution $\bm{\theta}\circ P$, $u_A(\bm{\theta}^{-1};\bm{\theta}\circ P, \mathcal{M}) = u_A(\bm{\tau};P, \mathcal{M})$ is strictly higher than the ex-ante payment under truth-telling strategy profile  $u_A(\bm{\tau};\bm{\theta}\circ P, \mathcal{M}) = u_A(\bm{\theta};P, \mathcal{M})$.
This argument is trivially true when $\mathcal{M}$ is a $\Phi$-pairing algorithm and the  scoring function is a function of Alice's and Bob's reports.  For general detail-free mechanisms, the reader may refer to Sect.~8 of \citet{kong2019information}.  
\end{remark}
\section{Main Technical Lemmas}\label{sec:lem}
To prove Theorem~\ref{thm:truth}, we use the following lemmas which are also important in the rest of the paper.

We first show the ex-ante payment under the truth-telling strategy profile in the $\Phi$-pairing mechanism with $(P,\Phi)$-ideal scoring function is the $\Phi$-mutual information between Alice's and Bob's signals.
\begin{restatable}[Truth-telling]{lem}{truthtelling}\label{lem:truthtelling}
If $K^\star$ is a ($P_{X,Y}, \Phi$)-ideal scoring function,
$$u_{A}(\bm{\tau},P,K^\star) = D_\Phi(P_{X,Y}\| P_X P_Y).$$
\end{restatable}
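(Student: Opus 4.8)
The plan is to evaluate the ex-ante payment under truth-telling directly and recognize it as the tight instance of the variational representation in Theorem~\ref{thm:fdivergenceconjugate}.

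First I would unpack the definitions. Under the truth-telling profile $\bm{\tau}$ every report equals the corresponding signal, so the payment to Alice in Mechanism~\ref{alg:fmechansim} is $M_A^{\Phi,K^\star}(\mathbf{x},\mathbf{y}) = K^\star(x_b,y_b) - \Phi^*\!\left(K^\star(x_p,y_q)\right)$ for the chosen tasks $b$, $p$, $q$ with $p\neq q$. Taking the expectation over the signal profile $(\mathbf{X},\mathbf{Y})\sim P_{X,Y}^m$ (Assumption~\ref{ass:apriori}) and using linearity, I split the payment into the bonus term and the penalty term. Since the $m$ tasks are i.i.d.\ from $P_{X,Y}$, the bonus pair $(X_b,Y_b)$ has law $P_{X,Y}$; and since $p\neq q$, the coordinates $X_p$ and $Y_q$ come from two distinct, independent tasks, so $(X_p,Y_q)$ has law $P_X P_Y$ (this uses only that the $p$-th task's $X$-marginal is $P_X$ and the $q$-th task's $Y$-marginal is $P_Y$, regardless of whether $b$ coincides with $p$ or $q$). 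Hence
$$u_A(\bm{\tau},P,K^\star) = \E_{(x,y)\sim P_{X,Y}}[K^\star(x,y)] - \E_{(x,y)\sim P_X P_Y}\left[\Phi^*(K^\star(x,y))\right].$$

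Next I would invoke Theorem~\ref{thm:fdivergenceconjugate} with $P=P_{X,Y}$ and $Q=P_X P_Y$. Note $P_{X,Y}\ll P_X P_Y$ since $P_X,P_Y$ have full support, so the Radon--Nikodym derivative $P_{X,Y}/(P_X P_Y)=\ratio_P$ is well-defined, and moreover $\partial\Phi(\ratio_P(x,y))\subseteq\dom(\Phi^*)$ by Fenchel equality, so an ideal $K^\star$ is a legitimate scoring function. The theorem gives
$$D_\Phi(P_{X,Y}\|P_X P_Y) = \sup_{k:\X\times\Y\to\dom(\Phi^*)}\left\{\E_{P_{X,Y}}[k]-\E_{P_X P_Y}[\Phi^*(k)]\right\},$$
with equality attained exactly when $k(x,y)\in\partial\Phi(\ratio_P(x,y))$ almost everywhere with respect to $P_X P_Y$. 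By Definition~\ref{def:score}, a $(P_{X,Y},\Phi)$-ideal scoring function satisfies $K^\star(x,y)\in\partial\Phi(\ratio_P(x,y))$ everywhere, hence a fortiori almost everywhere on $P_X P_Y$; thus substituting $k=K^\star$ makes the displayed expression for $u_A(\bm{\tau},P,K^\star)$ equal to $D_\Phi(P_{X,Y}\|P_X P_Y)$, which is the claim.

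There is no deep obstacle; the only points requiring care are bookkeeping. I need to confirm that the penalty coordinates are genuinely drawn from the product measure (the sole place $p\neq q$ enters), to check the integrability conditions that justify splitting the expectation and applying the variational formula (exactly the finiteness caveat in the footnote of Theorem~\ref{thm:truth}), and, in the countable or continuous case, to observe that the pointwise membership $K^\star\in\partial\Phi(\ratio_P)$ implies the ``almost everywhere on $P_X P_Y$'' hypothesis needed for the equality case of Theorem~\ref{thm:fdivergenceconjugate}.
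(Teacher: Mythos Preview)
Your proposal is correct and follows essentially the same approach as the paper: compute the ex-ante payment under truth-telling by identifying the laws of the bonus and penalty pairs as $P_{X,Y}$ and $P_X P_Y$ respectively, then recognize the resulting expression as the equality case of the variational representation (Theorem~\ref{thm:fdivergenceconjugate}) because $K^\star\in\partial\Phi(\ratio_P)$. The paper's proof is terser but structurally identical; your extra remarks on absolute continuity, integrability, and the almost-everywhere condition are sound bookkeeping that the paper leaves implicit.
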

Moreover, if $P_{X,Y}$ is stochastic relevant, 
$D_\Phi(P_{X,Y}\| P_X P_Y)>0$.

Then we show any deviation from the truth-telling strategy profile or an ideal scoring function cannot improve Alice (and Bob's) ex-ante payment.  The proof uses the variational representation of $\Phi$-divergence (Theorem~\ref{thm:fdivergenceconjugate}).
\begin{restatable}[Manipulation in strategies and scoring functions]{lem}{manipulation}
\label{lem:manipulation}
For any strategy profile $\bm{\theta}$ and scoring function $K$,\footnote{There are some minor details when $\X$ and $\Y$ are not finite set.  Here we require $K$ and $\bm{\theta}$ to have finite $\int K\,dP_{X,Y}$, $\int \Phi^*(K)d(P_{X}P_Y)$, $\int K\,d\theta_A \,d\theta_B dP_{X,Y}$ ,and $\int \Phi^*(K)\,d\theta_A\,d\theta_B dP_X P_Y$.} 
$$u_{A}(\bm{\theta},P,K) \le D_\Phi(P_{X,Y}\| P_X P_Y).$$
\end{restatable}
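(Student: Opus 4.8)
The plan is to recognize $u_A(\bm\theta,P,K)$ as a value of the variational functional in Theorem~\ref{thm:fdivergenceconjugate} for the pair $(P_{X,Y},P_XP_Y)$, after averaging the scoring function $K$ over the strategy channel.

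\textbf{Step 1: identify the laws of the two report pairs.} Let $\theta_A\otimes\theta_B$ denote the product channel on $\X\times\Y$ induced by $\bm\theta$, and write $Q_{X,Y}:=\bm\theta\circ P_{X,Y}$, with marginals $Q_X=\theta_A\circ P_X$ and $Q_Y=\theta_B\circ P_Y$. The bonus pair $(\hat x_b,\hat y_b)$ consists of the two reports on the \emph{same} task $b$, hence is distributed as $Q_{X,Y}$. The penalty pair $(\hat x_p,\hat y_q)$ uses \emph{distinct} tasks $p\neq q$; since the tasks are i.i.d.\ (Assumption~\ref{ass:apriori}) and each agent's strategy acts independently task by task, $\hat x_p$ and $\hat y_q$ are independent with laws $Q_X$ and $Q_Y$, i.e.\ $(\hat x_p,\hat y_q)\sim Q_XQ_Y$. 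None of this depends on which $b,p,q$ the mechanism picked, so
$$u_A(\bm\theta,P,K)=\E_{(\hat x,\hat y)\sim Q_{X,Y}}[K(\hat x,\hat y)]-\E_{(\hat x,\hat y)\sim Q_XQ_Y}[\Phi^*(K(\hat x,\hat y))].$$

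\textbf{Step 2: average out the strategy.} Put $\tilde K(x,y):=\E_{\hat x\sim\theta_A(x,\cdot)}\,\E_{\hat y\sim\theta_B(y,\cdot)}[K(\hat x,\hat y)]$. Since $\dom(\Phi^*)$ is an interval and $\tilde K(x,y)$ is an expectation of values of $K$ in $\dom(\Phi^*)$, $\tilde K$ again takes values in $\dom(\Phi^*)$. Unwinding $Q_{X,Y}$ and $Q_XQ_Y$ into ``draw the signal from $P$, then apply the strategy'' and using Fubini gives $\E_{Q_{X,Y}}[K]=\E_{P_{X,Y}}[\tilde K]$, while Jensen's inequality for the convex $\Phi^*$ gives
$$\E_{Q_XQ_Y}[\Phi^*(K)]=\E_{(x,y)\sim P_XP_Y}\E_{\hat x,\hat y}[\Phi^*(K(\hat x,\hat y))]\ \ge\ \E_{P_XP_Y}[\Phi^*(\tilde K)].$$
Combining, $u_A(\bm\theta,P,K)\le \E_{P_{X,Y}}[\tilde K]-\E_{P_XP_Y}[\Phi^*(\tilde K)]$, and the right-hand side is at most $D_\Phi(P_{X,Y}\|P_XP_Y)$ by Theorem~\ref{thm:fdivergenceconjugate} applied with the test function $\tilde K$. (Equivalently, Step~2 is the data-processing inequality for $\Phi$-divergence under the channel $\theta_A\otimes\theta_B$, using $(\theta_A\otimes\theta_B)(P_XP_Y)=Q_XQ_Y$.)

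\textbf{Where the work is.} The argument is short; the only genuine care is measure-theoretic, exactly as flagged in the lemma's footnote. One must check that all four integrals listed there are finite, so that $\tilde K$ is an admissible test function ($\tilde K\in L^1(P_{X,Y})$, and $\Phi^*\circ\tilde K\in L^1(P_XP_Y)$ — the latter lower bound coming from the affine minorant of the proper convex function $\Phi^*$), and so that the Fubini interchanges between the signal expectation, the strategy randomness, and any internal randomness of $\mathcal{M}$ in choosing $b,p,q$ are legitimate. Under the stated integrability hypotheses these are routine, so there is no real obstacle; the crux is the single Jensen step that replaces the mechanism's penalty term, computed against the \emph{reported} marginals $Q_XQ_Y$, by $\Phi^*(\tilde K)$ against the \emph{true} prior's marginals $P_XP_Y$.
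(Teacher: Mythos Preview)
Your proof is correct and follows essentially the same route as the paper's: define the averaged scoring function $\tilde K(x,y)=\E_{\hat x\sim\theta_A(x,\cdot),\,\hat y\sim\theta_B(y,\cdot)}[K(\hat x,\hat y)]$ (the paper calls it $L$), apply Jensen to the convex $\Phi^*$ on the penalty term, and then invoke the variational representation (Theorem~\ref{thm:fdivergenceconjugate}) with test function $\tilde K$. Your framing via the report-pair laws $Q_{X,Y}$ and $Q_XQ_Y$ and the data-processing remark are nice conceptual packaging, but the substance is identical.
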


Note that combining these two lemmas we have an even stronger result than inequality~\eqref{eq:truth1} which is a key tool in this paper: 
For any scoring function $K$ and strategy profile $\bm{\theta}$, 
\begin{equation}\label{eq:upperbound}
     u_A\left(\bm{\theta},\mathbb{P},K\right)\le u_A\left(\bm{\tau},\mathbb{P},K^\star\right).
\end{equation}

\begin{restatable}[Oblivious strategy]{lem}{oblivious}\label{lem:oblivious}  If $\bm{\theta}$ is an oblivious strategy profile, for any scoring function $K$
$$u_{A}(\bm{\theta}, P, K) \le 0.$$
\end{restatable}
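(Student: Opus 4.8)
The plan is to compute the ex-ante payment directly from the definition of the pairing mechanism and exploit the defining property of an oblivious strategy profile: at least one agent's report does not depend on their signal. Recall from Eqn.~\eqref{eq:payment} that Alice's payment is $K(\hat{x}_b,\hat{y}_b) - \Phi^*(K(\hat{x}_p,\hat{y}_q))$, where $b$ is the bonus task and $p,q$ are the (distinct) penalty tasks. Taking the ex-ante expectation, the first term depends on Alice's report $\hat{X}_b$ and Bob's report $\hat{Y}_b$ on the \emph{same} task $b$, while the second term depends on Alice's report $\hat{X}_p$ and Bob's report $\hat{Y}_q$ on \emph{distinct} tasks $p\neq q$.

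First I would argue that, under an oblivious strategy profile, the pair $(\hat{X}_b,\hat{Y}_b)$ and the pair $(\hat{X}_p,\hat{Y}_q)$ have the \emph{same} joint distribution. Indeed, suppose Alice is the uninformed agent, so $\theta_A(x,\hat x)=\theta_A(x',\hat x)$ for all $x,x'$; then $\hat X_b$ is independent of her signal $X_b$, hence independent of everything (since tasks are i.i.d.\ and strategies act task-wise), and in particular independent of $\hat Y_b$. So $(\hat X_b,\hat Y_b)$ is distributed as the product of the marginal of $\hat X_b$ and the marginal of $\hat Y_b$; the latter equals the marginal of $\hat Y_q$ (tasks are i.i.d.), and the former equals the marginal of $\hat X_p$. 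Meanwhile $(\hat X_p, \hat Y_q)$ — coming from distinct, hence independent, tasks — is \emph{also} the product of those same two marginals. The case where Bob is the uninformed agent is symmetric. Therefore, writing $\nu$ for this common distribution of reports on the ``bonus'' pair and on the ``penalty'' pair,
\begin{equation}\label{eq:obl_same_dist}
u_A(\bm\theta,P,K) = \E_{(\hat x,\hat y)\sim\nu}[K(\hat x,\hat y)] - \E_{(\hat x,\hat y)\sim\nu}[\Phi^*(K(\hat x,\hat y))].
\end{equation}

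Next I would bound \eqref{eq:obl_same_dist} pointwise by the Young--Fenchel inequality. Since $\Phi$ is convex with $\Phi(1)=0$ we have $\Phi^*(b)\ge 1\cdot b - \Phi(1) = b$ for every $b\in\dom(\Phi^*)$; equivalently $k - \Phi^*(k)\le 0$ for all $k$. Plugging $k = K(\hat x,\hat y)$ into the integrand of \eqref{eq:obl_same_dist} gives that the integrand is $\le 0$ everywhere, so $u_A(\bm\theta,P,K)\le 0$, as claimed; the symmetric argument handles Bob. (Alternatively, one can invoke the variational representation Theorem~\ref{thm:fdivergenceconjugate}: the right-hand side of \eqref{eq:obl_same_dist} is exactly the variational objective evaluated with the two distributions both equal to $\nu$, which is at most $D_\Phi(\nu\|\nu)=0$.) The only real care needed — and the one place the ``minor details'' footnote in the non-finite case bites — is in verifying the integrability conditions so that the expectations in \eqref{eq:obl_same_dist} are well-defined and the pointwise bound can be integrated; this is exactly the hypothesis imposed on $\bm\theta$ and $K$ in the lemma's footnote, so it is assumed rather than proved. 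The main conceptual step is the distributional identity establishing \eqref{eq:obl_same_dist}; once that is in place the bound is immediate.
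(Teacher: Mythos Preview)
Your proof is correct and follows essentially the same approach as the paper: both arguments first observe that under an oblivious strategy the bonus-task report pair and the penalty-task report pair have the same distribution (the paper writes this out explicitly via $\mu_A\otimes\nu_B$, you phrase it as a distributional identity), and then both apply the Young--Fenchel inequality at $a=1$ using $\Phi(1)=0$ to conclude $k-\Phi^*(k)\le 0$ pointwise. The alternative you mention via $D_\Phi(\nu\|\nu)=0$ is a nice touch but not in the paper.
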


\begin{restatable}{lem}{eqauality}\label{lem:strict}
Moreover, given Conditions~\ref{item:degenerate3} in Assumption~\ref{ass:degenerate}, the equality in \eqref{eq:upperbound} for Alice or Bob occurs if and only if
\begin{enumerate}
    \item $\bm{\theta} = (\pi_A, \pi_B)$ which is a permutation strategy profile, and
    \item For all $x\in \X$ and $y\in \Y$, $K(\pi_A(x),\pi_B(y)) = \Phi'\left(\ratio(x,y)\right) = \Phi'\left(\frac{P_{X,Y}(x,y)}{P_{X}(x)P_{Y}(y)}\right)$.
\end{enumerate}
\end{restatable}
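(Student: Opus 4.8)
The plan is to open up the proof of Lemma~\ref{lem:manipulation} and track, inequality by inequality, when each becomes an equality. Write $\sigma \triangleq \theta_A\otimes\theta_B$ for the product Markov kernel the strategy profile induces on $\X\times\Y$. Since which tasks are designated $b,p,q$ does not change the ex-ante payment, and $p\ne q$ makes $\hat x_p,\hat y_q$ independent,
\[
u_A(\bm{\theta},P,K)=\E_{\sigma P_{X,Y}}[K]-\E_{\sigma(P_X P_Y)}[\Phi^*(K)].
\]
Introduce the pulled-back scoring function $\tilde K(x,y)\triangleq \E_{(\hat x,\hat y)\sim \theta_A(x,\cdot)\otimes\theta_B(y,\cdot)}[K(\hat x,\hat y)]$, which still takes values in the convex set $\dom(\Phi^*)$. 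The tower rule turns the first term into $\E_{P_{X,Y}}[\tilde K]$; Jensen's inequality inside the second term gives $\E_{\sigma(P_X P_Y)}[\Phi^*(K)]\ge\E_{P_X P_Y}[\Phi^*(\tilde K)]$; and the variational representation (Theorem~\ref{thm:fdivergenceconjugate}) gives $\E_{P_{X,Y}}[\tilde K]-\E_{P_X P_Y}[\Phi^*(\tilde K)]\le D_\Phi(P_{X,Y}\|P_X P_Y)=u_A(\bm{\tau},P,K^\star)$ by Lemma~\ref{lem:truthtelling}. Hence \eqref{eq:upperbound} holds with equality for Alice if and only if both steps are tight: since $\Phi^*$ is strictly convex and $P_X,P_Y$ have full support, tightness of the Jensen step means $K$ is constant on $\supp\theta_A(x,\cdot)\times\supp\theta_B(y,\cdot)$ for every $(x,y)$; and since $\Phi$ is differentiable and $P_X P_Y$ has full support, tightness of the variational step means $\tilde K(x,y)=\Phi'(\ratio_P(x,y))$ for every $(x,y)$.

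Next I would convert these two equality conditions into the structural statement. If a report $\hat x$ lay in $\supp\theta_A(x,\cdot)\cap\supp\theta_A(x',\cdot)$ with $x\ne x'$, then picking any $\hat y\in\supp\theta_B(y,\cdot)$ for each $y\in\Y$, the first condition forces $\tilde K(x,y)=K(\hat x,\hat y)=\tilde K(x',y)$, hence $\Phi'(\ratio_P(x,y))=\Phi'(\ratio_P(x',y))$; strict convexity of $\Phi$ makes $\Phi'$ injective, so $\ratio_P(x,\cdot)=\ratio_P(x',\cdot)$, contradicting stochastic relevance (Definition~\ref{def:nondegenerate}). Thus the sets $\{\supp\theta_A(x,\cdot)\}_{x\in\X}$ are nonempty and pairwise disjoint inside $\X$; as $\X$ is finite with $|\X|$ of them, each is a singleton, so $\theta_A$ is a deterministic injection $\X\to\X$, i.e.\ a permutation $\pi_A$, and likewise $\theta_B=\pi_B$. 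Then $\tilde K(x,y)=K(\pi_A(x),\pi_B(y))$, and the second condition becomes exactly item~2. Conversely, if $\bm{\theta}=(\pi_A,\pi_B)$ and item~2 holds, every $\supp\theta_A(x,\cdot)$ and $\supp\theta_B(y,\cdot)$ is a singleton (so the first condition is vacuous) and $\tilde K=\Phi'(\ratio_P)$ makes the variational step tight, so $u_A(\bm{\theta},P,K)=D_\Phi(P_{X,Y}\|P_X P_Y)=u_A(\bm{\tau},P,K^\star)$; the argument for Bob is identical since the ex-ante payment has the same form.

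I expect the main obstacle to be that middle step: extracting a genuine permutation strategy profile — a deterministic bijection — from the two equality conditions. Stochastic relevance only buys pairwise disjointness of the strategies' report supports, and it is the counting argument in a finite signal space ($|\X|$ pairwise-disjoint nonempty subsets of a size-$|\X|$ set must all be singletons) that upgrades disjointness to determinism and bijectivity; this is precisely why Condition~\ref{item:degenerate3} insists $\X,\Y$ be finite. It is also worth being careful about which convexity hypothesis does which job: strict convexity of $\Phi^*$ is what makes tightness of the inner Jensen step equivalent to $K$ being literally constant on each product of supports, differentiability of $\Phi$ is what pins $\tilde K$ down to the single value $\Phi'(\ratio_P)$ rather than a subgradient interval, and strict convexity of $\Phi$ supplies the injectivity of $\Phi'$ used against stochastic relevance.
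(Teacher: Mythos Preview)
Your proof is correct and follows essentially the same approach as the paper: you unpack the chain in Lemma~\ref{lem:manipulation} into the Jensen step and the variational step, use strict convexity of $\Phi^*$ and differentiability of $\Phi$ to extract the two equality conditions, then use overlapping supports together with injectivity of $\Phi'$ to contradict stochastic relevance, finishing with the finite-set counting argument to force each support to a singleton. Your write-up is in fact a bit more explicit than the paper's about the counting/pigeonhole step and about which hypothesis drives which implication, but the logical skeleton is identical.
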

Informally, Lemma~\ref{lem:strict} shows if the pair of a strategy profile and a scoring function $(\bm{\theta},K)$ have \eqref{eq:upperbound} equal only if there is a ``conjugated'' structure between the strategy and the scoring function.  The proof uses the pigeonhole principle on the finite signal spaces and shows if the equality holds under a non permutation strategy profile, $P$ is not stochastic relevant.

\subsection*{Proof of Theorem~\ref{thm:truth}}
With the above four lemmas, we are ready to prove Theorem~\ref{thm:truth}.
\begin{proof}[Proof of Theorem~\ref{thm:truth}] 
There are four statements to show.

First, \eqref{eq:truth1} is a direct result of \eqref{eq:upperbound}.  Furthermore, \eqref{eq:truth1} proves that truth-telling is a Bayesian Nash equilibrium, and has highest ex-ante payment to Alice.\footnote{Note that without additional assumption the truth-telling is not a strict Bayesian Nash equilibrium.  This is illustrated in the example in Sect.~\ref{sec:known_prior}.}  This shows the mechanism is truthful.

By Lemma~\ref{lem:oblivious}, the ex-ante payment to Alice (and Bob) is non-positive.  Combining this and Lemma~\ref{lem:truthtelling}, we prove the $\Phi$-pairing mechanism with ($P, \Phi$)-ideal scoring function is inform-truthful when $P$ is stochastic relevant.

To show our mechanism is strongly truthful, under Condition~\ref{item:degenerate3} in Assumption~\ref{ass:degenerate}, we use the first part of Lemma~\ref{lem:strict}. If the ex-ante payment under some strategy profile is equal to the ex-ante payment under the truth-telling strategy profile, the strategy profile is a permutation strategy profile.  
\end{proof}

\section{The Pairing Mechanism in the Detail Free Settings}\label{sec:free}  
With Sect.~\ref{sec:lem}, we can see that to achieve the truthfulness guarantees, it suffices to have a ``good'' scoring function.  That is if the ex-ante payment to Alice under the truth-telling strategy profile is close to the $\Phi$-mutual information between Alice's and Bob's signals, by \eqref{eq:upperbound}, the ex-ante payment under an untruthful-strategy is less than the ex-ante payment under the  truth-telling strategy profile.

In Sect.~\ref{sec:approx} we formalize the notions of a \emph{good} scoring function and of the  \emph{accuracy} of a learning algorithm $\mathcal{L}$ for scoring functions.  In Sect.~\ref{sec:framework}, we state our main result, Theorem~\ref{thm:framework}, which reduces the mechanism design problem to a learning problem for an ideal scoring function, and provides some intuition about the proof of the theorem.

\subsection{Accuracy of Scoring Rules and Learning Algorithms}\label{sec:approx}
Now we define a \emph{good} scoring function, and the \emph{accuracy} of a learning algorithm $\mathcal{L}$.
Given $\Phi$, a prior $P_{X,Y}$ and $\epsilon>0$, we say that a scoring function $K$ is \defn{$\epsilon$-ideal on ($P_{X,Y}, \Phi$)}, if for Alice
\begin{equation}\label{eq:approximated}
    u_{A}(\bm{\tau},P, K) \ge u_{A}(\bm{\tau},P, K^\star_{P, \Phi})-\epsilon = D_\Phi(P_{X,Y}\| P_X P_Y)-\epsilon,
\end{equation}
and the similar inequality holds for Bob.  
Additionally, For $m_L\in \mathbb{N}$, we say a \emph{learning algorithm for scoring functions with $m_L$ samples}, as a function from $(\mathbf{x}_L, \mathbf{y}_L)\in (\X\times\Y)^{m_L}$ to a scoring function $K$.  Given $\mathcal{P}$, a set of distributions on $\X\times\Y$, and a function $S_L:\mathbb{R}\times\mathbb{R}\to \mathbb{N}$, we say such a learning algorithm $\mathcal{L}$ is \defn{$(\delta, \epsilon)$-accurate on $(\mathcal{P}, \Phi)$ with $S_L(\delta, \epsilon)$ samples}, if for all $P_{X,Y}\in \mathcal{P}$, $\delta\in (0,1)$, $\epsilon>0$, and $m_L\ge S_L(\delta, \epsilon)$:
$$\Pr_{(\mathbf{x}_L, \mathbf{y}_L)\sim P_{X,Y}^{m_L}}\left[ u_{A}(\bm{\tau}, P, \mathcal{L}(\mathbf{x}_L, \mathbf{y}_L)) > D_\Phi(P_{X,Y}\| P_X P_Y)-\epsilon\right]\ge 1-\delta.$$
That is given $m_L$ i.i.d. samples from $P_{X,Y}$, the probability that the output, $\mathcal{L}(\mathbf{x}_L, \mathbf{y}_L)$, is $\epsilon$-ideal on $(P,\Phi)$ is greater than $1-\delta$.  Note that we require the algorithm $\mathcal{L}$ approximates the ideal scoring \emph{uniformly} on all distributions in $\mathcal{P}$.

\subsection{Pairing Mechanism with Learning Algorithms}\label{sec:framework}
Now we replace a fixed scoring function with an accurate learning algorithm $\mathcal{L}$ in Mechanism~\ref{alg:fmechansim}.  Intuitively, in the detail-free setting, the Mechanism~\ref{alg:reduction} first runs a learning algorithm on Alice's and Bob's report profile to derive a scoring function, and then pays Alice and Bob by Mechanism~\ref{alg:fmechansim}. 
\begin{algorithm}[hbt!]
\floatname{algorithm}{Mechanism}
  \caption{$\Phi$-divergence pairing mechanism with a learning algorithm $\mathcal{M}^{\Phi, \mathcal{L}}$}\label{alg:reduction}
  \begin{algorithmic}[1]
    \ENSURE A convex function $\Phi$, and a learning algorithm $\mathcal{L}$ with $m_L$ samples.    
    \REQUIRE   A report profile $(\hat{\mathbf{x}}, \hat{\mathbf{y}})$ from Alice and Bob on $m$ tasks where $m\ge 2+m_L$.

    \STATE Partition $m$ tasks (arbitrarily) into a set of learning tasks $M_L$ and a set of scoring tasks $M_S$ where $|M_L|\ge m_L$ and $|M_S|\ge 2$.  Let $(\hat{\mathbf{x}}_L,\hat{\mathbf{y}}_L)$ be the reports from Alice and Bob on the learning tasks $M_L$, and $(\hat{\mathbf{x}}_S,\hat{\mathbf{y}}_S)$ be the reports on the scoring tasks.
    \STATE Run the learning algorithm and derive $K_{\rm est} = \mathcal{L}(\hat{\mathbf{x}}_L,\hat{\mathbf{y}}_L)$.
    \STATE Run the $\Phi$-pairing mechanism (Mechanism~\ref{alg:fmechansim}) with the scoring function $K_{\rm est}$, and pay Alice and Bob accordingly.
  \end{algorithmic}
\end{algorithm}
\begin{theorem}\label{thm:framework}  Let $\Phi$ be a continuous convex function with $[0,\infty)\subseteq \dom(\Phi)$, $m_L$ be an integer, $\mathcal{L}$ be a learning algorithm on $m_L$ samples, a function $S_L:\mathbb{R}\times\mathbb{R}\to \mathbb{N}$, and $\mathcal{P}$ be a set of joint distributions on $\X\times \Y$.  

Suppose the common prior between Alice and Bob satisfying Assumption~\ref{ass:apriori} with $P_{X,Y}\in \mathcal{P}$, and $\mathcal{L}$ is $(\delta, \epsilon)$-accurate on $(\mathcal{P}, \Phi)$ with $S_L(\delta, \epsilon)$ samples.  
Under three conditions in Assumption~\ref{ass:degenerate} respectively, Mechanism~\ref{alg:reduction} is
\begin{enumerate}
    \item  $(\delta, \epsilon)$-truthful on $\mathcal{P}$ with a $2+S_L(\delta, \epsilon)$ number of tasks;
    \item  $(\delta, \epsilon)$-informed-truthful on $\mathcal{P}$ with a $2+S_L(\delta, \epsilon)$ number of tasks;
    \item $(\delta, \epsilon)$-strongly truthful on $\mathcal{P}$ with a $2+S_L(\delta, \epsilon)$ number of tasks.
\end{enumerate}
\end{theorem}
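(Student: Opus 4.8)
The plan is to reduce Theorem~\ref{thm:framework} to Theorem~\ref{thm:truth} by conditioning on the (high-probability) event that the learned scoring function $K_{\rm est}$ is $\epsilon$-ideal, and then to exhibit the strongly-truthful mapping $F$ required by Definition~\ref{def:approx_strongly}. First I would fix $P_{X,Y}\in\mathcal{P}$, $\delta\in(0,1)$, $\epsilon>0$, and take $m\ge 2+S_L(\delta,\epsilon)$, so that the partition into $M_L$ and $M_S$ indeed has $|M_L|\ge m_L = S_L(\delta,\epsilon)$ and $|M_S|\ge 2$. Because the strategies are uniform and independent across tasks (and the prior is i.i.d.\ by Assumption~\ref{ass:apriori}), when agents play the truth-telling profile $\bm{\tau}$ the reports on $M_L$ are exactly $m_L$ i.i.d.\ draws from $P_{X,Y}$, so by $(\delta,\epsilon)$-accuracy the event $E = \{u_A(\bm{\tau},P,K_{\rm est}) > D_\Phi(P_{X,Y}\|P_XP_Y)-\epsilon \text{ and symmetrically for Bob}\}$ has probability at least $1-\delta$. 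The subtle point, which I would flag explicitly, is that the draws on $M_L$ under a \emph{non}-truthful profile $\bm{\theta}$ are i.i.d.\ from $\bm{\theta}\circ P$, not from $P$, so accuracy says nothing about $K_{\rm est}$ in that case; this is exactly why the one-sided bound in Lemma~\ref{lem:manipulation} (equivalently \eqref{eq:upperbound}) is essential.

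Next I would define the target mapping $F$. Set $F_A(\bm{\theta},P) := u_A(\bm{\tau},\bm{\theta}\circ P, \mathcal{M}^{\Phi,K^\star_{\bm{\theta}\circ P,\Phi}}) = D_\Phi((\bm{\theta}\circ P)_{X,Y}\|(\bm{\theta}\circ P)_X(\bm{\theta}\circ P)_Y)$ — that is, $F_A$ assigns to a strategy profile the $\Phi$-mutual information of the induced report distribution, which by Lemma~\ref{lem:truthtelling} is what the pairing mechanism with the \emph{ideal} scoring function for that report distribution would pay under truth-telling. I would then verify $F$ has the required truthfulness flavor: this is precisely Theorem~\ref{thm:truth} applied to the prior $\bm{\theta}\circ P$, using that permutations preserve $\Phi$-mutual information, that oblivious profiles make $\bm{\theta}\circ P$ a product distribution (so the divergence is $0$), and, under Condition~\ref{item:degenerate3}, that stochastic relevance is preserved under permutations while any non-permutation collapse makes the divergence strictly smaller (Lemma~\ref{lem:strict}). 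Condition~\ref{item:degenerate3} also requires checking that $\bm{\theta}\circ P$ remains on a finite space with $\Phi,\Phi^*$ strictly convex — the finiteness of $\X,\Y$ is prior-independent, so this is immediate.

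It remains to check the two inequalities of Definition~\ref{def:approx_strongly} on the event $E$. For item~(1): on $E$, $u_A(\bm{\tau},P,K_{\rm est}) > D_\Phi(P_{X,Y}\|P_XP_Y)-\epsilon = F_A(\bm{\tau},P)-\epsilon$, using that $\bm{\tau}\circ P = P$. For item~(2): for \emph{any} strategy profile $\bm{\theta}$ and the (random but fixed-on-this-outcome) scoring function $K_{\rm est}$, the associativity identity \eqref{eq:associative} gives $u_A(\bm{\theta},P,K_{\rm est}) = u_A(\bm{\tau},\bm{\theta}\circ P,K_{\rm est})$, and then Lemma~\ref{lem:manipulation} (applied with prior $\bm{\theta}\circ P$ and scoring function $K_{\rm est}$) bounds this by $D_\Phi((\bm{\theta}\circ P)_{X,Y}\|(\bm{\theta}\circ P)_X(\bm{\theta}\circ P)_Y) = F_A(\bm{\theta},P)$ — crucially with no accuracy requirement on $K_{\rm est}$. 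The same two bounds hold for Bob. Since $E$ occurs with probability $\ge 1-\delta$, this establishes $(\delta,\epsilon)$-strong-truthfulness with $2+S_L(\delta,\epsilon)$ tasks under Condition~\ref{item:degenerate3}, and the weaker guarantees under Conditions~\ref{item:degenerate2} and the no-assumption case follow by replacing the appeal to Lemma~\ref{lem:strict} with Lemma~\ref{lem:oblivious} (for informed-truthful) or with nothing beyond \eqref{eq:upperbound} (for truthful). The main obstacle I anticipate is the bookkeeping around which distribution the learning samples are drawn from under a deviating profile — making sure the argument never secretly assumes $K_{\rm est}$ is accurate off the truthful path — together with the measure-theoretic side conditions (the integrability footnotes on Lemmas~\ref{lem:manipulation} and Theorem~\ref{thm:truth}) when $\X,\Y$ are infinite; these should be handled by restricting attention to strategy profiles $\bm{\theta}$ satisfying those finiteness conditions, exactly as the footnotes stipulate.
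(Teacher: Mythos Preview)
Your proposal is correct and follows essentially the same route as the paper: define the fantasy mapping $F_A(\bm{\theta},P)=D_\Phi((\bm{\theta}\circ P)_{X,Y}\|(\bm{\theta}\circ P)_X(\bm{\theta}\circ P)_Y)$ (this is exactly the paper's $F_A^\Phi$ in \eqref{eq:fantasy}), verify it is truthful/informed-truthful/strongly-truthful via Lemmas~\ref{lem:truthtelling}--\ref{lem:strict} (this is Lemma~\ref{lem:fantasy} in the paper), use $(\delta,\epsilon)$-accuracy under $\bm{\tau}$ for the lower bound on the good event, and use \eqref{eq:associative} together with Lemma~\ref{lem:manipulation} for the uniform upper bound that needs no accuracy assumption on $K_{\rm est}$. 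One small wording point: saying ``this is precisely Theorem~\ref{thm:truth} applied to the prior $\bm{\theta}\circ P$'' is slightly off, since Theorem~\ref{thm:truth} fixes a single scoring function $K^\star_P$ whereas $F^\Phi$ uses a $\bm{\theta}$-dependent $K^\star_{\bm{\theta}\circ P}$; the paper instead proves the required properties of $F^\Phi$ directly from the four lemmas (as you also do), so your argument is fine once that reference is replaced by the lemma-level citations you already give.
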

Let $P\in \mathcal{P}$ be Alice and Bob's signals joint distribution.  Here $\mathcal{L}$ only outputs an $\epsilon$-ideal scoring function on the joint distribution of agents' signals. Still, the algorithm can have an arbitrarily large error when agents are not truthtelling.  For instance, there may exists a non-truth-telling strategy profile $\bm{\theta}$ such that $\bm{\theta}\circ P$ is not in $\mathcal{P}$, and the output of $\mathcal{L}$ is not $\epsilon$-ideal on $(\bm{\theta}\circ P,\Phi)$.  Nevertheless, Mechanism~\ref{alg:reduction} still can upper bound their ex-ante payment under such  non-truth-telling strategy profiles.  Furthermore, if the learning algorithm is $\epsilon$-ideal on $(\bm{\theta}\circ P,\Phi)$ for all strategy profile $\bm{\theta}$, the pairing mechanism is indeed approximately dominantly truthful. We give a more detail discussion in Sect.~\ref{sec:comparison}.
\begin{remark}\label{remark:framework}
Note that the truthfulness guarantees are subject to the belief of Alice (and Bob).  Mechanism~\ref{alg:reduction} ensures with $1-\delta$ probability the payment under truth-telling strategy profile is $\epsilon$ close to a fixed strongly truthful (inform-truthful or truthful) mapping for all $\delta\in (0,1)$ and $\epsilon>0$.\footnote{Formally, there exists an event with probability $1-\delta$ such that the conditional expected payment to Alice under such event is $\epsilon$-close to a strongly truthful (inform-truthful or truthful) mapping.}  In particular, we make the error $\epsilon$ sufficiently small such that the truth-telling strategy profile still has a higher ex-ante payment than any oblivious strategy has with high probability.

Furthermore, we can pick $\Phi$ such that the ex-ante payment is bounded by some constant $U$, and the mechanism is $(\epsilon+U\delta)$-strongly (informed-) truthful with probability $1$.  For example, if $\Phi(a) = |a-1|/2$, we only need to consider bounded scoring functions, and the resulting mechanism is approximately informed-truthful with probability $1$.  
\end{remark}
To establish some intuitions, let's consider the following ``fantasy'' mapping $F^\Phi = (F_A^\Phi, F_B^\Phi)$ from Alice's and Bob's signals' joint distribution $P$ and their strategy profile $\bm{\theta}$ to payments:
\begin{equation}\label{eq:fantasy}
    F_A^\Phi(\bm{\theta}, P) \triangleq u_A(\bm{\theta}, P, K^\star_{\bm{\theta}\circ P, \Phi})\text{ and } F_B^\Phi(\bm{\theta}, P) \triangleq u_B(\bm{\theta}, P, K^\star_{\bm{\theta}\circ P, \Phi}).
\end{equation}
It is straightforward to show the following lemma.
\begin{lemma}[Fantasy mapping]\label{lem:fantasy}
Under the first three conditions in Assumption~\ref{ass:degenerate} respectively, the mapping $F^\Phi = (F_A^\Phi, F_B^\Phi)$  is 
\begin{enumerate}
    \item truthful,
    \item informed-truthful, or
    \item strongly truthful.
\end{enumerate}
\end{lemma}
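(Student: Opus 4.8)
The plan is to show that the ``fantasy'' mapping $F^\Phi$ is just the $\Phi$-mutual information of the agents' \emph{reports} viewed as a function of the strategy, and then reduce each of the three claims to the corresponding statement already proved about $\mathcal{M}^{\Phi,K^\star}$ in Theorem~\ref{thm:truth} via the associativity identity~\eqref{eq:associative}. The first step is the key observation: by Lemma~\ref{lem:truthtelling} applied to the distribution $\bm{\theta}\circ P$ (which satisfies Assumption~\ref{ass:apriori} since strategies act coordinate-wise and i.i.d.), we have $u_A(\bm{\tau}, \bm{\theta}\circ P, K^\star_{\bm{\theta}\circ P,\Phi}) = D_\Phi\!\left((\bm{\theta}\circ P)_{X,Y}\,\|\,(\bm{\theta}\circ P)_X (\bm{\theta}\circ P)_Y\right)$; combined with~\eqref{eq:associative} this gives
\[
F_A^\Phi(\bm{\theta},P) = u_A(\bm{\theta}, P, K^\star_{\bm{\theta}\circ P,\Phi}) = u_A(\bm{\tau}, \bm{\theta}\circ P, K^\star_{\bm{\theta}\circ P,\Phi}) = D_\Phi\!\left((\bm{\theta}\circ P)_{X,Y}\,\|\,(\bm{\theta}\circ P)_X (\bm{\theta}\circ P)_Y\right),
\]
and symmetrically for Bob. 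So $F^\Phi(\bm{\theta},P)$ is precisely the $\Phi$-mutual information between the reported signals under strategy $\bm{\theta}$.

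Next I would establish the three items in order. \textbf{(1) Truthful:} For any $\bm{\theta}$, Lemma~\ref{lem:manipulation} (applied with the scoring function $K^\star_{\bm{\theta}\circ P,\Phi}$ and strategy $\bm{\theta}$, noting that $F^\Phi$ is evaluated at the \emph{true} prior $P$) gives $u_A(\bm{\theta},P,K^\star_{\bm{\theta}\circ P,\Phi}) \le D_\Phi(P_{X,Y}\|P_X P_Y) = F_A^\Phi(\bm{\tau},P)$, since $K^\star_{\bm{\tau}\circ P,\Phi} = K^\star_{P,\Phi}$. Hence $F_A^\Phi(\bm{\theta},P)\le F_A^\Phi(\bm{\tau},P)$ for all $\bm{\theta}$, and likewise for Bob; combined with the fact that the truth-telling payoff is attained, truth-telling is a Bayesian Nash equilibrium with the highest payment. \textbf{(2) Informed-truthful:} When $P$ is stochastic relevant (Condition~\ref{item:degenerate2}), Lemma~\ref{lem:truthtelling} gives $F_A^\Phi(\bm{\tau},P) = D_\Phi(P_{X,Y}\|P_XP_Y) > 0$, while for any oblivious $\bm{\theta}$, Lemma~\ref{lem:oblivious} gives $F_A^\Phi(\bm{\theta},P) = u_A(\bm{\theta},P,K^\star_{\bm{\theta}\circ P,\Phi}) \le 0 < F_A^\Phi(\bm{\tau},P)$, and symmetrically for Bob. \textbf{(3) Strongly truthful:} Under Condition~\ref{item:degenerate3}, suppose $\bm{\theta}$ is a non-permutation strategy profile with $F_A^\Phi(\bm{\theta},P) = F_A^\Phi(\bm{\tau},P)$ (or the same for Bob). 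Then $u_A(\bm{\theta},P,K^\star_{\bm{\theta}\circ P,\Phi}) = D_\Phi(P_{X,Y}\|P_XP_Y) = u_A(\bm{\tau},P,K^\star_{P,\Phi})$, i.e.\ equality holds in~\eqref{eq:upperbound} with the scoring function $K = K^\star_{\bm{\theta}\circ P,\Phi}$; by Lemma~\ref{lem:strict} this forces $\bm{\theta}$ to be a permutation strategy profile, a contradiction. Hence $F_A^\Phi(\bm{\tau},P)$ strictly exceeds $F_A^\Phi(\bm{\theta},P)$ for every non-permutation $\bm{\theta}$, and symmetrically for Bob.

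The step I expect to require the most care is verifying that the integrability side conditions in Lemmas~\ref{lem:manipulation} and~\ref{lem:strict} are met when those lemmas are invoked with the scoring function $K^\star_{\bm{\theta}\circ P,\Phi}$ rather than a fixed $K$ --- in the non-finite case one must confirm $\int K^\star_{\bm{\theta}\circ P,\Phi}\,dP_{X,Y}$ and $\int \Phi^*(K^\star_{\bm{\theta}\circ P,\Phi})\,d(P_XP_Y)$ are finite. For Condition~\ref{item:degenerate3} this is automatic since $\X,\Y$ are finite; for Conditions~1 and~2 this is exactly the footnote caveat attached to Theorem~\ref{thm:truth}, so restricting attention to strategy profiles satisfying that caveat (as the lemma statement implicitly does) removes the obstacle. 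Everything else is a bookkeeping reduction to the already-proved technical lemmas, which is why the excerpt calls this ``straightforward.''
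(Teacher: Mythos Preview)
Your proposal is correct and follows essentially the same route as the paper's own proof: both compute $F_A^\Phi(\bm{\tau},P)=D_\Phi(P_{X,Y}\|P_XP_Y)$ via Lemma~\ref{lem:truthtelling}, upper-bound $F_A^\Phi(\bm{\theta},P)$ by this quantity via Lemma~\ref{lem:manipulation}, handle oblivious strategies with Lemma~\ref{lem:oblivious}, and settle the strongly-truthful case by invoking Lemma~\ref{lem:strict} on the equality case. Your additional step of rewriting $F_A^\Phi(\bm{\theta},P)$ as the $\Phi$-mutual information of the \emph{reports} via~\eqref{eq:associative} is a helpful interpretive remark but is not needed for (and is not present in) the paper's argument.
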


Recall that a learning algorithm for scoring function with input samples from $Q$ outputs an approximate ideal function $K^\star_{Q, \Phi}$.  If Alice and Bob have a strategy profile $\bm{\theta}$ with joint signal distribution $P$, the learning tasks are sampled from distribution $\bm{\theta}\circ P$ and a learning algorithm for scoring function will output an approximate version of $K^\star_{\bm{\theta}\circ P, \Phi}$.  Therefore, the ex-ante payment to Alice in Mechanism~\ref{alg:reduction} is ``close'' to fantasy payment function, and Theorem~\ref{thm:framework} formalizes this idea.  We show the proof in Appendix~\ref{sec:proof_framework}.

\section{Learning Ideal Scoring Functions}\label{sec:learning}
Theorem~\ref{thm:framework} reduces the mechanism design problem to a learning problem for an ideal scoring function.  However, Eqn.~\eqref{eq:approximated} may be hard to verify.  We provide two natural sufficient conditions for $\epsilon$-ideal scoring functions in Sect.~\ref{sec:sufficient}, and we will provide two concrete learning algorithms for scoring function in Sect.~\ref{sec:algorithm}. Finally, in Sect.~\ref{sec:nonexistence} we show an obstacle to designing exact strongly truthful, or inform-truthful mechanisms which use the $\Phi$-divergence-based method.

\subsection{Sufficient Conditions for Approximately \texorpdfstring{$\Phi$}{Phi}-Ideal Scoring Functions}\label{sec:sufficient}

\paragraph{Bregman divergence}
Given $a,b\in \R$ and a strictly convex and twice differentiable $\Phi:\R\to\R$, the standard Bregman divergence is
$\Phi(a)-\Phi(b)-\nabla\Phi(b)^\top(a-b)$.
It can be extended to \defn{Bregman divergence} between two functionals $f$ and $g$ over a probability space $(\Omega, \mathcal{F}, P)$~\cite{csiszar1995generalized}
$$B_{\Phi, P}(f,g) = \int \Phi(f(\omega))-\Phi(g(\omega))-\nabla\Phi(g(\omega))^\top(f(\omega)-g(\omega))dP(\omega).$$
\begin{lemma}[Bregman divergence and accuracy]\label{lem:bregman}
If $\Phi$ is strictly convex and twice differentiable on $[0,\infty)$,
$$D_\Phi(P_{X,Y}\| P_X P_Y)-u_{A}(\bm{\tau}, P, K) = B_{\Phi^*, P_X P_Y}(K,K^\star).$$
Therefore, if $B_{\Phi^*, P_X P_Y}(K,K^\star)\le \epsilon$, $K$ is an $\epsilon$-ideal scoring function on $(\Phi, P)$.
\end{lemma}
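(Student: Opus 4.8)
The plan is to compute the truth-telling ex-ante payment $u_A(\bm\tau,P,K)$ in the variational form of Theorem~\ref{thm:fdivergenceconjugate}, identify the gap $D_\Phi(P_{X,Y}\|P_XP_Y)-u_A(\bm\tau,P,K)$ as a single expectation under $P_XP_Y$, and then match it term-by-term with the definition of $B_{\Phi^*,P_XP_Y}(K,K^\star)$ using the two standard conjugate identities for a strictly convex twice-differentiable $\Phi$. The nonnegativity of Bregman divergence then yields the ``therefore'' clause immediately.

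First I would unpack $u_A(\bm\tau,P,K)$. Under truth-telling the reports equal the signals, so by Eqn.~\eqref{eq:payment} and Assumption~\ref{ass:apriori} the bonus pair $(\hat x_b,\hat y_b)$ is distributed as $P_{X,Y}$ while the penalty pair $(\hat x_p,\hat y_q)$, built from two distinct (hence independent) tasks, is distributed as $P_XP_Y$. Therefore
$$u_A(\bm\tau,P,K) = \E_{(x,y)\sim P_{X,Y}}[K(x,y)] - \E_{(x,y)\sim P_XP_Y}[\Phi^*(K(x,y))].$$
Writing $r(x,y):=\ratio_P(x,y)$ for the Radon--Nikodym derivative $dP_{X,Y}/d(P_XP_Y)$ and using $\E_{P_{X,Y}}[K]=\E_{P_XP_Y}[rK]$ together with $D_\Phi(P_{X,Y}\|P_XP_Y)=\E_{P_XP_Y}[\Phi(r)]$, the gap becomes
$$D_\Phi(P_{X,Y}\|P_XP_Y) - u_A(\bm\tau,P,K) = \E_{P_XP_Y}\big[\Phi(r) - rK + \Phi^*(K)\big].$$
Since Bob's truth-telling payment is given by the identical formula with a bonus pair $\sim P_{X,Y}$ and a penalty pair $\sim P_XP_Y$, the same expression holds for Bob, which will take care of the symmetric claim.

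Next I would expand the Bregman divergence. As $\Phi$ is strictly convex and twice differentiable on $[0,\infty)$, its conjugate $\Phi^*$ is differentiable and $K^\star=\Phi'(r)\in\partial\Phi(r)$, so
$$B_{\Phi^*,P_XP_Y}(K,K^\star) = \E_{P_XP_Y}\big[\Phi^*(K) - \Phi^*(K^\star) - (\Phi^*)'(K^\star)\,(K - K^\star)\big].$$
The key facts are the Fenchel equality at a matched point, $\Phi^*(\Phi'(r)) = r\,\Phi'(r) - \Phi(r)$, and the inverse-derivative identity $(\Phi^*)'(\Phi'(r)) = r$. Substituting $K^\star=\Phi'(r)$ gives $-\Phi^*(K^\star)=\Phi(r)-rK^\star$ and $(\Phi^*)'(K^\star)=r$, so the Bregman integrand collapses to $\Phi^*(K)+\Phi(r)-rK^\star-r(K-K^\star)=\Phi^*(K)+\Phi(r)-rK$, which is exactly the integrand in the displayed gap. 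This establishes the identity. The final claim follows because a Bregman divergence is nonnegative: $B_{\Phi^*,P_XP_Y}(K,K^\star)\le\epsilon$ forces $u_A(\bm\tau,P,K)\ge D_\Phi(P_{X,Y}\|P_XP_Y)-\epsilon$ (and likewise for Bob), i.e.\ $K$ is $\epsilon$-ideal in the sense of \eqref{eq:approximated}.

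I do not expect a genuine obstacle here; the delicate points are purely bookkeeping. One should confirm the relevant integrals are finite (covered by the admissibility conventions on $K$ in the footnotes), verify that $K^\star=\Phi'(r)$ lies in $\dom(\Phi^*)$ so that $\Phi^*$ and $(\Phi^*)'$ are evaluated legitimately, and invoke the identities $(\Phi^*)'=(\Phi')^{-1}$ and $\Phi^*(\Phi'(a))=a\Phi'(a)-\Phi(a)$ in the form valid for strictly convex twice-differentiable $\Phi$ on $[0,\infty)$ (with care at the boundary $a=0$ if $\Phi'(0)=-\infty$, which only makes $K^\star$ take the value $-\infty$ on a set where $r=0$, a $P_XP_Y$-null issue handled as in Theorem~\ref{thm:fdivergenceconjugate}). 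None of this goes beyond elementary one-dimensional convex analysis.
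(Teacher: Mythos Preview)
Your proposal is correct and follows essentially the same route as the paper: express both $D_\Phi(P_{X,Y}\|P_XP_Y)$ and $u_A(\bm\tau,P,K)$ as integrals against $P_XP_Y$ via the ratio $r=dP_{X,Y}/d(P_XP_Y)$, then use the conjugate identities $(\Phi^*)'(\Phi'(r))=r$ and $\Phi^*(\Phi'(r))=r\Phi'(r)-\Phi(r)$ with $K^\star=\Phi'(r)$ to identify the difference as $B_{\Phi^*,P_XP_Y}(K,K^\star)$. The only cosmetic difference is that the paper starts from $D_\Phi=\int K^\star\,dP_{X,Y}-\int\Phi^*(K^\star)\,dP_XP_Y$ (the variational form at the optimizer) rather than $\int\Phi(r)\,dP_XP_Y$, which is the same thing by Fenchel equality.
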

Since Bregman divergence capture an \emph{average distance} between a scoring function $K$ and the ideal one, if the scoring function $K$ is uniformly close to the ideal one $K^\star$, the Bergman divergence between $K$ and $K^\star$ is also small.

\paragraph{Total variation distance} On the other hand, we may first learn the prior  $P$ and compute an approximately ideal scoring function afterward.  This indirect method is also useful, because estimating the probability density function is a much well studied problem.

\begin{theorem}[Total variation to accuracy]\label{thm:generative}
Given $\Phi$ is a convex function and a prior $P_{X,Y}$ over  a finite space $\X\times\Y$, suppose there exist constants $0<\alpha<1$ and $c_L$ such that
\begin{align}
    &\forall x\in \X,y \in \Y,\; P_{X,Y}(x,y)>2\alpha\text{ or }P_{X,Y}(x,y) = 0,\label{eq:generative00}\\
    &\forall z,w\in [\alpha,1/\alpha],\;  |\Phi(z)-\Phi(w)|\le c_L|z-w|.\label{eq:generative01}
\end{align}
If $\|\hat{P}_{X,Y}-P_{X,Y}\|_{TV}\le \delta<\alpha$,\footnote{$\|\hat{P}-\hat{P}\|_{TV}= \sum_{\omega\in\Omega} |P(\omega)-\hat{P}(\omega)|$ is the total variation distance between $P$ and $\hat{P}$.} $\hat{K}(x,y)\in \partial \Phi\left(\frac{\hat{P}_{X,Y}}{\hat{P}_{X}\otimes\hat{P}_{Y}}\right)$ is a $\frac{6c_L}{\alpha^2}\delta$-ideal scoring function.
\end{theorem}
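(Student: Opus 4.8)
The plan is to reduce the statement to an upper bound on the suboptimality gap $\Delta := D_\Phi(P_{X,Y}\|P_XP_Y) - u_A(\bm\tau,P,\hat K)$: by Lemma~\ref{lem:truthtelling} and the definition of an $\epsilon$-ideal scoring function, it suffices to prove $\Delta \le \tfrac{6c_L}{\alpha^2}\delta$ (and the symmetric bound for Bob). Using the mechanism's payment under the truth-telling profile together with the change-of-measure identity $\E_{P_{X,Y}}[\hat K] = \E_{P_XP_Y}[\ratio\cdot\hat K]$, I would first write $\Delta$ as a sum of pointwise Fenchel--Young slacks,
$$\Delta \;=\; \sum_{x,y}P_X(x)P_Y(y)\Big(\Phi(\ratio(x,y)) + \Phi^*(\hat K(x,y)) - \ratio(x,y)\,\hat K(x,y)\Big),$$
each summand nonnegative. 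Since $\hat K(x,y)\in\partial\Phi(\ratio_{\hat P}(x,y))$ where $\ratio_{\hat P}$ is the analogous ratio for $\hat P$, the Fenchel equality rewrites each summand as the Bregman-type term $d(x,y) = \Phi(\ratio(x,y)) - \Phi(\ratio_{\hat P}(x,y)) - \hat K(x,y)\big(\ratio(x,y) - \ratio_{\hat P}(x,y)\big)$; this is the identity underlying Lemma~\ref{lem:bregman}, but since $\Phi$ is here only locally Lipschitz rather than smooth I would carry it out directly.

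The next step is to confine the ratios to the Lipschitz window $[\alpha,1/\alpha]$ on the support $S := \{(x,y):P_{X,Y}(x,y)>2\alpha\}$. By~\eqref{eq:generative00} and full support of the marginals, every row and column of $P_{X,Y}$ has a positive (hence $>2\alpha$) entry, so $P_X(x),P_Y(y)>2\alpha$ everywhere; with $\delta<\alpha$ and the fact that marginalization does not increase total variation, this gives $\hat P_X(x),\hat P_Y(y)>\alpha$ everywhere and $\hat P_{X,Y}(x,y)>\alpha$ on $S$. Because a marginal product is at most $1$ we get $\ratio(x,y)\ge P_{X,Y}(x,y)>2\alpha$ and $\ratio_{\hat P}(x,y)\ge \hat P_{X,Y}(x,y)>\alpha$, and because a marginal dominates the corresponding joint entry we get $\ratio(x,y)\le 1/P_{X,Y}(x,y)<1/\alpha$ and likewise for $\ratio_{\hat P}$. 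Hence both ratios lie in $(\alpha,1/\alpha)$ on $S$, so~\eqref{eq:generative01} yields $|\hat K(x,y)|\le c_L$ and $d(x,y)\le 2c_L\,|\ratio(x,y)-\ratio_{\hat P}(x,y)|$ there.

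The crux is a ratio-perturbation estimate, summed against the product measure. With $\lambda(x,y):=P_X(x)P_Y(y)/(\hat P_X(x)\hat P_Y(y))$ one has $P_X(x)P_Y(y)\,|\ratio(x,y)-\ratio_{\hat P}(x,y)| = |P_{X,Y}(x,y) - \lambda(x,y)\hat P_{X,Y}(x,y)|$, which by the triangle inequality is at most $\|P_{X,Y}-\hat P_{X,Y}\|_{TV} + \sum_{x,y}\hat P_{X,Y}(x,y)\,|1-\lambda(x,y)|$. Wherever $\hat P_{X,Y}(x,y)>0$ the denominator $\hat P_X(x)\hat P_Y(y)$ exceeds $\alpha^2$, so $|1-\lambda(x,y)|\le\alpha^{-2}\big(|\hat P_X(x)-P_X(x)|+|\hat P_Y(y)-P_Y(y)|\big)$; absorbing the $\hat P_{X,Y}$-weight into the marginals gives $\sum_{x,y}\hat P_{X,Y}(x,y)|1-\lambda(x,y)|\le\alpha^{-2}\big(\|\hat P_X-P_X\|_{TV}+\|\hat P_Y-P_Y\|_{TV}\big)\le 2\delta/\alpha^2$. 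Therefore $\sum_{x,y}P_X(x)P_Y(y)\,|\ratio-\ratio_{\hat P}|\le\delta+2\delta/\alpha^2\le 3\delta/\alpha^2$, and since this full sum dominates the sum over $S$, we get $\sum_{(x,y)\in S}P_X(x)P_Y(y)\,d(x,y)\le 6c_L\delta/\alpha^2$; Bob's bound follows by the symmetric argument.

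The step I expect to require the most care is the pairs outside $S$ — those with $P_{X,Y}(x,y)=0$ (so $\ratio(x,y)=0$) but possibly $\hat P_{X,Y}(x,y)\neq0$ — because there $\ratio_{\hat P}(x,y)$ may fall below $\alpha$, so neither the bound $|\hat K(x,y)|\le c_L$ nor local Lipschitzness of $\Phi$ near $0$ is available. When $\hat P_{X,Y}$ respects the support of $P_{X,Y}$ (the natural case when the support structure is known to the estimator) these pairs have $\ratio_{\hat P}(x,y)=0$, $\hat K(x,y)\in\partial\Phi(0)$, and $d(x,y)=0$, so the argument above is complete. Otherwise I would control them by noting that their total $\hat P_{X,Y}$-mass is at most $\delta$, that $\ratio_{\hat P}(x,y)\le\delta/\alpha^2$ on them, and that $d(x,y)=\Phi^*(\hat K(x,y))-\Phi^*(\phi_0)\le \ratio_{\hat P}(x,y)\,(\hat K(x,y)-\phi_0)$ for $\phi_0\in\partial\Phi(0)$ by convexity of $\Phi^*$, whence their aggregate weighted contribution is again $O(\delta/\alpha^2)$; checking that everything fits inside the stated constant $6c_L/\alpha^2$ (or that the intended reading restricts $\hat P_{X,Y}$ to the support of $P_{X,Y}$, making this term vanish) is the delicate point.
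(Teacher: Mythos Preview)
Your approach is essentially the paper's: both express the gap as $\sum P_XP_Y\big[\Phi(\ratio)-\Phi(\ratio_{\hat P})-\hat K(\ratio-\ratio_{\hat P})\big]$ via Fenchel--Young equality at $\hat K\in\partial\Phi(\ratio_{\hat P})$, confine both ratios to $[\alpha,1/\alpha]$ on the support using~\eqref{eq:generative00} and $\delta<\alpha$, invoke the $c_L$-Lipschitz bound for both $|\Phi(\ratio)-\Phi(\ratio_{\hat P})|$ and $|\hat K|$, and then control $\sum P_XP_Y|\ratio-\ratio_{\hat P}|$ by $3\delta/\alpha^2$. The only cosmetic difference is the last step: you factor as $|P_{X,Y}-\lambda\hat P_{X,Y}|$ with $\lambda=P_XP_Y/\hat P_X\hat P_Y$, whereas the paper clears denominators to $|P_{X,Y}\hat P_X\hat P_Y-\hat P_{X,Y}P_XP_Y|/(\hat P_X\hat P_Y)$ and then splits with the triangle inequality; both routes land on $3\delta/\alpha^2$.

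On the off-support pairs you flag as delicate: the paper is in fact less careful than you are. It states the ratio bounds~\eqref{eq:tv2fdiv3} only ``for all $x$ and $y$, $P_{X,Y}(x,y)\neq 0$'' and then proceeds as if the Lipschitz estimate applied to the full sum, without ever treating the pairs where $P_{X,Y}(x,y)=0$ but $\hat P_{X,Y}(x,y)>0$. So your worry is legitimate, and your sketched fix (bounding their total $\hat P_{X,Y}$-mass by $\delta$ and using convexity of $\Phi^*$) goes beyond what the paper actually writes down.
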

The first condition says the smallest nonzero probability $P_{X,Y}(x,y)$ is either constantly away from zero or equal to zero, and the second condition requires the function $\Phi$ is Lipschitz in $[\alpha,1/\alpha]$ which holds for all examples in Table~\ref{tab:fconjugate}.  With these conditions, if we have a good estimation $\hat{P}$ for $P$ with small total variation distance, we can compute a very accurate scoring function $\hat{K}$ from $\hat{P}$.  As we will see in Sect.~\ref{sec:algorithm}, the empirical distributions with $m_L$ samples satisfies this condition with high probability for large enough $m_L$.

\subsection{Learning Algorithms for Scoring Functions}\label{sec:algorithm}
\paragraph{Generative approach}
Recall that if $P$ is known, the ideal scoring function can be computed directly.  In a generative approach, we try to estimate the probability density function $P$ from reports and derive the scoring function afterward under the truth-telling strategy profile.  In general this generative approach is useful when $\mathcal{P}$ is on a finite space, or ${\mathcal{P}}$ is a parametric model by Theorem~\ref{thm:generative}.  Here we provide an example of a generative approach.

A standard way of learning probability density function is to use empirical distribution on $m_L$ samples (defined in Eqn.~\eqref{eq:empirical}).  The following theorem shows that the empirical distribution gives a good estimation in terms of total variation distance.

\begin{algorithm}[hbt!]
  \begin{algorithmic}[1]
    \REQUIRE A report profile $\hat{\mathbf{X}}_L, \hat{\mathbf{Y}}_L\in ( \X\times\Y)^{m_L}$ from learning tasks from Alice and Bob.
    \ENSURE A convex function $\Phi$ and its sub-gradient $\partial \Phi$
    \STATE Compute empirical distribution from Alice's and Bob's reports: for all events $E$ in $\X\times\Y$
$$\hat{P}_{X,Y}(E) = \frac{1}{m_L}\sum_{s = 1}^{m_L}\mathbb{I}[(\hat{x}_{s}, \hat{y}_{s})\in E],$$
and compute the marginal empirical distribution, for all events $E$ in $\X$ and $F$ in $\Y$
$$\hat{P}_X(E) = \frac{1}{m_L}\sum_{s = 1}^{m_L}\mathbb{I}[\hat{x}_{s}\in E], \text{ and }\hat{P}_Y(F) = \frac{1}{m_L}\sum_{s = 1}^{m_L}\mathbb{I}[\hat{y}_{s}\in F].$$
\STATE Compute the scoring function as
\begin{equation}\label{eq:alg_generative3}
     \begin{cases}\hat{K}(x,y)\in \partial \Phi\left(\frac{\hat{P}_{X,Y}(x,y)}{\hat{P}_X(x)\hat{P}_{Y}(y)}\right),&\mbox{ if } \hat{P}_X(x)\hat{P}_{Y}(y)\neq 0\\
    \hat{K}(x,y) = 0,&\mbox{ otherwise.}
    \end{cases}
\end{equation}
\caption{A generative algorithm}\label{alg:generative}
  \end{algorithmic}
\end{algorithm}
\begin{lemma}[Theorem 3.1 in \cite{devroye2012combinatorial}]\label{lem:boundtv} For all $\epsilon>0$, $\delta>0$, finite domain $\Omega$, distribution in $P$ in $\Delta_\Omega$, there exists $M = O\left(\frac{1}{\epsilon^2}\max(|\Omega|,\log(1/\delta))\right)$ such that for all $m_L\ge M$ the empirical distribution with $m_L$ i.i.d. samples, $\hat{P}_{m_L}$, satisfies
$$\Pr[\|P-\hat{P}_{m_L}\|_{TV}\le \epsilon]\ge 1-\delta.$$

\end{lemma}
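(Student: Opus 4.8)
This is a classical result on $\ell_1$ density estimation over a finite alphabet (stated as Theorem~3.1 of \cite{devroye2012combinatorial}), and the plan is to reconstruct its short proof via the standard two-step recipe: first bound the expected error of the empirical distribution, then boost to a high-probability guarantee through a bounded-differences concentration inequality, and finally split the error budget $\epsilon$ evenly between the two contributions.

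The first step is to control $\E[\|P - \hat{P}_{m_L}\|_{TV}]$. Writing $N_\omega$ for the count of samples landing on $\omega$, we have $N_\omega \sim \mathrm{Binomial}(m_L, P(\omega))$ and $\hat{P}_{m_L}(\omega) = N_\omega/m_L$, so by Jensen's inequality $\E[|P(\omega) - \hat{P}_{m_L}(\omega)|] \le \sqrt{\Var(\hat{P}_{m_L}(\omega))} = \sqrt{P(\omega)(1-P(\omega))/m_L} \le \sqrt{P(\omega)/m_L}$. Summing over $\omega \in \Omega$ and applying Cauchy--Schwarz together with $\sum_\omega P(\omega) = 1$ gives $\E[\|P - \hat{P}_{m_L}\|_{TV}] = \sum_\omega \E[|P(\omega) - \hat{P}_{m_L}(\omega)|] \le \sqrt{|\Omega|/m_L}$. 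In particular, once $m_L \ge 4|\Omega|/\epsilon^2$ the expected error is at most $\epsilon/2$.

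The second step is to apply McDiarmid's bounded-differences inequality to the function $(\omega_1, \dots, \omega_{m_L}) \mapsto \|P - \hat{P}_{m_L}\|_{TV}$. Changing a single sample moves mass $1/m_L$ off one atom and onto another, so each coordinate has bounded-difference constant $2/m_L$, and McDiarmid yields $\Pr[\|P - \hat{P}_{m_L}\|_{TV} \ge \E[\|P - \hat{P}_{m_L}\|_{TV}] + t] \le \exp(-m_L t^2/2)$. Taking $t = \epsilon/2$ and requiring $m_L \ge (8/\epsilon^2)\log(1/\delta)$ drives this tail below $\delta$. Combining the two requirements, $M = \max\{4|\Omega|/\epsilon^2,\ (8/\epsilon^2)\log(1/\delta)\} = O\!\left(\tfrac{1}{\epsilon^2}\max(|\Omega|, \log(1/\delta))\right)$ suffices, which is exactly the claimed bound.

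The variance computation and the Cauchy--Schwarz estimate are routine; the only point that genuinely needs care is verifying the bounded-differences constant is $2/m_L$ rather than $1/m_L$ (a one-sample change both removes and adds an atom's worth of mass), and recording that the $\tfrac{\epsilon}{2}$-versus-$\tfrac{\epsilon}{2}$ split of bias and deviation is what produces the stated form of $M$. I do not expect any substantive obstacle here, so I would either present the derivation above or simply cite \cite{devroye2012combinatorial}, whose proof is essentially this argument.
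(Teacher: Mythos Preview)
Your argument is correct and is essentially the textbook proof of this result: bound the expected $\ell_1$ error via Cauchy--Schwarz on the per-atom variances, then upgrade to high probability by McDiarmid with bounded-difference constant $2/m_L$. The split of $\epsilon$ into bias and deviation halves is exactly what gives the stated $M$.

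The paper itself does not prove this lemma at all; it is quoted directly from \cite{devroye2012combinatorial} (the lemma is even labelled ``Theorem~3.1 in \cite{devroye2012combinatorial}'') and used as a black box to feed into Theorem~\ref{thm:generative}. So there is no ``paper's own proof'' to compare against, and your reconstruction is a fine substitute for the citation. One cosmetic point: the paper's footnote defines $\|\cdot\|_{TV}$ as the full $\ell_1$ distance $\sum_\omega |P(\omega)-\hat P(\omega)|$ rather than half of it; your bounds are written consistently with that convention, so nothing changes, but it is worth being explicit since the constants in $M$ depend on which normalization is in force.
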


Therefore, we can design a learning algorithm $\mathcal{L}_{\rm emp}$ as follows: estimate joint distribution $P_{X,Y}$ by their empirical distributions $\hat{P}_{X,Y}$ and derive $\hat{K}$ from Theorem~\ref{thm:generative}.  By Theorem~\ref{thm:generative} and Lemma~\ref{lem:boundtv}, such algorithm is $\epsilon$-accurate with $1-\delta$ probability.

\paragraph{Discriminative approach}
Instead of density estimation, a discriminative approach estimates an ideal scoring functions directly.  This enables more freedom of algorithm design.  Here we use the variational representation (Theorem~\ref{thm:fdivergenceconjugate}), and give an optimization characterization of an ideal scoring function.

Given the assumption~\ref{ass:apriori}, under the truth-telling strategy profile we can have i.i.d. samples of $(u,v)$ where $u$ is sampled from $P_{X,Y}$ and $v$ is sampled from $P_X P_Y$ independently, and this is shown formally in Algorithm~\ref{alg:empest}.  Taking $L^\Phi(a,b) \triangleq a-\Phi^*(b)$ as the risk function, we can convert the estimation of the ideal scoring functions to empirical risk minimization (maximization) over a training set $(u_t,v_t)$ with $t = 1,2,\ldots, \lfloor m_L/3\rfloor$,
\begin{equation}\label{eq:erm}
    \tilde{K} = \arg\max_{k\in \mathcal{K}}\sum_t L^\Phi(k(u_t),k(v_t)) = \arg\max_{k\in \mathcal{K}} \left\{\int k(\omega)d\hat{P}_{X,Y}(\omega)-\int \Phi^*(k(\omega))d\hat{P_X P_Y}(\omega)\right\}
\end{equation}
where $\mathcal{K}$ is a pre-specified class of functionals $k:\X\times \Y\to \R$,  $\hat{P}_{X,Y}$ and $\hat{P_X P_Y}$ are empirical distributions on $\lfloor m_L/3\rfloor$ samples from distributions $P_{X,Y}$ and $P_X P_Y$ respectively.  See Appendix~\ref{sec:pre_emp} for formal definitions when $\X$ and $\Y$ are general measure spaces.

Assuming that $\mathcal{K}$ is a convex set of functionals, the implementation of \eqref{eq:erm} only requires solving a convex optimization problem over function space $\mathcal{K}$ which is well studied \cite{nguyen2010estimating}.  With these results, we show \emph{the empirical risk maximizer $\tilde{{K}}$ with respect to $L^\Phi$ is $\epsilon$-accurate} with large probability under some conditions on $\mathcal{K}$ and prior $P_{X,Y}$.  Furthermore, this error can be seen as the generalized error of the empirical risk maximizer. 
\begin{algorithm}[hbt!]
\caption{An empirical risk minimization algorithm}\label{alg:empest}
  \begin{algorithmic}[1]
    \REQUIRE A report profile $\hat{\mathbf{X}}_L, \hat{\mathbf{Y}}_L\in ( \X\times\Y)^{m_L}$ from learning tasks from Alice and Bob.
    \ENSURE A convex function $\Phi$ and its conjugate $\Phi^*$.
    \STATE Partition the report profile into three equal size $(\hat{\mathbf{x}}^i,\hat{\mathbf{y}}^i)$ in $(\X\times \Y)^{m_L/3}$ where $i = 0,1$, and $2$.
    \STATE For the empirical joint distribution, we use the report profile $\mathbf{x}^i,\mathbf{y}^i$ to compute: For all events $E$ in $\X\times\Y$
$$    \tilde{P}_{X,Y}(E) \triangleq \frac{3}{m_L}\sum_{s = 1}^{m_L/3}\mathbb{I}[(\hat{x}^0_{s}, \hat{y}^0_{s})\in E],$$
Further compute the product empirical distributions: for all events $E$
$$    \tilde{P}_{i}\tilde{P}_{j}(A) \triangleq \frac{3}{m_L}\sum_{s = 1}^{m_L/3}\mathbb{I}[(\hat{x}^1_{s}, \hat{y}^2_{s})\in E]$$
(Note that we use new samples to compute the product of empirical distribution to ensure the independence between $\tilde{P}_{X}\tilde{P}_{Y}$ and $\tilde{P}_{X,Y}$)
\STATE Finally solve following optimization problem
\begin{equation}\label{eq:ermprime}
    \tilde{K} = \arg\max_{k\in \mathcal{K}} \left\{\int k(\omega)d\tilde{P}_{X,Y}(\omega)-\int \Phi^*(k(\omega))d\tilde{P}_X\tilde{P}_Y(\omega)\right\}
\end{equation}
  \end{algorithmic}
\end{algorithm}

\begin{theorem}\label{thm:empest}
Consider a distribution $P$ over $\X\times\Y$; a strictly convex and a twice differentiable function $\Phi$ on $[0,\infty)$ with its gradient $\Phi'$ and conjugate $\Phi^*$; a family of functional $\mathcal{K}$ from $\X\times\Y$ to $\dom (\Phi^*)$; and $\Phi^*(\mathcal{K}) = \{\Phi^*(k):k\in \mathcal{K}\}$.  Suppose
\begin{enumerate}
    \item the $(P, \Phi)$-ideal scoring function  $K^\star = \Phi'\left(\frac{P_{X,Y}}{P_X P_Y}\right)$ is in $\mathcal{K}$, and
    \item there exist constants $(L_l, R_l, D_l)_{l = 1,2}$
    \begin{enumerate}
        \item $\sup_{k\in\mathcal{K}} \rho_{L_1}(k,P_{X,Y})\le R_1$, and  $\int^{R_1}_0\sqrt{\mathcal{H}_{[],L_1}(u,\mathcal{K},P_{X,Y})} du\le D_1$
        \item $\sup_{l\in\Phi^*(\mathcal{K})}  \rho_{L_2}(l,P_{X}P_Y)\le R_2$ and  $\int^{R_2}_0\sqrt{\mathcal{H}_{[],L_2}(u,\Phi^*(\mathcal{K}),P_X P_Y)} du\le D_2$
    \end{enumerate}
\end{enumerate}
There exists $M = O\left(\frac{1}{\varepsilon^2}\log \frac{1}{\delta}\right)$, such that for all $m_L\ge M$, $\tilde{K}$ defined in \eqref{eq:ermprime} is $\varepsilon$-accurate on prior $P$ with probability $1-\delta$.  \footnote{Here we do not show the dependency on constants $L_l,R_l$ and $D_l$.}
\end{theorem}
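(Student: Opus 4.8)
The plan is to reduce the statement to a standard uniform-deviation bound for M-estimators and then invoke Lemma~\ref{lem:bregman}. First I would set up the population functional $\mathcal{R}(k) \triangleq \E_{P_{X,Y}}[k] - \E_{P_X P_Y}[\Phi^*(k)]$ and its empirical counterpart $\hat{\mathcal{R}}(k) \triangleq \int k\, d\tilde{P}_{X,Y} - \int \Phi^*(k)\, d\tilde{P}_X\tilde{P}_Y$, so that $\tilde{K} = \arg\max_{k\in\mathcal{K}} \hat{\mathcal{R}}(k)$. By the variational representation (Theorem~\ref{thm:fdivergenceconjugate}) and the assumption $K^\star \in \mathcal{K}$, we have $\sup_{k\in\mathcal{K}}\mathcal{R}(k) = \mathcal{R}(K^\star) = D_\Phi(P_{X,Y}\|P_X P_Y)$. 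Moreover, by Lemma~\ref{lem:bregman}, $D_\Phi(P_{X,Y}\|P_X P_Y) - u_A(\bm{\tau},P,K) = B_{\Phi^*,P_X P_Y}(K,K^\star) \ge 0$, and in fact $\mathcal{R}(K^\star) - \mathcal{R}(k) = B_{\Phi^*,P_X P_Y}(k,K^\star)$ for every $k\in\mathcal{K}$; thus it suffices to show that with probability $1-\delta$, $\mathcal{R}(K^\star) - \mathcal{R}(\tilde{K}) \le \varepsilon$, i.e. the excess population risk of the empirical maximizer is small.

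Next I would control this excess risk by the standard decomposition: since $\hat{\mathcal{R}}(\tilde{K}) \ge \hat{\mathcal{R}}(K^\star)$,
\begin{equation*}
\mathcal{R}(K^\star) - \mathcal{R}(\tilde{K}) \le \bigl(\hat{\mathcal{R}}(\tilde{K}) - \mathcal{R}(\tilde{K})\bigr) - \bigl(\hat{\mathcal{R}}(K^\star) - \mathcal{R}(K^\star)\bigr) \le 2\sup_{k\in\mathcal{K}} \bigl|\hat{\mathcal{R}}(k) - \mathcal{R}(k)\bigr|.
\end{equation*}
The empirical process $\hat{\mathcal{R}} - \mathcal{R}$ splits into two independent pieces — one indexed by $\mathcal{K}$ against $P_{X,Y}$ and one indexed by $\Phi^*(\mathcal{K})$ against $P_X P_Y$ — because Algorithm~\ref{alg:empest} uses disjoint sample blocks for $\tilde{P}_{X,Y}$ and $\tilde{P}_X\tilde{P}_Y$. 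For each piece I would apply a Bernstein-type uniform deviation bound in terms of the bracketing entropy integral and the Bernstein norm (e.g. the maximal inequalities of \citet{van2000empirical, wellner2013weak}, which are exactly tailored to the $\rho_L$ norm and $J_{[],L}$ integral defined in Section~\ref{sec:pre}); hypotheses 2(a) and 2(b) guarantee that $J_{[],L_1}(R_1,\mathcal{K},P_{X,Y}) \le D_1$ and $J_{[],L_2}(R_2,\Phi^*(\mathcal{K}),P_X P_Y) \le D_2$ and that the envelopes/variances are controlled, so each supremum is $O_P\bigl(\text{poly}(L_l,R_l,D_l)/\sqrt{m_L}\bigr)$ with a subexponential tail, yielding the claimed $M = O\bigl(\tfrac{1}{\varepsilon^2}\log\tfrac{1}{\delta}\bigr)$ after a union bound over the two pieces and translating $\sqrt{m_L}$-rate into sample complexity.

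The main obstacle I anticipate is handling the $\Phi^*$ side carefully: the class $\Phi^*(\mathcal{K})$ must inherit a bracketing-entropy bound, which is why the theorem hypothesizes it directly rather than deriving it, and one must verify that the empirical maximizer $\tilde{K}$ indeed lies in $\mathcal{K}$ with $\Phi^*(\tilde{K})$ in the controlled class (immediate from $\tilde{K}\in\mathcal{K}$) so that the uniform bound applies to it. A secondary subtlety is that Algorithm~\ref{alg:empest} samples $\tilde{P}_X\tilde{P}_Y$ via cross-pairing $(\hat{x}^1_s,\hat{y}^2_s)$ rather than as a genuine i.i.d. product sample; I would note that under the truth-telling profile and Assumption~\ref{ass:apriori}, the $m_L/3$ pairs $(\hat{x}^1_s,\hat{y}^2_s)$ are i.i.d. draws from $P_X P_Y$ (independence across the block follows since the two coordinates come from distinct, independent tasks), so the standard i.i.d. empirical-process machinery applies verbatim. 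Everything else — plugging the uniform bound into the risk decomposition, and converting excess risk to $\varepsilon$-accuracy via Lemma~\ref{lem:bregman} and the definition of $\varepsilon$-accurate — is routine.
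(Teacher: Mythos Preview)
Your proposal is correct and follows essentially the same route as the paper. The paper packages your ERM risk decomposition as Lemma~\ref{lem:breg2emp} (which is exactly the inequality $\mathcal{R}(K^\star)-\mathcal{R}(\tilde{K})\le (\hat{\mathcal{R}}-\mathcal{R})(\tilde{K})-(\hat{\mathcal{R}}-\mathcal{R})(K^\star)$, centered at $K^\star$ rather than doubled), then splits into the two empirical-process pieces and invokes the bracketing-entropy maximal inequality stated as Theorem~\ref{thm:uniform_emp}; your identification of the two subtleties (the separate hypothesis on $\Phi^*(\mathcal{K})$ and the cross-block construction yielding genuine i.i.d.\ samples from $P_X P_Y$) matches the paper's treatment.
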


Informally, Theorem~\ref{thm:empest} requires the functional class $\mathcal{K}$ contains an ideal scoring function and it has a constant complexity (generalized entropy with bracketing).  Under these conditions, the empirical risk minimizer (maximizer) can estimate the ideal scoring function accurately even when the signal space can be integers, real numbers, or Euclidean spaces.

Here we give a outline of the proof. By Lemma~\ref{lem:bregman}, it is sufficient to show the empirical risk minimizer $\tilde{K}$ has small Bregman divergence form the ideal one.  Moreover, if the estimation $K$ is the empirical risk maximizer, this error can be upper bounded by the distance between the empirical distribution and the real distribution (Lemma~\ref{lem:breg2emp}).  Therefore, we can use functional form of Central Limit Theorem to upper bound the error (Theorem~\ref{thm:uniform_emp}).  We defer the proof to the appendix.
\begin{lemma}\label{lem:breg2emp}
Let $\tilde{K}$ be the estimate of $K^\star$ obtained by solving Eqn.~\eqref{eq:ermprime}, and $K^\star\in \mathcal{K}$  Then
$$B_{\Phi^*, P_X P_Y}(\tilde{K},K^\star)\le \sup_{k\in\mathcal{K}}\left|\int \Phi^*(k- \Phi^*(K^\star)d(\tilde{P}_X\tilde{P}_Y-P_X P_Y)-\int \left(k-K^\star\right)d(\tilde{P}_{X,Y}-P_{X,Y}) \right|.$$
\end{lemma}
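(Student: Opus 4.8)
The plan is to unfold the definition of the Bregman divergence, use Fenchel duality to rewrite its linear term, and then play the optimality of $\tilde{K}$ against an empirical-process remainder---this is just the ``basic inequality'' of $M$-estimation specialized to the convex risk $L^\Phi$, with the Bregman divergence serving as the excess risk (cf.\ Lemma~\ref{lem:bregman}).

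First I would write out, from the definition of the Bregman divergence,
$$B_{\Phi^*, P_X P_Y}(\tilde{K},K^\star) = \int \left(\Phi^*(\tilde{K}) - \Phi^*(K^\star) - \nabla\Phi^*(K^\star)(\tilde{K}-K^\star)\right) d(P_X P_Y).$$
The key simplification is the linear term. Since $\Phi$ is strictly convex and differentiable, the ideal scoring function satisfies $K^\star = \Phi'(\ratio_P)$, and $\nabla\Phi^*$ is the functional inverse of $\Phi'$ (by $\Phi = \Phi^{**}$), so $\nabla\Phi^*(K^\star) = \ratio_P = P_{X,Y}/(P_X P_Y)$ as a density ratio with respect to $\mu$. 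Substituting, $\int \nabla\Phi^*(K^\star)(\tilde{K}-K^\star)\, d(P_X P_Y) = \int (\tilde{K}-K^\star)\, dP_{X,Y}$, and therefore
$$B_{\Phi^*, P_X P_Y}(\tilde{K},K^\star) = \int \left(\Phi^*(\tilde{K}) - \Phi^*(K^\star)\right) d(P_X P_Y) - \int (\tilde{K}-K^\star)\, dP_{X,Y}.$$

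Next I would bring in the empirical counterpart. Because $K^\star\in\mathcal{K}$ and $\tilde{K}$ maximizes $k\mapsto \int k\, d\tilde{P}_{X,Y} - \int \Phi^*(k)\, d\tilde{P}_X\tilde{P}_Y$ over $\mathcal{K}$ by Eqn.~\eqref{eq:ermprime}, evaluating this objective at $\tilde{K}$ and at $K^\star$ and rearranging gives $\int (\Phi^*(\tilde{K}) - \Phi^*(K^\star))\, d\tilde{P}_X\tilde{P}_Y - \int (\tilde{K}-K^\star)\, d\tilde{P}_{X,Y} \le 0$. Subtracting and re-adding this quantity inside the displayed expression for $B_{\Phi^*, P_X P_Y}(\tilde{K},K^\star)$, the ``empirical'' piece is $\le 0$ and the remainder is exactly
$$\int \left(\Phi^*(\tilde{K}) - \Phi^*(K^\star)\right) d(P_X P_Y - \tilde{P}_X\tilde{P}_Y) - \int (\tilde{K}-K^\star)\, d(P_{X,Y} - \tilde{P}_{X,Y}).$$
Flipping the sign inside each integral and bounding by the absolute value, then replacing $\tilde{K}\in\mathcal{K}$ by the worst $k\in\mathcal{K}$, gives the claimed supremum bound.

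I do not expect a genuine obstacle. The one point requiring care is the Fenchel-duality identity $\nabla\Phi^*(K^\star) = \ratio_P$, which uses that $\Phi$ is strictly convex and differentiable (so $\partial\Phi$ is single-valued and $(\Phi^*)'$ is its inverse), together with the standing integrability hypotheses under which all four integrals in Eqn.~\eqref{eq:ermprime} and the Bregman divergence are finite; these are already present in Theorem~\ref{thm:empest} and in the footnote to Lemma~\ref{lem:manipulation}, so no new estimates are needed.
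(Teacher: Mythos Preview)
Your proposal is correct and follows essentially the same approach as the paper: unfold the Bregman divergence (using the Fenchel identity $\nabla\Phi^*(K^\star)=\ratio_P$, which is exactly the content of Lemma~\ref{lem:bregman}), invoke optimality of $\tilde{K}$ at the empirical objective against $K^\star\in\mathcal{K}$, and collect the difference as an empirical-process remainder bounded by the supremum over $\mathcal{K}$. The paper's write-up simply quotes the already-simplified Bregman expression rather than rederiving the Fenchel step, but the argument is identical.
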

\begin{example}[continued]
For $\Phi(a) = a\log a$, if the parameters $\sigma^2, \tau^2$ are in a bounded set, we can take $\mathcal{K}$ as a set of the quadratic functions with bounded coefficients.  By Theorem 2.7.11~\cite{wellner2013weak}, the general bracket entropy of $\mathcal{K}$ and $\Phi^*(\mathcal{K})$ can be bounded by some constants.
\end{example}
\subsection{Nonexistence of Unbiased Estimators for \texorpdfstring{$\Phi$}{Phi}-divergence}\label{sec:nonexistence}
Combining Theorem~\ref{thm:framework} and Theorem~\ref{thm:generative} or \ref{thm:empest} we can design mechanisms that are $\epsilon$-strongly truthful (inform-truthful, or truthful) with high probability.  \emph{However, is it possible to have an exact informed-truthful or strongly truthful?}  In this section, we show a technical obstacle to designing such mechanisms.

The main observation of Theorem~\ref{thm:truth} and \ref{thm:framework} is that the ex-ante payment to an agent has a close connection to the $\Phi$-divergence from signal pairs on penalty tasks to signal pairs on the bonus task and use this $\Phi$-divergence to upper bound ex-ante payment under all manipulations uniformly.  This observation is also used in \cite{Radanovic2016-qr} and \cite{kong2019information}.  Under this framework, showing exact strongly truthful, informed-truthful, or truthful requires \emph{unbiased estimator} of $\Phi$-divergence from i.i.d. samples.  Specifically, suppose we can estimate an ideal scoring function accurately from samples.  We can estimate the $\Phi$-divergence without bias.  The following theorem shows such estimator does not exist in general.

\begin{theorem}[Nonexistence]\label{thm:nonexists}  Suppose the discrete signal spaces of Alice and Bob, $\X$ and $\Y$, both have more than two elements, and $\Phi$ be twice differentiable convex function in $[0,\infty)$.  For all $m\in \N$ all estimator $\hat{D}:(\X\times\Y)^{m}\to \R$ from $m$ pairs of signals $(\mathbf{x}, \mathbf{y}) = (x_1, y_1, \ldots, x_m,y_m)$ to a real value, there exists a prior distribution $P_{X,Y}$ over $\X\times\Y$ such that 
$$\E_{(\mathbf{x}, \mathbf{y})\sim P_{X,Y}^m}[\hat{D}(\mathbf{x}, \mathbf{y})]\neq D_\Phi(P_{X,Y}\|P_X P_Y).$$
\end{theorem}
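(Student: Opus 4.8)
The plan is to reduce the statement to the (essentially classical) fact that the $\Phi$-mutual information, viewed as a function of the joint probability table, is not a polynomial. \textbf{Step 1: a polynomial reduction.} For any estimator $\hat D\colon(\X\times\Y)^m\to\R$ and any $P_{X,Y}$,
\[
\E_{(\mathbf x,\mathbf y)\sim P_{X,Y}^m}\big[\hat D(\mathbf x,\mathbf y)\big]=\sum_{(\omega_1,\dots,\omega_m)}\hat D(\omega_1,\dots,\omega_m)\prod_{i=1}^m P_{X,Y}(\omega_i),
\]
which in the $|\X|\,|\Y|$ variables $p_{x,y}:=P_{X,Y}(x,y)$ is a sum of degree-$m$ monomials; restricted to the probability simplex it is therefore the restriction of a polynomial of total degree at most $m$. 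Hence it suffices to show that, for every $m\in\N$, the map
\[
F(P_{X,Y}):=D_\Phi(P_{X,Y}\|P_XP_Y)=\sum_{x,y}P_X(x)P_Y(y)\,\Phi\!\big(\ratio_P(x,y)\big)
\]
does not agree on the simplex with any polynomial of degree at most $m$ (if $\E[\hat D]\equiv F$ then $F$ would be such a polynomial). To handle all $m$ simultaneously I would restrict $F$ to a line segment inside the simplex and show it is not a polynomial in the segment parameter, since any degree-$\le m$ polynomial in the $p_{x,y}$ restricts to a degree-$\le m$ polynomial in that parameter; here we may assume $\Phi$ is not affine, the only case in which $F$ is non-constant.

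\textbf{Step 2: the curve.} Take $P^{(s)}$ supported on a $2\times2$ sub-block of $\X\times\Y$: fix $a,b>0$ with $a+b<1$ and set $p_{11}=s$, $p_{12}=a$, $p_{21}=b$, $p_{22}=1-a-b-s$ for $s\in I:=(0,1-a-b)$. Along this segment both marginals move, $P^{(s)}_X=(s+a,\,1-a-s)$ and $P^{(s)}_Y=(s+b,\,1-b-s)$; this is essential, because the marginals sit in the denominators of the arguments of $\Phi$, and along a segment with \emph{fixed} marginals $F$ would already be a polynomial whenever $\Phi$ is one, so nothing could be concluded. With moving marginals, $g(s):=F(P^{(s)})$ is a sum of four terms $q^{(s)}_i r^{(s)}_j\,\Phi\!\big(p^{(s)}_{ij}/(q^{(s)}_i r^{(s)}_j)\big)$ with each $q^{(s)}_i,r^{(s)}_j$ affine in $s$, and it remains to show $g$ is not a polynomial in $s$ — this is the crux, and I would split on the nature of $\Phi$.

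\textbf{Step 3: non-polynomiality.} If $\Phi$ is a polynomial of degree $n\ge2$ (the remaining case once $\Phi$ is convex and non-affine, e.g.\ the $\chi^2$ generator $\Phi(u)=(u-1)^2$), then $g(s)$ is a genuine rational function of $s$, and this is the delicate case since $g$ is then bounded and smooth on $I$: I would compute the Laurent expansion of $g$ at a zero $s_\ast$ of one of the affine factors, e.g.\ $s_\ast=-a$ where $q^{(s)}_1=s+a$ vanishes, note that only the two cells $(1,1)$ and $(1,2)$ sharing that denominator contribute a pole there, and verify that for a generic choice of $a,b$ the leading Laurent coefficient — a rational expression in $a,b$ that one checks is not identically zero, hence nonzero off a proper subvariety — does not vanish; a rational function agreeing with a polynomial on an interval has no poles, so $g$ cannot be a polynomial on $I$. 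If $\Phi$ is not a polynomial, then $g$ is non-polynomial for a softer reason: either some term blows up at an endpoint of $I$ (when $\lim_{u\downarrow0}\Phi(u)=+\infty$, letting $p_{11}=s\to0$ so $\ratio^{(s)}(1,1)\to0$), or $g$ inherits an uncancelled non-smoothness there (for the $u\log u$ of Shannon mutual information, $p_{11}\log(p_{11}/(q_1r_1))$ produces an $s\log s$ term as $s\to0^+$, so $g\notin C^1$), or, when $\Phi$ is smooth but transcendental, $g$ inherits a transcendental or branch-point singularity of its analytic continuation. The main obstacle is the polynomial-$\Phi$ pole computation: there everything is a perfectly nice bounded rational function on $I$, and ruling out a cancellation among the four terms really does require locating the surviving singularity and making the generic choice of $(a,b)$ precise.
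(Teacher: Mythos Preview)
Your Step~1 reduction is exactly right, and your Laurent--expansion argument for polynomial $\Phi$ of degree $\ge 2$ is sound; locating the pole of order $n-1$ at $s=-a$ and checking that the two contributing cells do not generically cancel is a clean way to handle that case. The gap is in your treatment of \emph{non-polynomial} $\Phi$. The theorem assumes $\Phi$ is twice differentiable on all of $[0,\infty)$, so $\Phi(0)$, $\Phi'(0)$, $\Phi''(0)$ all exist and are finite. This voids your sub-cases (a) and (b) outright: there is no blow-up as $u\downarrow 0$, and no $u\log u$--type non-differentiability to inherit. You are then left with only sub-case (c), ``$g$ inherits a transcendental or branch-point singularity of its analytic continuation'', which is not a proof: $\Phi$ need only be $C^2$, not real-analytic, so there may be no analytic continuation to speak of; and even when $\Phi$ is analytic you have not argued that the singularities of the four terms of $g$ cannot cancel. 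So as written you have no argument for a $C^2$, non-polynomial $\Phi$.

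The paper closes exactly this gap with a different device, and this is where the hypothesis $|\X|,|\Y|\ge 3$ actually enters. Rather than a curve with moving marginals, it first takes a three-parameter family on a $3\times 3$ block with \emph{fixed} uniform marginals,
\[
P_{X,Y}=\frac13\begin{pmatrix}1-\alpha-\beta&\alpha&\beta\\ \alpha&1-\alpha-\gamma&\gamma\\ \beta&\gamma&1-\beta-\gamma\end{pmatrix},
\]
so that every $\ratio(x,y)$ is an \emph{affine} function of $(\alpha,\beta,\gamma)$. The point of three parameters is that only the $(1,1)$ cell depends on both $\alpha$ and $\beta$; hence the mixed partial $\partial^2 D_\Phi/\partial\alpha\partial\beta$ isolates a single term and equals $\Phi''\!\big(3(1-\alpha-\beta)\big)$. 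Equating this to the (polynomial) mixed partial of $\E[\hat D]$ forces $\Phi''$, and therefore $\Phi$, to be a polynomial on an interval. This cleanly disposes of the non-polynomial case your curve cannot reach. A second family (diagonal $P'_{X,Y}$, now with moving marginals so that ratios like $1/\alpha$ appear) is then used to rule out polynomial $\Phi$, playing the role of your pole computation. Your $2\times 2$ moving-marginal curve is well suited to that second step, but it cannot replace the first: on a $2\times 2$ block the fixed-marginal family is only one-dimensional, so there is no mixed partial available to isolate a single $\Phi''$ term.
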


The key idea of this proof is that if we fix the estimator $\hat{D}$ and take the probability distribution $P$ as variables, the expected value $\E[\hat{D}(\mathbf{x}, \mathbf{y})]$ is a polynomial of distribution $P$.  However, the $\Phi$-divergence $\E_{ P_X P_Y}\left[\Phi\left(\frac{P_{X,Y}}{P_X P_Y}\right)\right]$ is usually not a polynomial, and we can fine one $P_{X,Y}$ to make these two values not equal.  The proof is in Appendix~\ref{sec:proof_impossiblity}.

\section{Machine Learning and Multiple Agents}\label{sec:more}
We have discussed the $\Phi$-pairing mechanisms on two agents, Alice and Bob.  What can we do if there are more than two agents, Alice, Bob, \ldots?  We first discuss two naive approaches which reduce the multiple agents setting to the two agent setting.  Then we propose two novel approaches that exploit the power of current machine learning algorithms.

\paragraph{Two naive approaches}
First, we can pair Alice with a randomly chosen peer agent and run our mechanism.\footnote{Formally, suppose in agents' common prior each pair of agents' signals is from a stochastic relevant prior family $\mathcal{P}$, and the learning algorithm $\mathcal{L}^\Phi$ is $(\epsilon, \delta)$-accurate with $m_L$ samples over $\mathcal{P}$.  The above mechanism is $\epsilon$-strongly (informed) truthful with probability at least $1-\delta$.  For example, if all agents' signal are from a finite set $\Z$ and for any pair of agents their signals are stochastic relevant and satisfy Eqns.~\eqref{eq:generative1} and \eqref{eq:generative2}, then by Theorem~\ref{thm:generative} and Lemma~\ref{lem:boundtv}, for any $\epsilon, \delta>0$, there exists  $S(\delta, \epsilon) = O\left(\frac{c_L^2}{\alpha^4\epsilon^2} \cdot \max\left\{|\Z|^2,\log\frac{1}{\delta}\right\}\right)$, such that the above mechanism is $(\delta, \epsilon)$-strongly truthful, inform-truthful, or truthful) with $S(\delta, \epsilon)$ tasks.
The similar argument works for continuous signal by Theorem~\ref{thm:framework} and Theorem~\ref{thm:empest}.}  
This approach keeps the sample complexity low as the number of agents increases.  However, if the average quality of agents' reports decreases as the number of agents increases, Alice will receive less payment and may give up working.   For example, say only Alice and Bob work on the tasks and the rest of agents report random noise.  Alice will now only have positive expected payment if her randomly matched peer is Bob.  As the number of agents increases, her expected payment will go to zero.  

On the other hand, we can pair Alice simultaneously with all other agents, and run our mechanism.  As the number of agents increases, this approach ensures Alice's expected payment is non-decreasing.  This is because the mutual information does not decrease by adding more information---the additional agents' reports.  However, in the extreme example above, where only Alice and Bob work, the sample complexity for ideal scoring function will increase.

\paragraph{Computing the $\Phi$-Mutual Information between $X_i$ and $X_{-i}$}

A challenge to employing this second method is to reliably compute the $\Phi$-mutual information between Alice's reports, $X_i$, and those of the other agents, $X_{-i}$. Our variation method is well suited to this challenge.

Recall that Mechanism~\ref{alg:reduction} reduces the mechanism design problem to learning a scoring rule, which Eqn.~\eqref{eq:optimal} reduces to learning
$$\ratio_P(x_i,x_{-i}) = \frac{P_{X_i,X_{-i}}(x_i,x_{-i})}{P_{X_i}(x_i)P_Y(x_{-i})} = \frac{P_{X_i\mid X_{-i}}(x_i\mid x_{-i})}{P_{X_i}(x_i)}.$$

Therefore, it is enough to learn both the marginal distribution, $P_{X_i}(x_i)$ and $P_{X_i\mid X_{-i}}(x_i\mid x_{-i})$.   The former can be estimated empirically.  However, when the number of agents is large, the later is high dimensional and must be learned.  Fortunately, this is just a soft-classifier\footnote{That is, it produces a forecast to predict her report rather than a single report.} which, given the reports of every agent but Alice on a particular task, (soft) predicts Alice's report on the same task.

Therefore, we can derive an  approximate ideal scoring rule by using machine learning techniques to produce a (soft) prediction of Alice's report for an answer given the reports 
of the other agents.  Specifically, the machine learning algorithm outputs $f(\cdot, \cdot)$ such that $f(x_{i}, x_{-i}) = P_{X_i \mid X_{-i}}(x_i \mid x_{-i})$.

Using Mechanism~\ref{alg:reduction}, we can divide the tasks into training and testing tasks.  The training tasks are used to learn $f$ and to estimate $P_{X}(x)$.  We can compute $K_{\rm{est}}$ from $f$ and $P_X(x)$, and then use Mechanism~\ref{alg:reduction} to pay the agents.

Note that for the guarantees of Theorem~\ref{thm:framework} to hold, it is required that $f$ is learned accurately on truthful strategy profiles.  However, we do not require the learning algorithms perform well on non-truthful strategy profiles.

\paragraph{Latent Variable Models}

Our pairing mechanisms are particularly powerful when the prior $P$ on agents' signals is a latent variable model.  In a latent variable model, signals are mutually independent conditioned on the latent variables.  Examples include Dawid-Skene models, Gaussian mixture models, hidden Markov models, and latent Dirichlet allocations.  When $ P$ is a latent variable model, we can pay Alice the (approximate) mutual information between her report and each task's latent variable.  
\begin{enumerate}
    \item Given a latent label recovery algorithm e.g., \cite{Zhang2014-vl}, we run such algorithm on all reports except Alice's, and get estimate of latent label for each tasks $(Y_1, \ldots, Y_m)$;
    \item Then, using Alice's report $(X_1, \ldots, X_m)$ and the estimated latent label $(Y_1, \ldots, Y_m)$ we can run Mechanism~\ref{alg:reduction} and Algorithm~\ref{alg:generative} to pay Alice the mutual information between Alice's report and the latent labels.
\end{enumerate}
This mechanism is (approximate) strongly truthful, because the $\Phi$-mutual information between Alice and the others' reports is less than the $\Phi$-mutual information between her reports and the tasks' latent variable due to data processing inequality.
This approach has the following advantages.  First, this provides a reduction from aggregation to elicitation.  
Second, paying mutual information between Alice's reports and the latent variable resolves the problems that the above naive approaches have.  Alice's payment increases as the number of agents increases by the data processing inequality and the sample complexity of scoring function mirrors that of the latent label algorithm, which typically will not increase.

\section{Conclusion}
We showed how to reduce the design of peer prediction information elicitation in the multitask setting to a learning problem.  As a result, we extend multitask peer prediction to the continuous setting for parametric models with bounded learning complexity.  We also obtain improved bounds on the sample complexity for the finite signal setting. We note that in practice one could use deep learning techniques to learn the scoring function.  However, we leave it for future work to obtain rigorous bounds in this setting.

\bibliographystyle{plainnat}
\bibliography{main}
\newpage
\appendix
\section{Supplementary materials}
\subsection{Convex analysis}\label{sec:convex_add}
Here is a useful table for some standard $\Phi$s and their conjugate:
\begin{table}[h]
    \centering
    \caption{Common $\Phi$, its convex conjugate and subgradient}\label{tab:fconjugate}
\begin{tabular}{l l l l }
    \toprule
    $\Phi$-divergence & $\Phi(a)$ & $\Phi^*(b)$ & $\partial \Phi(a)$\\
    \hline
    Total variation   & $\frac{1}{2}|a-1|$ & $\begin{cases}b, &\text{if }|b|\le 1/2\\
    +\infty, &\text{otherwise}\end{cases}
    $ & $\begin{cases}1/2, &\text{if }a>1\\
    -1/2, &\text{if }a<1\\
    [-1/2,1/2] &\text{if } a = 1\end{cases}
    $\\
    KL-divergence   &   $a\log a$   &  $\exp(b-1)$ & $1+\log a$\\
    $\chi^2$-divergence &   $a^2-1$   &  $b^2/4+1$ & $2a$\\
    Squared Hellinger distance    &$\left(1-\sqrt{a}\right)^2$   &  $\begin{cases}b/(1-b), &\text{if }b<1\\
    -\infty, &\text{otherwise}\end{cases}$ & $1-1/\sqrt{a}$\\
    \bottomrule
\end{tabular}
\end{table}
Theorem~\ref{thm:fdivergenceconjugate} is a direct result of Young-Fenchel inequality:
\begin{theorem}[Young-Fenchel inequality]\label{thm:fenchel}
Given $a\in \dom(\Phi)$, for all $b\in \dom(\Phi^*)$, 
$$\Phi(a)\ge ab-\Phi^*(b),$$
where the equality holds when $b\in \partial \Phi(a) = \{d:\Phi(c)\ge \Phi(a)+\langle d, c-a\rangle\}$, and $b = \Phi'(a)$ if $\Phi$ is convex and differential at $a$.
\end{theorem}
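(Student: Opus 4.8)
The plan is to read the inequality directly off the definition of the convex conjugate, and then unwind the equality condition into the defining property of the subdifferential; no serious machinery is required.

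For the inequality itself, I would note that $\Phi(a)\ge ab-\Phi^*(b)$ is merely a rearrangement of $ab-\Phi(a)\le \Phi^*(b)$. Since $\Phi^*(b)=\sup_{a'\in\dom(\Phi)}\{a'b-\Phi(a')\}$ and $a$ is itself an admissible point of this supremum (we are given $a\in\dom(\Phi)$), the quantity $ab-\Phi(a)$ is bounded above by the supremum, which is exactly $\Phi^*(b)$. That is the entire argument for the inequality; convexity of $\Phi$ and the normalization $\Phi(1)=0$ are not even used at this step.

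For the equality case, I would argue that $\Phi(a)=ab-\Phi^*(b)$ holds precisely when $a$ attains the supremum defining $\Phi^*(b)$, i.e. when $a'b-\Phi(a')\le ab-\Phi(a)$ for every $a'\in\dom(\Phi)$. Rearranging this last inequality gives $\Phi(a')\ge \Phi(a)+b(a'-a)$ for all $a'\in\dom(\Phi)$, which is exactly the statement that $b$ is a subgradient of $\Phi$ at $a$, i.e. $b\in\partial\Phi(a)$. Hence $b\in\partial\Phi(a)$ is sufficient for equality (it is in fact also necessary, though the theorem only asserts the sufficiency direction, which is what the applications use). Finally, when $\Phi$ is convex and differentiable at $a$, I would invoke the standard fact that the subdifferential of a convex function at a point of differentiability is the singleton $\{\Phi'(a)\}$; combined with the previous step, this gives that equality holds iff $b=\Phi'(a)$.

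There is essentially no obstacle here: the only point requiring mild care is the treatment of domains — restricting attention to $a\in\dom(\Phi)$ and $b\in\dom(\Phi^*)$, exactly as the statement does, sidesteps any issue with $\Phi$ or $\Phi^*$ taking the value $+\infty$ or with the supremum failing to be attained. As a downstream remark, once this inequality and its equality characterization are in hand, Theorem~\ref{thm:fdivergenceconjugate} follows by applying them pointwise with $a=P(\omega)/Q(\omega)$ and $b=k(\omega)$ and integrating against $Q$, with equality characterized by the subdifferential condition holding $Q$-almost everywhere — but that is a consequence of this theorem rather than part of its proof.
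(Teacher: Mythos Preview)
Your argument is correct and is the standard textbook proof. The paper, however, does not actually prove this theorem: it states the Young--Fenchel inequality without proof as a standard result from convex analysis (with a reference to \cite{rockafellar2015convex}) and then immediately uses it to derive the variational representation in Theorem~\ref{thm:fdivergenceconjugate}. So there is no ``paper's own proof'' to compare against; your proposal simply fills in what the paper leaves implicit, and does so correctly.
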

\begin{proof}[Proof of Theorem~\ref{thm:fdivergenceconjugate}]
By the definition of $\Phi$-divergence,
\begin{align*}
    D_\Phi(P\|Q) =& \E_Q\left[\Phi\left(\frac{P}{Q}\right)\right]\\
    =& \E_Q\left[\sup_{b}\left\{b\cdot \frac{P}{Q}-\Phi^*(b) \right\}\right]\tag{by Young-Fenchel}\\
    =& \sup_{k:\Omega\to \dom(\Phi^*)}\left\{\E_Q\left[k(\omega)\cdot \frac{P(\omega)}{Q(\omega)}-\Phi^*(k(\omega)) \right]\right\}\\
    =& \sup_{k:\Omega\to \dom(\Phi^*)}\left\{\E_Q\left[k(\omega)\cdot \frac{P(\omega)}{Q(\omega)}\right]-\E_Q\left[\Phi^*(k(\omega)) \right]\right\}\\
    =& \sup_{k:\Omega\to \dom(\Phi^*)}\left\{\E_P\left[k(\omega)\right]-\E_Q\left[\Phi^*(k(\omega)) \right]\right\}
\end{align*}
Therefore, by Young-Fenchel inequality the equlity holds when $k(\omega)\in \partial \Phi(P(\omega)/Q(\omega))$ almost everywhere on $Q$.
\end{proof}

This formulation is powerful.  For example, it can yield the data processing inequality easily.

\begin{corollary}[Data processing inequality]\label{cor:dataprocessing}
Consider a channel that produces $Y$ given $X$ based on the distribution $P_{Y|X}$ where $\Pr[Y|X] = P_{Y|X}$.  Given distributions $P_X$ and $Q_X$ of $X$ and $P_{Y|X}$, $P_Y$ is the (marginal) distribution of $Y$ when $X$ is sampled from $P_X$ and $Q_Y$ is the distribution of $Y$ when $X$ is generated by $Q_X$, then for any $\Phi$-divergence $D_\Phi$,
$$D_\Phi(P_X\|Q_X)\ge D_\Phi(P_Y\|Q_Y).$$
\end{corollary}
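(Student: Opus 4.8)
The plan is to derive the data processing inequality directly from the variational representation of $\Phi$-divergence (Theorem~\ref{thm:fdivergenceconjugate}) together with Jensen's inequality for the convex conjugate $\Phi^*$. The guiding intuition is that a ``distinguisher'' (scoring function) on the output space $\Y$ can always be pulled back through the channel $P_{Y\mid X}$ to a distinguisher on $\X$ that performs at least as well, so the supremum defining $D_\Phi(P_X\|Q_X)$ dominates the one defining $D_\Phi(P_Y\|Q_Y)$. If $P_X \not\ll Q_X$ the left-hand side is infinite and there is nothing to prove, so we may assume $P_X\ll Q_X$, whence also $P_Y\ll Q_Y$ since the marginals of $Y$ are pushforwards of those of $X$.

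Concretely, I would fix an arbitrary measurable $k:\Y\to\dom(\Phi^*)$ with $\E_{P_Y}[k(Y)]$ and $\E_{Q_Y}[\Phi^*(k(Y))]$ finite, and define its pullback $\tilde k:\X\to\R$ by $\tilde k(x)\triangleq \E_{Y\sim P_{Y\mid X=x}}[k(Y)]$. Because $\dom(\Phi^*)$ is an interval and $\tilde k(x)$ is an average of values in it, $\tilde k$ maps into $\dom(\Phi^*)$. By the tower property (the marginal $P_Y$ is the law of $Y$ when $X\sim P_X$ is fed through the channel), $\E_{P_X}[\tilde k(X)] = \E_{P_Y}[k(Y)]$. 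For the penalty term, convexity of $\Phi^*$ and the conditional Jensen inequality give $\Phi^*(\tilde k(x)) = \Phi^*\big(\E[k(Y)\mid X=x]\big)\le \E[\Phi^*(k(Y))\mid X=x]$, and integrating against $Q_X$ yields $\E_{Q_X}[\Phi^*(\tilde k(X))]\le \E_{Q_Y}[\Phi^*(k(Y))]$. Subtracting, $\E_{P_X}[\tilde k]-\E_{Q_X}[\Phi^*(\tilde k)]\ge \E_{P_Y}[k]-\E_{Q_Y}[\Phi^*(k)]$; by Theorem~\ref{thm:fdivergenceconjugate} the left side is at most $D_\Phi(P_X\|Q_X)$, and taking the supremum over $k$ on the right gives $D_\Phi(P_Y\|Q_Y)\le D_\Phi(P_X\|Q_X)$.

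I expect the only real work to be bookkeeping rather than any conceptual hurdle: one must check that $\tilde k$ indeed lands in $\dom(\Phi^*)$ (handled above), that the two integrability conditions for $\tilde k$ hold so the variational formula legitimately applies to it, and that the channel furnishes a bona fide regular conditional distribution $P_{Y\mid X}$ so the tower property and conditional Jensen step are valid --- all of which are implicit in the setup. An alternative, more self-contained route would be to invoke joint convexity of the perspective map $(p,q)\mapsto q\,\Phi(p/q)$ and apply Jensen directly to the definition $D_\Phi(P\|Q)=\E_Q[\Phi(P/Q)]$, but the variational argument is shorter given what has already been established.
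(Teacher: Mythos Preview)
Your proposal is correct and follows essentially the same argument as the paper: both proofs invoke the variational representation (Theorem~\ref{thm:fdivergenceconjugate}), pull back a test function on $\Y$ through the channel to obtain $\tilde k(x)=\E_{Y\sim P_{Y\mid X=x}}[k(Y)]$ (the paper calls this $h$), apply Jensen's inequality to the convex $\Phi^*$, and bound the result by the variational expression for $D_\Phi(P_X\|Q_X)$. The only cosmetic difference is that the paper starts from an optimizer $g$ of the variational problem for $D_\Phi(P_Y\|Q_Y)$ while you take an arbitrary $k$ and sup at the end; your version is slightly more careful about domains and absolute continuity, but the substance is identical.
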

\begin{proof}[Proof of Corollary~\ref{cor:dataprocessing}]
By Theorem~\ref{thm:fdivergenceconjugate}, there exists a real-valued function $g:\mathcal{Y}\to \R$ such that
\begin{align*}
    D_\Phi(P_Y\|Q_Y) =& \E_{P_Y}[g]-\E_{Q_Y}[\Phi^*(g)]\\
    =& \sum_{y\in \mathcal{Y}} P_Y(y)g(y)-\sum_{y\in \mathcal{Y}} Q_Y(y)\Phi^*(g(y))\\
    =& \sum_{x\in \mathcal{X}, y\in \mathcal{Y}} P_X(x)P_{Y|X}(y,x)g(y)-\sum_{x\in \mathcal{X}, y\in \mathcal{Y}} Q_X(x)P_{Y|X}(y,x)\Phi^*(g(y))\\
    =& \sum_{x\in \mathcal{X}} P_X(x)\sum_{y\in \mathcal{Y}}P_{Y|X}(y,x)g(y)-\sum_{x\in \mathcal{X}} Q_X(x)\sum_{y\in \mathcal{Y}}P_{Y|X}(y,x)\Phi^*(g(y)).
\end{align*}
Because $\Phi^*$ is convex and for all $x\in \mathcal{X}$, $P_{Y|X}(y,x)$ is a distribution over $y$, we have for all $x$ in $\mathcal{X}$, $
    \sum_{y\in \mathcal{Y}}P_{Y|X}(y,x)\Phi^*(g(y))\ge \Phi^*\left(\sum_{y\in \mathcal{Y}}P_{Y|X}(y,x)g(y)\right)$.
Therefore we have
$$    D_\Phi(P_Y\|Q_Y) \le  \sum_{x\in \mathcal{X}} P_X(x)\left(\sum_{y\in \mathcal{Y}}P_{Y|X}(y,x)g(y)\right)-\sum_{x\in \mathcal{X}} Q_X(x)\Phi^*\left(\sum_{y\in \mathcal{Y}}P_{Y|X}(y,x)g(y)\right).$$
Define $h(x)\triangleq \sum_{y\in \mathcal{Y}}P_{Y|X}(y,x)g(y)$, and we can further simplify it as
\begin{align*}
    D_\Phi(P_Y\|Q_Y) \le & \sum_{x\in \mathcal{X}} P_X(x)h(x)-\sum_{x\in \mathcal{X}} Q_X(x)\Phi^*\left(h(x)\right)\\
    \le& \sup_{h:\mathcal{X}\to \R} \sum_{x\in \mathcal{X}} P_X(x)h(x)-\sum_{x\in \mathcal{X}} Q_X(x)\Phi^*\left(h(x)\right)\\
    =& D_\Phi(P_X\|Q_X)
\end{align*}
which completes the proof.
\end{proof}
\subsection{Upper bounds for empirical processes}\label{sec:pre_emp}
In this section, we provide some standard results on empirical process, most of which are in \citet{van2000empirical}.  Consider $n$ independent and identically (i.i.d.) random variables $X_1, X_2, \ldots, X_n$ with distribution $P$ on a measurable space $(\Omega, \mathcal{F})$.  Let $\hat{P}_n$ be the \emph{empirical distribution} based on those $n$ random variables, i.e., for each set $A\in\mathcal{F}$,
\begin{equation}\label{eq:empirical}
    \hat{P}_n(A) = \frac{1}{n}\{\text{number of }X_i\in A, 1\le i\le n\} = \frac{1}{n}\sum_{i = 1}^n\mathbb{I}[X_i\in A]
\end{equation}
and let $\mathcal{K}\subset L_2(P) = \{k:\Omega\to\R:\int |k|^2dP<\infty\}$ be a collection of functions.  The \emph{empirical process indexed} by $\mathcal{K}$ is
\begin{equation}\label{eq:empirical_proc}
    V_n(\mathcal{K}) = \left\{v_n(k) = \sqrt{n}\int k d(\hat{P}_n-P):k\in \mathcal{K}\right\}.
\end{equation}
In this paper, we are mainly interested in uniform upper bound for Eqn.~\eqref{eq:empirical_proc}, i.e., upperbounds for
\begin{equation}\label{eq:empirical_radius}
    \sup_{k\in\mathcal{K}}|v_n(k)|
\end{equation} which can be think as the ``radius'' of random process \eqref{eq:empirical_proc}.
To upper bound \eqref{eq:empirical_radius}, there are several notions for ``complexity of functional spaces''.  Here are some examples.  If $\Omega\subseteq \R$, the set of cumulative density functions is $\{k_x:k_x(\omega) = \mathbb{I}[\omega<x], x\in \R\}$, and the upperbound for \eqref{eq:empirical_radius} implies the Central Limit Theorem.  We can consider a family of sets $\mathcal{A}\subseteq \mathcal{F}$ and a functional class over it $\{k_A:k_A(\omega) = \mathbb{I}[\omega\in A], A\in \mathcal{A}\}$, and the upper bound for \eqref{eq:empirical_radius} can be characterized by the \emph{VC-dimension} of the family of sets $\mathcal{A}$.  Or if $P$ is a distribution $d$-dimensional Gaussian and there is a set of linear functional $\{k_v:k_v(\omega) = v^\top \omega, \|v\|_2\le 1\}$, we can use \emph{metric entropy} to encode their complexity.

Now let us introduce some notions of functional complexity we used in the paper.
\begin{definition}
Given $k\in \mathcal{K}$, $L>0$, and distribution $P$, we define
$$\rho_L^2(k,P)\triangleq 2L^2\int \exp\left(\frac{|k|}{L}\right)-1-\frac{|k|}{L} dP$$
the \emph{Bernstein difference} between $k_1$ and $k_2$ is then $\rho_L^2(k_1-k_2,P)$ which can be seen as an extension of $L_2(P)$-norm, because $2(e^x-1-x)\approx x^2$ when $x$ is small.
\end{definition}

\begin{definition}[Generalized entropy with bracketing]  Let  $\mathcal{N}_{[],L}(\delta, \mathcal{L}, P)$ be the smallest value of $n$ for which there exists $n$ pairs of functions $\{(k_j^L, k_j^U)\}$ such that $\rho_L(k_j^U-k_j^L,P)\le \delta$ for all $j = 1, \ldots, n$ and such that for all $k\in \mathcal{K}$ there is a $j$ such that for all $\omega\in \Omega$
$$k_j^L(\omega)\le k(\omega)\le k_j^U(\omega).$$
Then $\mathcal{H}_{[],L}(\delta, \mathcal{K}, P) = \log \mathcal{N}_{[],L}(\delta, \mathcal{K}, P)$ is called the \emph{generalized entropy with bracketing}.
\end{definition}

A useful application of bracketing is to classes of parametric functions $\{k_t:t\in T\}$ that are Lipschitz in the parameter $t\in T$: There exists a metric $d$ on $T$ and a function $F:\Omega\to \mathbb{R}$ such that
$$|k_t(w)-f_s(w)|\le d(s,t) F(w)\text{ for all } w\in \Omega$$
Then the bracketing numbers of this class are bounded by the covering numbers of $T$.
\begin{theorem}
Let $\mathcal{K}_T = \{k_t:t\in T\}$ be a set of function satisfying the above condition.  Then for any norm $\|\cdot \|$ and $\epsilon>0$,
$$\mathcal{N}_{[]}(2\epsilon\|F\|, \mathcal{K}, \|\cdot\|)\le \mathcal{N}(\epsilon, T, d).$$
\end{theorem}
The following theorem shows the random variable \eqref{eq:empirical_radius} is subgaussian when the generalized entropy with bracketing is bounded.
\begin{theorem}[A uniform inequality~\cite{van2000empirical}]\label{thm:uniform_emp}
Given a functional class $\mathcal{K}$ and distribution $P$, if there exist constants $L$, $R$, $A$, $B$, and $C$ such that $\sup_{k\in\mathcal{K}} \rho_L(k,P)\le R$ and $\int^R_0\sqrt{\mathcal{H}_{[],L}(u,\mathcal{K},P)} du\le C$
$$\sqrt{(A+1)B^2}\left(\max\left\{ R, C\right\}\right)\le
    \epsilon\le \frac{A R^2}{L}\sqrt{n}$$
Then the empirical process $V_n(\mathcal{K})$ is bounded as
$$\Pr\left[\sup_{k\in\mathcal{K}}|v_n(k)|\ge \epsilon\right]\le  B\exp\left(-\frac{\epsilon^2}{(A+1)B^2R^2}\right).$$
\end{theorem}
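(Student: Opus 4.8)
This is the standard Bernstein-type maximal inequality for bracketing empirical processes (essentially Lemma~3.4.3 of \citet{van2000empirical}), so one option is simply to cite it; here is how I would carry the argument out. First I would record the one-function estimate. The Bernstein difference $\rho_L^2(k,P)=2L^2\E_{P}[\exp(|k|/L)-1-|k|/L]$ is, term by term in its power-series expansion, an encoding of the classical Bernstein moment condition: after centering, $\E_{P}[\,|k-\E_{P}k|^m\,]\le\tfrac{m!}{2}\,\bar v\,L^{m-2}$ for all integers $m\ge2$, with $\bar v$ a universal-constant multiple of $\rho_L^2(k,P)$. Feeding this into the moment generating function of $\sum_i (k(X_i)-\E_{P}k)$ and optimizing the exponential parameter yields, for every $k$ with $\rho_L(k,P)\le r$,
$$\Pr\!\left[\,|v_n(k)|\ge t\,\right]\;\le\;2\exp\!\left(-\frac{t^2}{2\left(r^2+tL/\sqrt{n}\right)}\right)\!,$$
so that in the ``sub-Gaussian regime'' $t\le r^2\sqrt{n}/L$ this is at most $2\exp(-t^2/(4r^2))$. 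The upper restriction $\epsilon\le AR^2\sqrt{n}/L$ in the statement is exactly what keeps every increment we later sum over inside this regime.

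Next I would chain along the bracketing cover. Fix dyadic scales $\delta_j=2^{-j}R$, $j=0,1,2,\dots$, and for each $j$ take a minimal $\delta_j$-bracketing $\{[\ell^j_i,u^j_i]\}_{i\le N_j}$ of $\mathcal{K}$ in $\rho_L(\cdot,P)$, with $N_j=\mathcal{N}_{[],L}(\delta_j,\mathcal{K},P)$. For $k\in\mathcal{K}$ pick at each level a containing bracket $[\ell^j_{i(k,j)},u^j_{i(k,j)}]$ and telescope $v_n(k)$ into (i)~$v_n$ of the coarsest lower-bracket function, (ii)~link increments $v_n\!\left(\ell^{j}_{i(k,j)}-\ell^{j-1}_{i(k,j-1)}\right)$, which have $\rho_L$-size of order $\delta_{j-1}$, and (iii)~a terminal remainder $v_n\!\left(k-\ell^{J}_{i(k,J)}\right)$ dominated one-sidedly by the finest bracket width $u^J_{i(k,J)}-\ell^J_{i(k,J)}$. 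Since a bracketing cover only sandwiches functions rather than forming a genuine net, I would handle (iii) by a truncation: split each bracket-width function at a threshold of order $\sqrt{n}\,\delta_J^2/L$, absorb the bounded part into the sub-Gaussian tail above, and bound the excess part via the Bernstein moment estimate; for $J$ large it is $o(\epsilon)$.

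Then I would union-bound and sum. Most of the deviation budget is assigned to the coarsest level: over the $\le N_0$ lower-bracket functions at scale of order $R$ — each of Bernstein radius at most $R$ — the Step-1 estimate gives a term of order $N_0\exp(-c\epsilon^2/R^2)$, which is the source of the final exponent $\epsilon^2/((A+1)B^2R^2)$. The remainder of the budget is spread over the refinement levels $j\ge1$ with weights proportional to $\delta_{j-1}\sqrt{\mathcal{H}_{[],L}(\delta_j,\mathcal{K},P)}$; there are at most $N_{j-1}N_j\le N_j^2$ distinct link increments at level $j$, so the union-bound cost is $2\mathcal{H}_{[],L}(\delta_j,\mathcal{K},P)$, and the total weight $\sum_j\delta_{j-1}\sqrt{\mathcal{H}_{[],L}(\delta_j,\mathcal{K},P)}$ is comparable to $\int_0^R\sqrt{\mathcal{H}_{[],L}(u,\mathcal{K},P)}\,du\le C$. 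This is why $C$ enters, and why the lower restriction $\epsilon\ge\sqrt{(A+1)B^2}\,\max\{R,C\}$ is imposed: it keeps each per-level piece above a constant multiple of $\delta_{j-1}\sqrt{\mathcal{H}_{[],L}(\delta_j,\mathcal{K},P)}$ and in the sub-Gaussian regime, so the geometric decay $\sum_j e^{-c4^j}$ collapses the refinement contribution below the level-$0$ term. Folding the union-bound and geometric-series factors into the universal constants $A,B$ gives $\Pr[\sup_{k\in\mathcal{K}}|v_n(k)|\ge\epsilon]\le B\exp(-\epsilon^2/((A+1)B^2R^2))$.

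I expect the main obstacle to be step~(iii) together with the constant bookkeeping: bracketing increments are one-sided and the finest-level remainder cannot be made small by metric arguments alone, so the truncation level must be tuned against $R$, $L$, and $n$ — precisely the role of the upper constraint $\epsilon\le AR^2\sqrt{n}/L$ — and the accumulated union-bound and summation factors must be absorbed into $A$ and $B$ so the stated exponent comes out exactly. Everything else reduces to Bernstein's inequality and entropy-integral chaining.
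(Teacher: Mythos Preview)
The paper does not actually prove this theorem: it is stated in Appendix~\ref{sec:pre_emp} purely as a citation from \citet{van2000empirical}, with no argument given. Your proposal --- Bernstein's one-function tail bound, dyadic chaining along a bracketing cover, truncation of the finest-scale remainder, and an entropy-integral union bound --- is the standard proof from that reference (essentially Lemma~5.13 there), so you have correctly reconstructed the argument the paper is invoking.
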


\section{Proofs in Sect.~\ref{sec:lem}}\label{app:lem}

\manipulation*

There are two aspects of manipulation: the reports for bonus and penalty tasks and the scoring functions $K$.  The first one can be handled through Data Processing inequality and the second is shown through the variational representation of $\Phi$-divergence.

\begin{proof}
The expected utility for Alice is
\begin{align*}
    &u_{A}(\bm{\theta},P,K)\\
    =& \E_{\mathbf{X}, \mathbf{Y}}\left[\E_{\bm{\theta}}\left[K\left(\hat{X}_{b}, \hat{Y}_{b}\right)\mid \mathbf{X}, \mathbf{Y}\right]\right]-\E_{\mathbf{X}, \mathbf{Y}}\left[\E_{\bm{\theta}}\left[\Phi^*\left(K\left(\hat{X}_{p}, \hat{Y}_{q}\right)\right)\mid \mathbf{X}, \mathbf{Y}\right]\right]\\
    =& \E_{\mathbf{X}, \mathbf{Y}}\left[\sum_{\hat{x}, \hat{y}}\theta_A(X_{b},\hat{x})\theta_B(Y_{b},\hat{y})K(\hat{x},\hat{y})\right]-\E_{\mathbf{X}, \mathbf{Y}}\left[\sum_{\hat{x}, \hat{y}}\theta_A(X_{p},\hat{x})\theta_B(Y_{q},\hat{y})\Phi^*\left(K(\hat{x},\hat{y})\right)\right]\\
    =& \sum_{x,y}P_{X,Y}(x,y)\sum_{\hat{x}, \hat{y}}\theta_A(x,\hat{x})\theta_B(y,\hat{y})K(\hat{x},\hat{y})-\sum_{x,y}P_X(x)P_Y(y)\sum_{\hat{x}, \hat{y}}\theta_A(x,\hat{x})\theta_B(y,\hat{y})\Phi^*\left(K(\hat{x},\hat{y})\right).
\end{align*}
The last equality uses the fact that $P_{X,Y}$ is the joint distribution of signals on bonus task $b$, $(x_{b},y_{b})$ , and $P_X P_Y$ is the joint distribution of signals on penalty tasks $p$ and $q$, $(x_{p},y_{q})$ due to Assumption~\ref{ass:apriori}. 

Because $\Phi^*$ is convex and fixing $x$ and $y$, $\theta_A(x,\hat{x})\theta_B(y,\hat{y})$ is a distribution over $\X\times\Y$, by Jensen's inequality we have
\begin{equation}\label{eq:strict1}
    \sum_{\hat{x},\hat{y}}\theta_A(x,\hat{x})\theta_B(y,\hat{y})\Phi^*\left(K(\hat{x},\hat{y})\right)\le \Phi^*\left(\sum_{\hat{x}, \hat{y}}\theta_A(x,\hat{x})\theta_B(y,\hat{y})K_{i,B}(\hat{x},\hat{y})\right)
\end{equation}
where the equality holds only if $\Phi^*$ is not strictly convex or $K(\hat{x},\hat{y})$ is constant in the support of $(\hat{x}, \hat{y})\mapsto \theta_A(x,\hat{x})\theta_B(y,\hat{y})$. Let $L(x,y)\triangleq \sum_{\hat{x}, \hat{y}}\theta_A(x,\hat{x})\theta_B(y,\hat{y})K(\hat{x},\hat{y})$.  Apply Eqn.~\eqref{eq:strict1} to $u_{A}$ and  we have
\begin{align}
    u_{A}\le& \sum_{x,y}P_{X,Y}(x,y)L(x,y)-\sum_{x,y}P_X(x)P_Y(y)\Phi^*\left(L(x,y)\right)\nonumber\\
    \le& \sup_{K:\X\times\Y\to \R}\left\{\sum_{x,y}P_{X,Y}(x,y)K(x,y)-\sum_{x,y}P_X(x)P_Y(y)\Phi^*\left(K(x,y)\right)\right\}\label{eq:strict2}\\
    =& D_\Phi(P_{X,Y}\|P_X P_Y).\nonumber
\end{align}
The last inequality holds by Theorem~\ref{thm:fdivergenceconjugate}, and it completes the proof.
\end{proof}

\truthtelling*
\begin{proof}[Proof Lemma~\ref{lem:truthtelling}]
By Theorem~\ref{thm:fdivergenceconjugate}, and the definition of Truth-telling strategy profile, we have
\begin{align*}
    &u_{A}(\bm{\tau},P,K^\star)\\
    =& \E_{\mathbf{X},\mathbf{Y}}\left[\E_{\bm{\tau}}\left[K^\star\left(\hat{X}_b, \hat{Y}_b\right)\mid \mathbf{X}, \mathbf{Y}\right]\right]-\E_{\mathbf{X},\mathbf{Y}}\left[\E_{\bm{\tau}}\left[\Phi^*\left(K^\star\left(\hat{X}_p, \hat{Y}_q\right)\right)\mid \mathbf{X}, \mathbf{Y}\right]\right]\\
    =& \E_{\mathbf{X},\mathbf{Y}}\left[K^\star(X_b, Y_b)\right]-\E_{\mathbf{X},\mathbf{Y}}\left[\Phi^*\left(K^\star(X_p, Y_q)\right)\right]\tag{definition of $\bm{\tau}$}\\
    =& \sum_{x,y}P_{X,Y}(x,y)K^\star(x,y)-\sum_{x,y}P_X(x)P_Y(y)\Phi^*\left(K^\star(x,y)\right)\\
    =& \sup_{K:\X\times\Y\to \R}\left\{\sum P_{X,Y}K-\sum P_X P_Y\Phi^*\left(K\right)\right\}\tag{by Theorem~\ref{thm:fdivergenceconjugate} and $K^\star$}\\
    =& D_\Phi(P_{X,Y}\|P_X P_Y).
\end{align*}
Moreover, because $P_{X,Y}$ is stochastic relevant, $D_\Phi(P_{X,Y}\|P_X P_Y)>0$.
\end{proof}

\oblivious*
\begin{proof}[Proof of Lemma~\ref{lem:oblivious}]
Recall that an oblivious strategy $\theta_A$  is oblivious to the private signal: for any $x$ ,$x'$ and $\hat{x}$ in $\X$,  $\theta_A(x,\hat{x})=\theta_A(x',\hat{x})$, and we can define a distribution $\mu_A\in \Delta_{\X}$ such that for all $x$ and $\hat{x}$ in $\X$, $\mu_A(\hat{x}) \triangleq \theta_A(x,\hat{x})$.  We also define $\nu_B(\hat{y}) \triangleq \sum_{y} P_Y(y)\theta_B(y, \hat{y})$ where $\nu_B$ is a distribution on $\Y$ and independent to $\mu_A$.

\begin{align*}
    &u_{A}(\bm{\theta},P,K)\\
    =& \E_{\mathbf{X},\mathbf{Y}}\left[\E_{\bm{\theta}}\left[K\left(\hat{X}_b, \hat{Y}_b\right)\mid \mathbf{X},\mathbf{Y}\right]\right]-\E_{\mathbf{X},\mathbf{Y}}\left[\E_{\bm{\theta}}\left[\Phi^*\left(K\left(\hat{X}_p, \hat{Y}_q\right)\right)\mid \mathbf{X},\mathbf{Y}\right]\right]\\
    =& \E_{\mathbf{X},\mathbf{Y}}\left[\sum_{\hat{x}, \hat{y}}\theta_A(X_{b},\hat{x})\theta_B(Y_{b},\hat{y})K(\hat{x},\hat{y})\right]-\E_{\mathbf{X},\mathbf{Y}}\left[\sum_{\hat{x}, \hat{y}}\theta_i(X_{p},\hat{x})\theta_B(Y_{q},\hat{y})\Phi^*\left(K(\hat{x},\hat{y})\right)\right]\\
    =& \E_{\mathbf{X},\mathbf{Y}}\left[\sum_{\hat{x}, \hat{y}}\mu_A(\hat{x})\theta_B(Y_{b},\hat{y})K(\hat{x}, \hat{y})\right]-\E_{\mathbf{X},\mathbf{Y}}\left[\sum_{\hat{x}, \hat{y}}\mu_A(k)\theta_B(Y_{p},\hat{y})\Phi^*\left(K(\hat{x}, \hat{y})\right)\right]\\
    =& \sum_{\hat{x}, \hat{y}}\mu_A(\hat{x})\nu_B(\hat{y})\left[K(\hat{x}, \hat{y})-\Phi^*\left(K(\hat{x}, \hat{y})\right)\right]\tag{by the definition of $\nu_B$}\\
    \le& \sum_{\hat{x}, \hat{y}}\mu_A(\hat{x})\nu_B(\hat{y})\left[\sup_{b\in \dom(\Phi^*)}\left\{1\cdot b-\Phi^*\left(y\right)\right\}\right] = \sup_{b\in \dom(\Phi^*)}\left\{1\cdot b-\Phi^*\left(y\right)\right\}\\
    =& \Phi^{**}(1) = \Phi(1) = 0.
\end{align*}
The last inequality is from the Definition~\ref{def:fdiv}.
\end{proof}

\eqauality*
\begin{proof}[Proof of Lemma~\ref{lem:strict}]

We first prove the first property: $\theta_A$ and $\theta_B$ are permutations.  Note that by the proof of Lemma~\ref{lem:manipulation}, $u_{A}(\bm{\theta}, P, K) = D_\Phi(P_{i,j}\| P_{i}P_{j})$ if and only if \eqref{eq:strict1} and \eqref{eq:strict2} are equalities, because $\X$ and $\Y$ are finite

Given Alice's and Bob's strategies $\theta_A$ and $\theta_B$, let $S_A(x)\triangleq\{\hat{x}\in \X:\theta_A(x,\hat{x})>0\}$, $S_B(y)\triangleq\{\hat{y}\in \Y:\theta_B(y,\hat{y})>0\}$ be the support of strategy $\theta_A$ on signal $x$ and $\theta_B$ on $y$ respectively.  Because $\Phi^*$ is strictly convex and $\Phi$ is differentiable, $u_{A}(\bm{\theta}, P, K) = D_\Phi(P_{X,Y}\| P_X P_Y)$ if and only if \eqref{eq:strict1}, and \eqref{eq:strict2} are equality.  Thus
\begin{equation}\label{eq:equal1}
    \forall x\in\X,y\in \Y, \hat{x}\in S_A(x), \hat{y}\in S_B(y),\; K(\hat{x},\hat{y}) = \Phi'\left(\frac{P_{X,Y}(x,y)}{P_{X}(x)P_{Y}(y)}\right).
\end{equation}
That is all reports pairs $(\hat{x},\hat{y})$ in the support of strategy $\theta_A$ on $x$ and $\theta_B$ on $y$ have the same score, $K(\hat{x}, \hat{y})$.  Moreover, the value equals to $\Phi'\left(P_{X,Y}(x,y)/(P_{X}(x)P_{Y}(y))\right)$.  Now we use this observation to finish the proof.

$\Rightarrow$) Because $\theta_A(x,\cdot)$ induces a probability, $|S_A(x)|\ge 1$ for all $x$.  Suppose $\theta_A$ is not a permutation. Because $\X$ is finite, there exists $x_1\neq x_2$ and $\hat{x}^*$ in $\X$ such that $\hat{x}^*\in S_A(x_1)$ and $\hat{x}^*\in S_A(x_2)$.  By \eqref{eq:equal1}, for all $y$ and $\hat{y}\in S_B(y)$,
$$\Phi'\left(\frac{P_{X,Y}(x_1,y)}{P_{X}(x_1)P_{Y}(y)}\right) = K(\hat{x}^*,\hat{y}) = \Phi'\left(\frac{P_{X,Y}(x_2,y)}{P_{X}(x_2)P_{Y}(y)}\right).$$
Because $\Phi$ is strictly convex and differentiable, $\Phi'$ is invertible, and thus for all $y\in \Y$, \begin{align*}
    \frac{P_{X,Y}(x_1,y)}{P_{X}(x_1)P_{Y}(y)} = \frac{P_{X,Y}(x_2,y)}{P_{X}(x)P_{Y}(y)}
\end{align*}
which shows $P_{X,Y}$ is not stochastic relevant--- Given signal $x_1$ Alice's poster for Bob's signal is identical to her poster with signal $x_2$--- and reaches contradiction.  Therefore there exist permutations $\pi_A$ and $\pi_B$ over $\X$ and $\Y$ such that $\theta_A = \pi_A$ and $\theta_B = \pi_B$.  

For the second part, by \eqref{eq:equal1}, for all $x,y$ we have
$$K(\pi_A(x),\pi_B(y)) = \Phi'\left(\frac{P_{X,Y}(x,y)}{P_{X}(x)P_{Y}(y)}\right).$$

$\Leftarrow$) On the other hand, if $\theta_A = \pi_A$ and $\theta_B = \pi_B$ which are permutations, and for all $x,y$, and  $K(\pi_A(x),\pi_B(y)) = \Phi'\left(P_{X,Y}(x,y)/(P_{X}(x)P_{Y}(y))\right)$, we can apply Eqn.~\eqref{eq:equal1}, and have $u_{A}(\bm{\theta}, P, K) = D_\Phi(P_{X,Y}\| P_X P_Y)$.
\end{proof}

\section{Proofs in Sect.~\ref{sec:framework}}\label{sec:proof_framework}
\begin{proof}[Proof of Lemma~\ref{lem:fantasy}]
Given a prior $P$, the payment to Alice under truth-telling strategy profile in the fantasy function~\eqref{eq:fantasy} by Lemma~\ref{lem:truthtelling} is
\begin{equation}\label{eq:fantasy1}
    F^\Phi_A(\bm{\tau}, P) = u_A(\bm{\tau}, P, K^\star_{P, \Phi}) = D_\Phi(P_{X,Y}\| P_X P_Y).
\end{equation}

Additionally, by Lemma~\ref{lem:manipulation}, 
$$F^\Phi_A(\bm{\theta}, P) = u_A(\bm{\theta}, P, K^\star_{\bm{\theta}\circ P, \Phi}) \le D_\Phi(P_{X,Y}\| P_X P_Y) = F^\Phi_A(\bm{\tau}, P)$$
which shows the truth-telling strategy profile is a Bayesian Nash equilibrium.  

To show that the mapping is inform-truthful, by Lemma~\ref{lem:oblivious}, if $\bm{\theta}$ is an oblivious strategy profile, 
$F^\Phi_A(\bm{\theta}, P) = u_A(\bm{\theta}, P, K^\star_{\bm{\theta}\circ P, \Phi}) \le 0$.
Therefore when $P$ is stochastic relevant by Eqn.~\eqref{eq:fantasy1} and Lemma~\ref{lem:truthtelling} we have 
$$F^\Phi_A(\bm{\theta}, P)\le 0 <F^\Phi_A(\bm{\theta}, P).$$

Finally, to show the mechanism is strongly truthful, if there is a strategy profile $\bm{\theta}$ such that $F^\Phi_A(\bm{\theta}, P) = F^\Phi_A(\bm{\tau}, P)$, we have
$$u_A(\bm{\theta}, P, K^\star_{\bm{\theta}\circ P, \Phi}) = u_A(\bm{\tau}, P, K^\star_{P, \Phi}),$$
so by Lemma~\ref{lem:strict} $\bm{\theta}$ is a permutation strategy profile which completes the proof.
\end{proof}

Note that the statement of Theorem~\ref{thm:framework} is a little subtle.  Mentioned in the footnote in Remark~\ref{remark:framework} there the statement consists of two parts of randomness: an event with probability $1-\delta$, and the conditional expected payment to Alice under such event is $\epsilon$-close to a strongly truthful (inform-truthful or truthful) mapping.   Therefore, to prove Theorem~\ref{thm:framework}, it is sufficient to show there exists an event $\mathcal{E}$ such that 
\begin{enumerate}
    \item it happens with probability at least $1-\delta$, 
    \item Alice's conditional ex-ante utility under truth-telling strategy profile is $\epsilon$-close to $F^\Phi_A(\bm{\tau}, P)$ defined in \eqref{eq:fantasy}, and  
    \item for all strategy profile $\bm{\theta}$ Alice's conditional ex-ante utility under $\bm{\theta}$ is less than $F^\Phi_A(\bm{\theta}, P)$.
\end{enumerate}
\begin{proof}[Proof of Theorem~\ref{thm:framework}]
First, if Alice's and Bob's strategy profile is $\bm{\theta}$, the learning tasks $(\hat{\mathbf{x}}_L,\hat{\mathbf{y}}_L)$ and scoring tasks $(\hat{\mathbf{x}}_S,\hat{\mathbf{y}}_S)$ are both generated from distribution $\bm{\theta}\circ P$ i.i.d..  Additionally, the ex-ante payment to Alice is over two randomness: learning tasks and scoring tasks.  To make this distinction explicit, we let $U_A(\hat{\mathbf{x}}_S,\hat{\mathbf{y}}_S, \hat{\mathbf{x}}_L,\hat{\mathbf{y}}_L)$ be Alice's payment when the report profile is $(\hat{\mathbf{x}},\hat{\mathbf{y}}) = (\hat{\mathbf{x}}_S,\hat{\mathbf{y}}_S, \hat{\mathbf{x}}_L,\hat{\mathbf{y}}_L)$.  Then Alice's ex-ante payment under strategy profile $\bm{\theta}$ in mechanism $\mathcal{M}^{\Phi, \mathcal{L}}$ is 
$$u_A(\bm{\theta}; P, \mathcal{M}^{\Phi, \mathcal{L}}) = \E_{\substack{(\hat{\mathbf{x}}_L,\hat{\mathbf{y}}_L)\sim \bm{\theta}\circ P^{m_L};\\ (\hat{\mathbf{x}}_S,\hat{\mathbf{y}}_S)\sim {\bm{\theta}\circ P^{m_S}}}}\left[U_A(\hat{\mathbf{x}}_S,\hat{\mathbf{y}}_S, \hat{\mathbf{x}}_L,\hat{\mathbf{y}}_L)\right] = \E_{(\hat{\mathbf{x}}_L,\hat{\mathbf{y}}_L)}\left[\E_{(\hat{\mathbf{x}}_S,\hat{\mathbf{y}}_S)}\left[U_A(\hat{\mathbf{x}}_S,\hat{\mathbf{y}}_S, \hat{\mathbf{x}}_L,\hat{\mathbf{y}}_L)\mid (\hat{\mathbf{x}}_L,\hat{\mathbf{y}}_L)\right]\right].$$
We further define 
$$u_A(\bm{\theta}, P, \mathcal{L}(\mathbf{x}_L, \mathbf{y}_L)) \triangleq \E_{(\hat{\mathbf{x}}_S,\hat{\mathbf{y}}_S)\sim \bm{\theta}\circ P}\left[U_A(\hat{\mathbf{x}}_S,\hat{\mathbf{y}}_S, \mathbf{x}_L,\mathbf{y}_L)\right]$$
where the expectation is only taken on the scoring tasks, but the learning tasks are fixed.

Now we define an event 
$$\mathcal{E} = \{\mathbf{x}_L, \mathbf{y}_L):u_{A}(\bm{\tau}, P, \mathcal{L}(\mathbf{x}_L, \mathbf{y}_L)) > D_\Phi(P_{X,Y}\| P_X P_Y)-\epsilon\}$$
which is in the probability space generated by the learning tasks.  Because $\mathcal{L}$ is $( \delta, \epsilon)$-accurate on $\mathcal{P}$ and the joint signal distribution $P\in \mathcal{P}$, the probability of $\mathcal{E}$ is greater than $1-\delta$.

By the definition of $\mathcal{E}$, for all $(\mathbf{x}_L, \mathbf{y}_L)\in \mathcal{E}$,
\begin{align*}
    &u_{A}(\bm{\tau}, P, \mathcal{L}(\mathbf{x}_L, \mathbf{y}_L))\\
    >& D_\Phi(P_{X,Y}\| P_X P_Y)-\epsilon\\
    =& u_A(\bm{\tau}, P, K^\star_{\bm{\tau}\circ P, \Phi})-\epsilon\tag{by Lemma~\ref{lem:truthtelling}}\\
    =&F_A^\Phi(\bm{\tau}, P)-\epsilon\tag{by \eqref{eq:fantasy}}
\end{align*}
Therefore, Alice's conditional expected payment under truth-telling strategy profile is $\epsilon$ close to the fantasy function.

Finally, it is sufficient that for all $(\mathbf{x}_L, \mathbf{y}_L)$, and $\bm{\theta}$
$$u_{A}(\bm{\theta}, P, \mathcal{L}(\mathbf{x}_L, \mathbf{y}_L))\le F_A^\Phi(\bm{\theta}, P).$$
Formally, for all $P\in \mathcal{P}$, $\bm{\theta}$, and $(\mathbf{x}_L, \mathbf{y}_L)$, 
\begin{align*}
    &u_{A}(\bm{\theta}, P, \mathcal{L}(\mathbf{x}_L, \mathbf{y}_L))\\
    =&u_{A}(\bm{\tau}, \bm{\theta}\circ P, \mathcal{L}(\mathbf{x}_L, \mathbf{y}_L))\tag{by \eqref{eq:associative}}\\
    \le& u_A(\bm{\tau}, \bm{\theta}\circ P, K^\star_{\bm{\theta}\circ P, \Phi})\tag{by Lemma~\ref{lem:manipulation}}\\
    =& u_A(\bm{\theta}, P, K^\star_{\bm{\theta}\circ P, \Phi})\tag{by \eqref{eq:associative}}\\
    =& F_A^\Phi(\bm{\theta}, P),
\end{align*}
and we complete the proof.
\end{proof}

\section{Proofs in Sect.~\ref{sec:learning}}

\subsection{Proofs in Sect.~\ref{sec:sufficient}}

\begin{proof}[Proof of Lemma~\ref{lem:bregman}]
\begin{align*}
    &D_\Phi(P_{X,Y}\| P_X P_Y)-u_{A}(\bm{\tau},P,K)\\
    =& \int K^\star dP_{X,Y} - \int \Phi^*(K^\star)dP_X P_Y-\int KdP_{X,Y} + \int \Phi^*(K)dP_X P_Y\\
    =& \int \Phi^*(K)- \Phi^*(K^\star)+\frac{dP_{X,Y}}{dP_X P_Y}\left(K^\star-K\right) dP_X P_Y\\
    =& \int \Phi^*(K)- \Phi^*(K^\star)-(\Phi^*)'\left(K^\star\right)\left(K-K^\star\right) dP_X P_Y
\end{align*}
The last equality holds since $K^\star(x,y) = \Phi'\left(\frac{dP_{X,Y}(x,y)}{dP_X P_Y(x,y)}\right)$, so $(\Phi^*)'\left(K^\star(x,y)\right) = \frac{dP_{X,Y}(x,y)}{dP_X P_Y(x,y)}$ by Theorem~\ref{thm:fenchel}.  
The final line is indeed the Bregman divergence from $K^\star$ to $K$ with respect to measure $dP_X P_Y$ and $\Phi^*$.
\end{proof}

\begin{proof}[Proof of Theorem~\ref{thm:generative}]
Let $\hat{{K}}$ be the output of Algorithm~\ref{alg:generative}, and $K^\star$ be a $(\Phi, P)$-ideal scoring function defined in \eqref{eq:optimal}.  We have
\begin{align*}
    &u_{A}(\bm{\tau},P,\hat{{K}})\\
    =& \E_{\mathbf{X},\mathbf{Y}}\left[\hat{K}(X_{b},Y_{b})\right]-\E_{\mathbf{X}, \mathbf{Y}}\left[\Phi^*\left(\hat{K}(X_{p},Y_{q})\right)\right]\\
    =& \sum_{x,y} P_{X,Y}(x,y)\hat{K}(x,y)-P_{X}(x)P_Y(y)\Phi^*\left(\hat{K}(x,y)\right)\\
    =& \sum_{x,y :P_{X}(x)P_Y(y)\neq 0}P_{X}(x)P_Y(y)\left[ \frac{P_{X,Y}(x,y)}{P_{X}(x)P_Y(y)}\hat{K}(x,y)-\Phi^*\left(\hat{K}(x,y)\right)\right]\\
    =& \sum P_{X}P_Y\left[ \frac{\hat{P}_{X,Y}}{\hat{P}_{X}\hat{P}_Y}\hat{K}-\Phi^*\left(\hat{K}\right)+\left(\frac{P_{X,Y}}{P_{X}P_Y}-\frac{\hat{P}_{X,Y}}{\hat{P}_{X}\hat{P}_Y}\right)\hat{K}\right]
\end{align*}
Because $\hat{K}(x,y) \in \partial \Phi\left(\frac{\hat{P}_{X,Y}(x,y)}{\hat{P}_X(x)\hat{P}_{Y}(y)}\right)$, by Young-Fenchel inequality (Theorem~\ref{thm:fenchel}) we have 
$ \frac{\hat{P}_{X,Y}(x,y)}{\hat{P}_X(x)\hat{P}_Y(y)}\hat{K}(x,y)-\Phi^*\left(\hat{K}(x,y)\right) = \Phi\left( \frac{\hat{P}_{X,Y}(x,y)}{\hat{P}_X(x)\hat{P}_Y(y)}\right)$, so
\begin{equation}\label{eq:generative1}
    u_{A}(\bm{\tau},P,\hat{{K}}) = \sum P_{X}P_Y\left[\Phi\left( \frac{\hat{P}_{X,Y}}{\hat{P}_X\hat{P}_Y}\right)+\left(\frac{P_{X,Y}}{P_{X}P_Y}-\frac{\hat{P}_{X,Y}}{\hat{P}_X\hat{P}_Y}\right)\hat{K}\right]
\end{equation}
On the other hand, by Definition~\ref{def:fdiv}, 
\begin{equation}\label{eq:generative2}
    D_\Phi(P_{X,Y}\| P_X P_Y) = \sum P_{X}P_Y\cdot\Phi\left( \frac{P_{X,Y}}{P_{X}P_Y}\right)
\end{equation}
By combining \eqref{eq:generative1} and \eqref{eq:generative2}, we have
\begin{align*}
    &D_\Phi(P_{X,Y}\| P_X P_Y)-u_{A}(\bm{\tau},P,{K})\\
    =& \sum P_{X}P_Y\left[\Phi\left( \frac{P_{X,Y}}{P_{X}P_Y}\right)-\Phi\left( \frac{\hat{P}_{X,Y}}{\hat{P}_X\hat{P}_Y}\right)-\left(\frac{P_{X,Y}}{P_{X}P_Y}-\frac{\hat{P}_{X,Y}}{\hat{P}_X\hat{P}_Y}\right)\hat{K}\right]\\
    \le& \sum P_{X}(x)P_Y(y)\left|\Phi\left( \frac{P_{X,Y}(x,y)}{P_{X}(x)P_Y(y)}\right)-\Phi\left( \frac{\hat{P}_{X,Y}(x,y)}{\hat{P}_X(x)\hat{P}_Y(y)}\right)\right|\\
    &+\sum P_{X}(x)P_Y(y)\left|\frac{P_{X,Y}(x,y)}{P_{X}(x)P_Y(y)}-\frac{\hat{P}_{X,Y}(x,y)}{\hat{P}_X(x)\hat{P}_Y(y)}\right|\cdot |\hat{K}(x,y)|
\end{align*}
Thus, it is sufficient to show
\begin{align}
    & \sum P_{X}(x)P_Y(y)\left|\Phi\left( \frac{P_{X,Y}(x,y)}{P_{X}(x)P_Y(y)}\right)-\Phi\left( \frac{\hat{P}_{X,Y}(x,y)}{\hat{P}_X(x)\hat{P}_Y(y)}\right)\right|\le \frac{3c_L\delta}{\alpha^2}\label{eq:tv2fdiv1}\\
    &\sum P_{X}(x)P_Y(y)\left|\frac{P_{X,Y}(x,y)}{P_{X}(x)P_Y(y)}-\frac{\hat{P}_{X,Y}(x,y)}{\hat{P}_X(x)\hat{P}_Y(y)}\right|\cdot |\hat{K}(x,y)|\le \frac{3c_L\delta}{\alpha^2}\label{eq:tv2fdiv2}
\end{align}

For all $x$ $P_X(x)$ is nonzero by assumption~\ref{ass:apriori}.  By the assumption in the statement $P_{X,Y}>2\alpha$ if it's not zero, so $P_X(x)>2\alpha$.  Furthermore, since $\|P_{X,Y}-\hat{P}_{X,Y}\|_{TV}\le \delta<\alpha$, $\hat{P}_X(x)\ge \alpha$.  Therefore for all $x$ and $y$, $P_{X,Y}(x,y)\neq 0$ we have
\begin{equation}\label{eq:tv2fdiv3}
    \alpha\le \frac{P_{X,Y}(x,y)}{P_{X}(x)P_Y(y)}\text{ and }\frac{\hat{P}_{X,Y}(x,y)}{\hat{P}_X(x)\hat{P}_Y(y)}\le \frac{1}{\alpha}.
\end{equation}
To prove \eqref{eq:tv2fdiv2}, we first show an upper bound for $|\hat{K}(x,y)|$.  By the definition of $\hat{K}$, it is in the sub-gradient of $\Phi$ at $\frac{\hat{P}_{X,Y}(x,y)}{\hat{P}_X(x)\hat{P}_Y(y)}$, and it is in $[ \alpha,1/\alpha]$ due to \eqref{eq:tv2fdiv3}.  Since $\Phi$ being $c_L$-Lipschitz in such interval, we have
\begin{equation}\label{eq:tv2fdiv4}
|\hat{K}(x,y)| \le c_L.
\end{equation}
We are ready to prove \eqref{eq:tv2fdiv2}.
\begin{align*}
    &\sum_{x,y :P_{X}(x)P_Y(y)\neq 0} P_{X}(x)P_Y(y)\left|\frac{P_{X,Y}(x,y)}{P_{X}(x)P_Y(y)}-\frac{\hat{P}_{X,Y}(x,y)}{\hat{P}_X(x)\hat{P}_Y(y)}\right||\hat{K}(x,y)|\\
    \le& \sum_{x,y :P_{X}(x)P_Y(y)\neq 0} P_{X}(x)P_Y(y)\left|\frac{P_{X,Y}(x,y)}{P_{X}(x)P_Y(y)}-\frac{\hat{P}_{X,Y}(x,y)}{\hat{P}_X(x)\hat{P}_Y(y)}\right|c_L\tag{by \eqref{eq:tv2fdiv4}}\\
    =& c_L\sum\frac{1}{\hat{P}_X(x)\hat{P}_Y(y)}\left|P_{X,Y}(x,y)\hat{P}_X(x)\hat{P}_Y(y)-\hat{P}_{X,Y}(x,y){P_{X}(x)P_Y(y)}\right|\\
    \le& \alpha^2c_L\sum \left|P_{X,Y}(x,y)\hat{P}_X(x)\hat{P}_Y(y)-\hat{P}_{X,Y}(x,y){P_{X}(x)P_Y(y)}\right|\tag{$\hat{P}_X,\hat{P}_Y\ge \alpha$}\\
    \le& \alpha^2c_L\sum P_{X,Y}\left|\hat{P}_X\hat{P}_Y-P_{X}P_Y\right|+P_{X}P_Y\left|P_{X,Y}-\hat{P}_{X,Y}\right|\\
    \le& \alpha^2c_L\sum \left|\hat{P}_X\hat{P}_Y-P_{X}P_Y\right|+\left|P_{X,Y}-\hat{P}_{X,Y}\right| \le 3\alpha^2c_L\delta.
\end{align*}

Now let's prove \eqref{eq:tv2fdiv1}.  Because $\Phi$ is $c_L$-Lipschitz in $[ \alpha, 1/\alpha]$, by \eqref{eq:tv2fdiv3}, we have
\begin{equation}\label{eq:tw2fdiv5}
    \left|\Phi\left( \frac{P_{X,Y}(x,y)}{P_{X}(x)P_Y(y)}\right)-\Phi\left( \frac{\hat{P}_{X,Y}(x,y)}{\hat{P}_X(x)\hat{P}_Y(y)}\right)\right|\le c_L\left| \frac{P_{X,Y}(x,y)}{P_{X}(x)P_Y(y)}- \frac{\hat{P}_{X,Y}(x,y)}{\hat{P}_X(x)\hat{P}_Y(y)}\right|.
\end{equation}
With argument similar to the proof of \eqref{eq:tv2fdiv2}, we completes the proof.
\end{proof}
\subsection{Proofs in Sect.~\ref{sec:algorithm}}

\begin{proof}[Proof of Lemma~\ref{lem:breg2emp}]
Because $\tilde{K}$ satisfies Eqn.~\eqref{eq:erm} and $K^\star\in\mathcal{K}$, we have
$$\int \tilde{K}d\tilde{P}_{X,Y}-\int \Phi^*(\tilde{K})d\tilde{P}_X\tilde{P}_Y\ge \int K^\star d\tilde{P}_{X,Y}-\int \Phi^*(K^\star)d\tilde{P}_X\tilde{P}_Y.$$
On the other hand,
$$B_{\Phi^*, P_X P_Y}(\tilde{K},K^\star) = \int \Phi^*(\tilde{K})- \Phi^*(K^\star)dP_X P_Y-\int\left(\tilde{K}-K^\star\right)dP_{X,Y}.$$
Combining these two we have an upper bound for $B_{\Phi^*, P_X P_Y}(\tilde{K},K^\star)$,
$$\int \left(\Phi^*(\tilde{K})- \Phi^*(K^\star)\right)\left(dP_X P_Y-d\tilde{P}_X\tilde{P}_Y\right)-\int\left(\tilde{K}-K^\star\right)\left(dP_{X,Y}-d\tilde{P}_{X,Y}\right)$$
which completes the proof.
\end{proof}

\begin{proof}[Proof of Theorem~\ref{thm:empest}]
By Lemma~\ref{lem:bregman} and \ref{lem:breg2emp}, we know the error between $D_\Phi(P_{X,Y}\| P_X P_Y)-u_{A}(\bm{\tau}, P, K)$ can be upper bound by
\begin{align}
     &\sup_{k\in\mathcal{K}}\left|\int \Phi^*(k)-\Phi^*(K^\star)d(\tilde{P}_X\tilde{P}_Y-P_X P_Y)\right|\label{eq:empest1}\\
     &\sup_{k\in\mathcal{K}}\left|\int k-K^\star d(\tilde{P}_{X,Y}-P_{X,Y}) \right|.\label{eq:empest2}
\end{align}
Now we can apply the uniform bound in Theorem~\ref{thm:uniform_emp} for \eqref{eq:empest2}.  By taking $A = \frac{\varepsilon L_1}{R_1^2}$, $B = 1$, $L = L_1$, $R = R_1$, and $\epsilon = \varepsilon\sqrt{n}$, we have
\begin{align*}
    \Pr\left[\sup_{k\in\mathcal{K}}|v_n(k)|\ge \epsilon\right] &= \Pr\left[\sup_{k\in\mathcal{K}}\left|\sqrt{n}\int k d(\hat{P}_n-P)\right|\ge \varepsilon \sqrt{n}\right]\\
    &= \Pr\left[\sup_{k\in\mathcal{K}}\left|\int k d(\hat{P}_n-P)\right|\ge \varepsilon\right]\\
    &\le  B\exp\left(-\frac{\epsilon^2}{B^2(A+1)R_1^2}\right)\\
    &\le  \exp\left(-\frac{\varepsilon^2}{(A+1)R_1^2} n\right) \le \frac{1}{n^2}\delta.
\end{align*}
The last inequality is true by taking $n = m_L/3 = O\left(\frac{(A+1)R_1^2}{\varepsilon^2}\log \frac{n^2}{\delta}\right) = O\left(\frac{1}{\varepsilon^2}\log \frac{n^2}{\delta}\right)$ when $\varepsilon$ is small enough.  We can derive similar upper bound for \eqref{eq:empest2}, and we complete the proof
\end{proof}

\subsection{Proof of Theorem~\ref{thm:nonexists}}\label{sec:proof_impossiblity}
\begin{proof}
Let's consider the following prior distribution $P_{X,Y}$: Given non-negative variables $\alpha, \beta, \gamma$ such that $\alpha+\beta+\gamma\le 1$, we set the distribution over $\X\times\Y = \{1,2,3\}\times \{1,2,3\}$ to be
$$P_{X,Y} = \frac{1}{3}
\begin{pmatrix}
1-\alpha-\beta   &\alpha &\beta \\
\alpha  & 1-\alpha-\gamma & \gamma  \\
\beta   &   \gamma  & 1-\beta-\gamma
\end{pmatrix}$$
An empirical distribution (histogram) from $x_1, \ldots, x_m$ can be represented by $9$ integers $\mathbf{m} = (m_{k,l})$ where $k$ and $l$ are between $1$ to $3$ and $m_{k,l}$ is the number of $(k,l)$ in those $m$ samples, and the distribution of $\mathbf{m}$ forms a multi-nomial distribution.  Therefore we can compute the expectation of $\hat{D}$,
\begin{align*}
    &\E[\hat{D}(\mathbf{x}, \mathbf{y})]\\
    =& \sum_{\mathbf{m}:\sum m_{k,l} = m}\frac{m!}{\prod_{k,l} m_{k,l}!}\prod_{k,l} P_{X,Y}(k,l)^{m_{k,l}}\hat{D}(\mathbf{m})\\
    =& \sum_{\mathbf{m}:\sum m_{k,l} = m}\frac{m!}{\prod_{k,l} m_{k,l}!}\left(\frac{\alpha}{3}\right)^{m_{1,2}+m_{2,1}}\left(\frac{\beta}{3}\right)^{m_{1,3}+m_{3,1}}\left(\frac{\gamma}{3}\right)^{m_{2,3}+m_{3,2}}\\
    &\cdot\left(1-\frac{\alpha+\beta}{3}\right)^{m_{1,1}}\left(1-\frac{\beta-\gamma}{3}\right)^{m_{2,2}}\left(1-\frac{\alpha+\gamma}{3}\right)^{m_{3,3}}\hat{D}(\mathbf{m})
\end{align*}
which is a polynomial of $\alpha,\beta$ and $\gamma$.  On the other hand, the $\Phi$-divergence is
\begin{align*}
    \frac{1}{9}\left[2\Phi\left(3\alpha\right)+2\Phi\left(3\beta\right)+2\Phi\left(3\gamma\right)+\Phi\left(3(1-\alpha-\beta)\right)+\Phi\left(3(1-\beta-\gamma)\right)+\Phi\left(3(1-\alpha-\gamma)\right)\right]
\end{align*}
By taking partial derivative with respect to $\alpha$ then $\beta$, 
$$\frac{\partial^2}{\partial \beta \partial \alpha}\E[\hat{D}(x_1, x_2, \ldots, x_m)] = \frac{\partial^2}{\partial \beta \partial \alpha}\E_{ P_X P_Y}\left[\Phi\left(\frac{P_{X,Y}}{P_X P_Y}\right)\right] = \Phi''(3(1-\alpha-\beta))$$
which implies the second derivative of $\Phi$ is a polynomial on $(0,3)$ and $\Phi(a)$ is a polynomial on $(0,3)$.

Similarly we take another prior distribution
$$P_{X,Y}' =
\begin{pmatrix}
1-\alpha-\beta   &0 &0 \\
0  & \alpha & 0  \\
0   &   0  & \beta
\end{pmatrix}$$
and the $\Phi$-divergence is
$$\alpha^2\Phi(1/\alpha)+b^2\Phi(1/\beta)+(1-\alpha-\beta)^2\Phi(1/(1-\alpha-\beta))+(1-\alpha^2-\beta^2-(1-\alpha-\beta)^2)\Phi(0)$$
By taking partial derivative with respect to $\alpha$ and $\beta$ we have
$$2\Phi\left(\frac{1}{x}\right)-\frac{2}{x}\Phi'\left(\frac{1}{x}\right)+\frac{1}{x^2}\Phi''\left(\frac{1}{x}\right)$$
is a polynomial with respect to $x$.

Combining these two statements we have if there are unbiased estimators for $P_{X,Y}$ and $P'_{X,Y}$, the convex function $\Phi$ is a degree one polynomial which reaches a contradiction.
\end{proof}
\section{Relation to CA mechanism and mutual information mechanism}\label{sec:comparison}
\subsection{CA mechanism}

In the following proposition we show CA mechanism is a special case of our mechanism~\ref{alg:fmechansim}.  

\begin{prop}[CA mechanism~\cite{Shnayder2016-xx}]
If we take $\Phi(a) = \frac{1}{2}|a-1|$ and restrict $|K|\le 1/2$, Then the above mechanism reduces to the Correlated Agreement mechanism.
\end{prop}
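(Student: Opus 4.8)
The plan is to exhibit, for a given prior $P_{X,Y}$, an explicit scoring function $K$ with $|K|\le 1/2$ such that Mechanism~\ref{alg:fmechansim} instantiated with $\Phi(a)=\tfrac12|a-1|$ and this $K$ produces exactly the payments of the CA mechanism, and then to observe that the ``signal structure'' the CA mechanism uses (or learns) is precisely an affine shift of a $(P,\Phi)$-ideal scoring function (respectively of the output of Algorithm~\ref{alg:generative}). First I would recall the CA mechanism: it fixes a scoring matrix $S:\X\times\Y\to\{0,1\}$ — the signal structure — with $S(x,y)=1$ exactly on the pairs deemed positively correlated, i.e. $S(x,y)=\mathbb{I}\left[P_{X,Y}(x,y)>P_X(x)P_Y(y)\right]=\mathbb{I}\left[\ratio_P(x,y)>1\right]$ in the detail-aware version, or the empirical analogue $\mathbb{I}\left[\hat P_{X,Y}(x,y)>\hat P_X(x)\hat P_Y(y)\right]$ when the structure is learned from reports; given a bonus task $b$ and two distinct penalty tasks $p,q$, it pays an agent $S(\hat x_b,\hat y_b)-S(\hat x_p,\hat y_q)$.

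Next I would carry out the short computation at the heart of the reduction. From Table~\ref{tab:fconjugate}, for $\Phi(a)=\tfrac12|a-1|$ we have $\Phi^*(b)=b$ when $|b|\le 1/2$ and $\Phi^*(b)=+\infty$ otherwise; hence under the restriction $|K|\le 1/2$ the penalty term in \eqref{eq:payment} is simply $\Phi^*(K(\hat x_p,\hat y_q))=K(\hat x_p,\hat y_q)$, so
\begin{equation*}
M_A^{\Phi,K}(\hat{\mathbf{x}},\hat{\mathbf{y}})=K(\hat x_b,\hat y_b)-K(\hat x_p,\hat y_q).
\end{equation*}
This payment is invariant under adding a constant to $K$, since the constant cancels in the difference. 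Setting $K:=S-\tfrac12$ — which takes values in $\{-\tfrac12,\tfrac12\}\subseteq[-\tfrac12,\tfrac12]$ and is therefore admissible — and substituting yields $M_A^{\Phi,K}(\hat{\mathbf{x}},\hat{\mathbf{y}})=S(\hat x_b,\hat y_b)-S(\hat x_p,\hat y_q)$, exactly the CA payment (Bob's payment being handled symmetrically). This establishes that CA is the instance $\mathcal{M}^{\Phi,\,S-1/2}$ of Mechanism~\ref{alg:fmechansim}.

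Finally I would connect the chosen $K$ to the framework: comparing $S(x,y)=\mathbb{I}[\ratio_P(x,y)>1]$ with $\partial\Phi$ in Table~\ref{tab:fconjugate} shows that $K=S-\tfrac12$ satisfies $K(x,y)\in\partial\Phi(\ratio_P(x,y))$ except possibly on the set $\{\ratio_P(x,y)=1\}$, on which every value in $[-\tfrac12,\tfrac12]$ is a valid subgradient — so the detail-aware CA mechanism is exactly $\mathcal{M}^{\Phi,K^\star}$ of Sect.~\ref{sec:known_prior}, and Theorem~\ref{thm:truth} recovers its (informed-)truthfulness. In the detail-free case the empirical signal structure $\hat S$ equals $\hat K+\tfrac12$, where $\hat K$ is precisely the output of Algorithm~\ref{alg:generative} run with $\Phi(a)=\tfrac12|a-1|$, so CA is the corresponding instance of Mechanism~\ref{alg:reduction}. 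I expect the only real care-point to be bookkeeping rather than mathematics: matching the precise variant of the CA mechanism in \citet{Shnayder2016-xx} (the $2$-versus-$3$ task detail noted in the introduction, and the sign/tie conventions when $\ratio_P=1$ or on the diagonal), none of which affects the payment identity or the $\Phi^*$-identity that drives the reduction.
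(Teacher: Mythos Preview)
Your proposal is correct and follows essentially the same route as the paper: compute $\Phi^*(b)=b$ on $[-\tfrac12,\tfrac12]$ to collapse the payment to $K(\hat x_b,\hat y_b)-K(\hat x_p,\hat y_q)$, then identify the CA signal structure $S$ with the subgradient $\partial\Phi(\ratio_P)$ up to the additive shift $\tfrac12$. Your write-up is in fact more thorough than the paper's, as you also spell out the constant-shift invariance and the detail-free correspondence with Algorithm~\ref{alg:generative}.
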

However the corresponding $\Phi$ is not strictly convex, so the mechanism is not strongly truthful in general.  \begin{proof}
If we take $\Phi(a) = \frac{1}{2}|a-1|$, $\Phi^*(b) = b$ when $|b|\le 1/2$, the payment can be simplified as
$$M_{A}(\mathbf{r}) = K\left(\hat{x}_{b}, \hat{y}_{b}\right)-\Phi^*\left(K\left(\hat{x}_{p}, \hat{y}_{q}\right)\right)= K\left(\hat{x}_{b}, \hat{y}_{b}\right)-K\left(\hat{x}_{p}, \hat{y}_{q}\right).$$
Moreover, by Table~\ref{tab:fconjugate}, Eqn.~\eqref{eq:optimal} reduces to $$\partial \Phi\left(\frac{P_{X,Y}(x,y)}{P_{X}(x)P_{Y}(y)}\right) = \begin{cases} 1/2 &\text{ if } P_{X,Y}(x,y)>P_X(x)P_Y(y);\\
-1/2 &\text{ if } P_{X,Y}(x,y)<P_X(x)P_Y(y);\\
[-1/2,1/2] &\text{ otherwise,}
\end{cases}$$
and the scoring functions are in $\partial \Phi\left(\frac{P_{X,Y}(x,y)}{P_{X}(x)P_{Y}(y)}\right)+\frac{1}{2}$.
\end{proof}
\subsection{Mutual Information Mechanism}
The framework of mutual information mechanism~\cite{kong2019information} defines the payments to Alice (and Bob) to be the $\Phi$-mutual information between Alice's and Bob's reports
$$D_\Phi(\bm{\theta}\circ P_{X,Y}\| \bm{\theta}\circ P_X P_Y)$$
where $P$ is the joint distribution of signal and $\bm{\theta}$ is the strategy profile.  

Our mechanism can be seen as a special case of the mutual information mechanism when the number of tasks goes to infinity.  Formally, the fantasy mapping $F^\Phi$ pays Alice and Bob with the $\Phi$-mutual information between Alice's and Bob's reports.  As the number of tasks goes to infinity, both Algorithm~\ref{alg:generative} or \ref{alg:empest} are $(0,0)$-accurate and by the proof of Theorem~\ref{thm:framework}, our mechanism with learning Algorithm~\ref{alg:generative} or \ref{alg:empest} pays Alice with $F^\Phi_A$ which is the $\Phi$-mutual information between Alice's and Bob's reports.

However, our mechanism has a stronger guarantee when the number of tasks is finite.  In the proof of Theorem~\ref{thm:framework}, our mechanism ensure Alice's ex-ante payment is upper bounded by the $\Phi$-mutual information between Alice's and Bob's reports \emph{uniformly under any strategy profiles}.  This property does not hold if we estimate the $\Phi$-mutual information directly without variational representation.  

For example, in \citet{kong2019information}, they use the agents' report profile to estimate the density function and estimate the $\Phi$-mutual information between their reports directly.  In contrast, although Mechanism~\ref{alg:reduction} with learning Algorithm~\ref{alg:generative} also first estimates the density function, the mechanism then computes a scoring function instead. These two methods have similar behavior under the truth-telling strategy profile.  However, given a fixed the number of tasks, there may exist a non-truthful strategy profile $\bm{\theta}$ such that we cannot estimate the density function $\bm{\theta}\circ P$ accurately.  The method in \citet{kong2019information} cannot provide a guarantee in such a situation.  On the other hand, our variational method ensures the ex-ante payments to agents under $\bm{\theta}$ are worse than the mutual information between agents' reports.  Having a uniform upper bound for a non-truthful strategy is important for real application.  We may assume our learning algorithm can estimate agents' signal distributions, which is derived from non-adversarial settings.  However, agents adopt the worst possible strategy profiles  to break our mechanism adversarially.
\end{document}